\tikzset{
	semithick,
	node distance = 2cm,
	dot/.style={circle,fill,inner sep=2pt}
}
\tikzset{
	side by side/.style 2 args={
		line width=2pt,
		#1,
		postaction={
			clip,postaction={draw,#2}
		}
	}
}
\tikzstyle{every state}=[draw = black,thick,fill = white,minimum size = 4mm]
\tikzstyle{selected edge} = [draw,line width=2pt,-,red!50]
\tikzset{
	%	vertex/.style={circle,draw,minimum size=1.5em},
	edge/.style={->,> = latex'}
}
\newcommand{\abs}[1]{{\left|#1 \right|}}
\newcommand{\ceil}[1]{{\left\lceil#1  \right\rceil}}
\renewcommand{\comment}[1]{}
\newcommand{\cA}{{\mathcal{A}}}
\newcommand{\ord}{{\textnormal{\textsf{ord}}}}
\newcommand{\cB}{{\mathcal{B}}}
\newcommand{\var}{{\textnormal{\textsf{var}}}}
\newcommand{\Adj}{{\textnormal{\textsf{Adj}}}}
\newcommand{\cI}{{\mathcal{I}}}
\newcommand{\cF}{{\mathcal{Q}}}
\newcommand{\cC}{{\mathcal{C}}}
\newcommand{\N}{{\mathbb{N}}}
\newcommand{\OPT}{\textnormal{OPT}}
\newcommand{\eps}{{\varepsilon}}
\newcommand{\SAT}{{{\textnormal{\textsf{SAT}}}}}
\newcommand{\CSP}{{{\textnormal{\textsf{CSP}}}}}
\newcommand{\E}{{\mathbb{E}}}
\newcommand{\floor}[1]{\left\lfloor #1 \right\rfloor}
\DeclareMathOperator*{\Var}{Var}
\newcommand{\MaxPar}{\mathrm{Par}}
\crefname{claim}{claim}{claims}
\crefname{cor}{corollary}{corollaries}
\newcommand{\LP}{\textsf{LP}}
\newcommand{\bounded}{\textsf{2-bounded}}
\newcommand{\unbounded}{\textsf{2-unbounded}}
\newcommand{\varpiup}{\varpi^{\textsf{up}}}
\newcommand{\varpidown}{\varpi^{\textsf{down}}}
\newif\ifapprox
\newtheorem{thm}{Theorem}[section]
\newtheorem{con}[thm]{Conjecture}
\newtheorem{obs}[thm]{Observation}
\newtheorem{fact}[thm]{Fact}
\newtheorem{cor}[thm]{Corollary}
\newtheorem{lemma}[thm]{Lemma}
\newtheorem{definition}[thm]{Definition}
\newtheorem{theorem}[thm]{Theorem}
\newtheorem{claim}[thm]{Claim}
\Crefname{obs}{Observation}{Observations}
\newtheorem{reduction}[thm]{Reduction}
\begin{document}

\def \II   {{\mathcal I}}
\newcommand{\one}{\mathbbm{1}}
	\def\claimproof{\proof}
\def\endclaimproof{\hfill$\square$\\}

\renewcommand\qedsymbol{$\blacksquare$}
%\usepackage{hyperref}

%\begin{titlepage}
\title{
	Fine Grained Lower Bounds for Multidimensional Knapsack
%Nearly Tight Lower Bounds for Multidimensional Knapsack %via 2-CSP
}

\author{Ilan Doron-Arad\thanks{\texttt{ilan.d.a.s.d@gmail.com}} \and Ariel Kulik\thanks{\texttt{kulik@cs.technion.ac.il}}
%\author[1]{Roy Schwartz\thanks{\texttt{schwartz@cs.technion.ac.il}}}
%\author[1]{Hadas Shachnai\thanks{\texttt{hadas@cs.technion.ac.il}}}
%\affil[1]{Computer Science Department, Technion, Haifa 3200003, Israel}
%\author{Anonymous Submission to FOCS}
%\and
%Ariel Kulik
%\and
\and Pasin Manurangsi\thanks{Google Research, Thailand. Email: \texttt{pasin@google.com}}}
\maketitle

\begin{abstract}
We study the $d$-{\em dimensional knapsack problem}. We are given a set of items, each with a $d$-dimensional cost vector and a profit, along with a $d$-dimensional budget vector. The goal is to select a set of items that do not exceed the budget in all dimensions and maximize the total profit. A polynomial-time approximation scheme (PTAS) with running time $n^{\Theta(d/\eps)}$ has long been known for this problem, where $\eps$ is the error parameter and $n$ is the encoding size. Despite decades of active research, the best running time of a PTAS has remained $O(n^{\lceil d/\eps \rceil - d})$. Unfortunately, existing lower bounds only cover the special case with two dimensions $d = 2$, and do not answer whether there is a $n^{o(\frac{d}{\eps})}$-time PTAS for larger values of $d$. 
The status of exact algorithms is similar: there is a simple $O\left(n \cdot W^d\right)$-time (exact) dynamic programming algorithm, where $W$ is the maximum budget, but there is no  lower bound which  explains the strong exponential dependence on $d$.

In this work, we show that the running times of the best-known PTAS and exact algorithm cannot be improved up to a polylogarithmic factor assuming Gap-ETH. Our techniques are based on a robust reduction from 2-CSP, which {\em embeds} 2-CSP constraints into a desired number of dimensions, exhibiting tight trade-off between $d$ and $\eps$ for most regimes of the parameters.  Informally, we obtain the following main results for $d$-dimensional knapsack.

\begin{itemize}
	\item No $n^{o\left(\frac{d}{\eps}\cdot \frac{1}{ (\log(d/\eps))^2}\right)}$-time $(1-\eps)$-approximation for every $\eps = O \left(\frac{1}{\log d}\right)$. 
	
	\item No $(n+W)^{o\left(\frac{d}{ \log d}\right)}$-time exact algorithm (assuming ETH).   
	
	\item No $n^{o\left(\sqrt{d}\right)}$-time $(1-\eps)$-approximation for constant $\eps$.

	\item 
	$\left((d \cdot \log W)^{O(d^2)} + n^{O(1)}\right)$-time $\Omega\left(\frac{1}{\sqrt{d}}\right)$-approximation and a matching $n^{O(1)}$-time lower~bound.
\end{itemize}

\end{abstract}

\section{Introduction}
\label{sec:introduction}

In the $0/1$-Knapsack problem, we are given a set of items each with a cost and a profit, together with a budget. The goal is to select a subset of items whose total cost does not exceed the budget while maximizing the total profit. Knapsack is one of the most well-studied problems in combinatorial optimization. Indeed, a {\em polynomial-time approximation scheme} (PTAS) for the problem~\cite{Johnson73,Sahni75} was among the first PTASes to have ever been discovered in the field, and is often the first PTAS taught in computer science courses (see e.g.~\cite[Section 3.1]{WS-book} and~\cite[Chapter 35.5]{CLRS}).
While such a PTAS has been known to exist since the 70s~\cite{Johnson73,Sahni75}, the problem remains an active topic of research to this day~\cite{IbarraK75,Lawler79,KellererP99,KellererP04,rhee2015faster,Chan18a,Jin19,DengJM23,Mao23,CLMZ23}, with the best known PTAS that gives a $(1-\eps)$-approximation in time $\tilde{O}(n + 1 / \eps^2)$~\cite{Mao23,CLMZ23}. This running time is essentially tight assuming $(\min, +)$-convolution cannot be solved in truly sub-quadratic time~\cite{CyganMWW19,KunnemannPS17}.

Numerous generalizations of Knapsack have been considered over the years. One of the most natural and well-studied versions is the so-called {\em $d$-dimensional knapsack} problem. Here each cost is a $d$-dimensional vector. The problem remains the same, except that the total cost must not exceed the budget in every coordinate. The problem is defined more formally below.

\setlength{\fboxsep}{2pt} % Adjust the distance between text and box
\noindent\fbox{%
	\parbox{0.98\linewidth}{% Adjust the width of the box
		\begin{definition} %\leavevmode\vspace{-\baselineskip} 
			\label{def:dKP} {\bf { ($d$-Dimensional Knapsack)}} \newline
			{\bf Input:} An instance $\cI = (I,p,c,B)$ consisting of:
			\begin{itemize}
				\item a set of items $I$,
				\item a profit function $p:I \rightarrow \mathbb{N}$,
				%\item a number of dimensions $d \in \mathbb{N}$
				\item a $d$-dimensional cost function $c:I \rightarrow \mathbb{N}^d$,
				\item a $d$-dimensional budget $B \in \mathbb{N}^d$.  
			\end{itemize}
			
			\noindent {\bf Solution:} 	A subset $S \subseteq I$ such that for all $j \in [d]$ it holds that $c_j(S) := \sum_{i \in S} c_j(i) \leq B_j$. %, where $c_j(S),B_j$ are the $j$-th entry of the vectors $c(S), B$, respectively.
			%The {\em profit} of a solution $S$ is~$p(S)$.
			\newline
			
			\noindent  {\bf Objective:} Find a solution of maximum profit, where the profit of $S$ is $p(S) := \sum_{i \in S} p(i)$ 
			
		\end{definition}
	}%
}

\leavevmode\vspace{1pt} 

{\em Multidimensional (vector) knapsack} is the family of $d$-dimensional knapsack problems for all $d \geq 1$. The study of $d$-dimensional knapsack dates back to the 50s~\cite{lorie1955three,markowitz1957solution,weingartner1967methods} and it has been used to model problems arising in many contexts, including resource allocation, delivery system, budgeting, investment, and voting (see the survey \cite{Freville04} for a more comprehensive review). Given its importance, there have been several algorithms and hardness results devised for the problem over the years. We will summarize the main theoretical results below. We note that many heuristics for the problems have been proposed (e.g.,~\cite{alaya2004ant,balev2008dynamic,boyer2009heuristics,ke2010ant,6297286,SabbaC14,AzadRF14,YanCLZ19,Rezoug2019}), although these do not provide theoretical guarantees and thus will be omitted from the subsequent discussions.

On the hardness front, the decision version of the Knapsack problem ($d = 1$) generalizes the Subset Sum problem, which is one of Karp's 21 NP-complete problems~\cite{Karp72}. The $d = 1$ case admits a {\em fully polynomial time approximation scheme} (FPTAS), i.e., a PTAS that runs in time $(n/\eps)^{O(1)}$. In contrast, for $d = 2$, it has also long been known that no FPTAS exists unless $\textnormal{P = NP}$~\cite{KORTE1981415,MagazineC84}. 
This has more recently been strengthened in \cite{KulikS10}, who show that, even when $d = 2$, the problem does not even admit an efficient PTAS (EPTAS), i.e., one that runs in time $f(\epsilon) \cdot n^{O(1)}$ time, assuming $\textnormal{W}[1]\neq \textnormal{FPT}$. This in turn was improved by \cite{JansenLL16} to rule out any PTAS that runs in time $n^{o(1/\eps)}$ even for $d = 2$, assuming ETH. 

%KORTE1981415,MagazineC84,KulikS10,JansenLL16

Meanwhile, on the algorithmic front, a PTAS for $d$-dimensional knapsack with running time $n^{O(d/\eps)}$ was first described by Chandra et al.~\cite{ChandraHW76}\footnote{Strictly speaking, Chandra et al.'s algorithm is for the \emph{unbounded} version where each item can be picked multiple times. However, it can be extended to the bounded case too~\cite{OM80}.}%\pasin{I actually can't access \cite{OM80}, but this is cited in \cite{Freville04}} \ariel{It seems it is only exists in the archives of Waterlo, I update the bib entry. }}
, not long after a PTAS for Knapsack was discovered in~\cite{Johnson73,Sahni75}. The $d$-dimensional knapsack has been a challenging research topic ever since: Over the past nearly 50 years, all improvements on Chandra et al.'s PTAS come just in the constant in the exponent~\cite{FRIEZE1984100,CapraraKPP00}, with the best running time known being $O(n^{\lceil d/\eps \rceil - d})$~\cite{CapraraKPP00}. Unfortunately, this barrier cannot be explained by aforementioned existing lower bounds. For example, they do not rule out $n^{O(d + 1/\eps)}$-time PTAS. This brings us to the main question of this paper:
\begin{center}
{\em Is there a \textnormal{PTAS} for $d$-dimensional knapsack with running time $n^{o(d/\eps)}$?}
\end{center}

 %\pasin{Figure out how to mention \cite{ChekuriK2004} in the setting where $d$ is part of the input.}

\subsection{Our Results}

Our main result is the resolution of the above question in the negative, up to a polylogarithmic factor in the exponent. Unlike previous lower bounds \cite{KulikS10,JansenLL16} which focused on two dimensions $d = 2$, or a non-parametrized number of dimensions \cite{ChekuriK2004}, our lower bound gives a nearly tight {\em trade-off} between $d$ and $\frac{1}{\eps}$.  
%as stated more formally below. 
In the following results, $n$ denotes the encoding size of the instance. 

%\ariel{we should clearly state that $n$ is used as the encoding size of the input (and not the number of items). }
\begin{theorem}\label{thm:main}
	Assuming \textnormal{Gap-ETH}, there exist constants $\zeta, \chi,d_0 > 0$ such that for every integer $d>d_0 $ and every $\eps \in \left(0, \frac{\chi}{\log d}\right)$, there is no $(1 - \eps)$-approximation algorithm for $d$-dimensional knapsack that runs in time $O\left(n^{\frac{d}{\eps} \cdot \frac{\zeta}{(\log(d/\eps))^2}}\right)$.
\end{theorem}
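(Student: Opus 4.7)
The plan is a gap-preserving reduction from a gap version of 2-CSP, whose $2^{\Omega(N)}$-hardness follows from Gap-ETH: for some fixed constants $q, \delta > 0$, distinguishing satisfiable 2-CSP instances on $N$ variables over alphabet $[q]$ with $O(N)$ constraints from those in which every assignment violates at least a $\delta$-fraction of constraints requires $2^{\Omega(N)}$ time. Given such a 2-CSP instance $\Psi$, I will build a $d$-dimensional knapsack instance $\cI$ in which a $(1-\eps)$-approximation separates the YES and NO cases of $\Psi$, with parameters tuned so that the claimed running-time bound translates back to the Gap-ETH lower bound.

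The reduction partitions the variables of $\Psi$ into $k = \Theta(d/(\eps \log^2(d/\eps)))$ blocks with $N/k$ variables each, chosen so that an $n^T$-time knapsack algorithm with $n = \tilde{O}(q^{N/k})$ yields a $2^{\Theta(NT/k)}$-time 2-CSP algorithm, and Gap-ETH then forces exactly $T \ge \Omega(k) = \Omega(d/(\eps \log^2(d/\eps)))$. For each block $g \in [k]$ and each partial assignment $\sigma$ on the variables of $g$, I create an item $i_{g,\sigma}$ whose profit equals the number of intra-block constraints satisfied by $\sigma$, giving $n = k \cdot q^{N/k}$ items overall.

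Of the $d$ dimensions, the first $k$ are \emph{selection} dimensions: $i_{g,\sigma}$ has unit cost in coordinate $g$ only, with budget $1$, so any feasible solution picks exactly one item per block and thereby encodes a full assignment of $\Psi$. The remaining $d - k = \Theta(d)$ dimensions pack the $\binom{k}{2}$ inter-block constraints: a balanced decomposition (random, or explicit via combinatorial designs) splits these constraints into $\Theta(d)$ color classes of size $\Theta(k^2/d)$, with one dimension per class. In the dimension for a class $C$, the item $i_{g,\sigma}$ contributes a cost that in a positional base $B = \mathrm{poly}(k)$ records the ``partner-signature'' of $\sigma$ for each constraint in $C$ incident to $g$; the budget of that dimension is set to be met with equality exactly when every pair of chosen items in $C$ is mutually consistent, with excess proportional to the number of violations.

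The main obstacle is balancing packing density against robustness of the gap: packing $\Theta(k^2/d)$ constraints into one dimension forces $\Theta(\log(d/\eps))$-bit positional digits, which absorbs one factor of $\log(d/\eps)$ when translating the constant Gap-ETH gap to an $\eps$-approximation gap; a second factor of $\log(d/\eps)$ is lost in amplifying the constant $\delta$-gap so that per-dimension violations induce a profit deficit of at least $\eps \cdot \mathrm{OPT}$, which together yield the $\log^2(d/\eps)$ denominator in $k$. Completeness is immediate: any satisfying assignment of $\Psi$ yields a max-profit feasible solution by picking, from each block, the item corresponding to the induced partial assignment. Soundness uses the selection dimensions to argue that any $(1-\eps)$-approximate feasible solution encodes a full assignment, while the constraint dimensions together with the profit structure force fewer than a $\delta$-fraction of constraints to be violated, contradicting the NO case of $\Psi$. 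Combining completeness, soundness, and the parameter choice gives the theorem.
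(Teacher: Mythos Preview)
Your outline shares the paper's skeleton: block the CSP, one item per (block, partial assignment), selection dimensions to pick at most one item per block, and positional base-$B$ encoding in the remaining dimensions for inter-block consistency. Where it breaks down is the step you skip: sparsifying the constraint graph before the knapsack encoding.

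You pack all $\binom{k}{2}$ inter-block super-constraints into $\Theta(d)$ dimensions, $F=\Theta(k^2/d)$ per dimension. In any such encoding the soundness argument works dimension by dimension: a profit deficit of $q$ bounds the number of non-tight dimensions by $q$, and each non-tight dimension corrupts the decoded assignment on the $\Theta(F)$ blocks incident to its constraints. Carrying this through forces $k = O\bigl((d/\eps)^{1/2}\bigr)$ at best (and $O\bigl((d/\eps)^{1/3}\bigr)$ under the paper's profit convention), yielding only an $n^{\Omega(\sqrt{d/\eps})}$ lower bound rather than $n^{\Omega((d/\eps)/\log^2(d/\eps))}$. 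Your attribution of the two $\log(d/\eps)$ losses to ``positional digit size'' and ``gap amplification'' is not a real accounting: digit size affects only the bit-length of the weights (a polynomial blowup, irrelevant to the exponent), and there is no amplification step in the reduction---the soundness directly converts knapsack deficit to CSP deficit via the packing factor $F$, and with the complete block graph that factor is $k^2/d$, not $k/d$.

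The paper fixes this by first invoking Marx's graph-embedding theorem (its Theorem~2.4, built on~\cite{Marx10,karthik2023conditional}): the hard R-CSP instances live on a \emph{3-regular} constraint graph with $k$ vertices, hence only $\Theta(k)$ edges. Now the dimension-embedding reduction packs $F=\Theta(k/d)$ constraints per dimension and the soundness loss is only $\Theta(k/d)$. Both $\log(d/\eps)$ factors in the final exponent are artifacts of Marx's embedding itself---one from the $k/\log k$ running-time exponent it delivers, one from its $1-\Theta(1/\log k)$ gap---not of the knapsack encoding. Without this sparsification (or an equivalent device reducing the edge count from $\Theta(k^2)$ to $\tilde O(k)$), your parameter choice $k=\Theta\bigl(d/(\eps\log^2(d/\eps))\bigr)$ is not supported by the construction.
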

For example, \Cref{thm:main} implies that there is no algorithm which for every $d$ and $\eps<0.1$ returns a $(1-\eps)$-approximate  solution for a $d$-dimensional knapsack instance in time $n^{O \left( d+ \frac{1}{\eps}\right)}$, or $n^{O \left( \frac{d^{1-\delta}}{\eps}\right)}$, or even $n^{O \left( \left(d+ \frac{1}{\eps}\right)^{2-\delta}\right)}$  for any~$\delta\in (0,1)$ (by setting $d = \frac{1}{\eps}$). Additionally, as the lower bound in \Cref{thm:main} considers specific values  of $d$ and $\eps$, the result also rules out algorithms with running time of the form $f(d,\eps) \cdot n^{\frac{d}{\eps} \cdot \frac{\chi}{(\log(d/\eps))^2}}$ for an arbitrary function $f$. (A similar statement is also true for each of the following lower bounds.)   

 %$n^{O \left( d+ \frac{1}{\eps}\right)}$. Similarly, the theorem also rules out the existence of a similar algorithm with running time $n^{O \left( \left(d+ \frac{1}{\eps}\right)^{2-\delta}\right)}$  for any~$\delta\in (0,1)$. 
% or in time $n^{O \left( \frac{d^{1-\delta}}{\eps}\right)}$ for some $\delta\in (0,1)$.

\paragraph{Exact Algorithms} It is fairly easy to attain an exact algorithm for $d$-dimensional knapsack via dynamic programming. Such an algorithm runs in time $O(n\cdot W^d)$, where $W = \max_{j\in [d]} B_j$ is the maximum budget in one of the dimensions of the instance (see, .e.g., \cite{kellerer2004multidimensional}). However, no existing lower bounds can explain the exponential increase in the running time as $d$ increases. Interestingly, our techniques can be used to obtain hard $d$-dimensional knapsack instances with small $W$ (with respect to the number of items), which we use  to lower bound the running time of exact pseudo-polynomial algorithms. 
 The following theorem indicates that the naive dynamic programming cannot be significantly improved.  
\begin{theorem}
	\label{thm:VKexact}
	Assuming \textnormal{ETH},  there exist constants  $\zeta,d_0 >0$, such that  for every integer $d>d_0$ there is  no algorithm that 
	solves $d$-dimensional knapsack exactly in   time $O \left( \big(n+W \big)^{\zeta \cdot \frac{d}{ \log(d)}} \right)$.   
\end{theorem}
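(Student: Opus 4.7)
The plan is to reduce directly from 3-SAT and invoke ETH together with the Sparsification Lemma. The key observation is that whenever both the number of items $n$ and the maximum budget $W$ are polynomial in $d$, the bound $(n+W)^{d/\log d}$ simplifies to $2^{\Theta(d)}$, which is exactly the scale of hardness that ETH provides. So a straightforward Cook-style reduction with a linear blowup from 3-SAT variables to knapsack dimensions will suffice, and no clever encoding of constraints into the budgets is required (in contrast with the approximation version in Theorem~\ref{thm:main}).

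For the reduction, given a 3-SAT formula $\Phi$ on $N$ variables $x_1, \ldots, x_N$ and $M$ clauses, I build a knapsack instance with the $2N$ items $\{T_i, F_i\}_{i \in [N]}$, each of profit $1$, and $d = N + M$ dimensions. For each $i \in [N]$, dimension $i$ has budget $1$ and assigns cost $1$ to both $T_i$ and $F_i$ (cost $0$ elsewhere), enforcing that at most one of the two is taken; combined with a profit target of $N$, this forces a complete truth assignment. For each clause $C_j$, dimension $N + j$ has budget $2$ and assigns cost $1$ to precisely the three items that would falsify a literal of $C_j$ (namely $F_i$ if $x_i$ appears positively in $C_j$, and $T_i$ otherwise). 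The budget is respected iff not all three of these items are chosen, iff $C_j$ is satisfied. Consequently, $\Phi$ is satisfiable iff the constructed knapsack instance admits a solution of profit $N$, and the produced instance has $d = N + M$, $n = O(N)$, and $W = 2$.

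Next I invoke ETH. By the Sparsification Lemma I may assume $M = O(N)$, so $d = \Theta(N)$ and $n + W = O(d)$. Suppose, toward contradiction, that there is an algorithm solving $d$-dimensional knapsack exactly in time $O\bigl((n + W)^{\zeta \cdot d / \log d}\bigr)$. Applied to the constructed instance, it solves $\Phi$ in time at most $O(d)^{\zeta d / \log d} \cdot \mathrm{poly}(N) = 2^{\zeta d + O(d/\log d)} = 2^{O(\zeta) \cdot N}$. Choosing $\zeta$ below the reciprocal of the constant ratio between $d$ and $N$ times the ETH constant, the right-hand side falls below the $2^{c_0 N}$ threshold guaranteed by ETH, a contradiction.

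To obtain the statement for every integer $d > d_0$, I use padding. For any such $d$, pick $N = \Theta(d)$ such that the reduction yields at most $d$ dimensions, then append dummy dimensions (each with budget $1$ and no item contributing any cost) until the dimension count reaches exactly $d$. This leaves feasibility, $n$, and $W$ unchanged, so the previous argument goes through. The main subtlety I anticipate is constant bookkeeping: the multiplicative overhead of the reduction, the ETH constant $c_0$, and the effect of padding must all be absorbed into a single fixed $\zeta > 0$; and one must verify that padding does not cause $n + W$ to exceed $\mathrm{poly}(d)$, which would weaken the $1/\log d$ factor in the exponent and break the correspondence with the $2^{\Theta(d)}$ hardness target.
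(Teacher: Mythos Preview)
Your reduction from 3-SAT is correct as a Karp reduction, but it does not prove the theorem as stated. The statement asserts that for \emph{each fixed} $d>d_0$ there is no algorithm for $d$-dimensional knapsack running in time $O\bigl((n+W)^{\zeta d/\log d}\bigr)$; in the paper's proof this is made explicit by assuming ``there is $d>d_0$ and an algorithm $\cA$'' and deriving a contradiction. For a fixed $d$, the constant hidden in the $O(\cdot)$ may depend on $d$. Your construction, however, ties $d$ to the 3-SAT instance via $d=N+M$, so once $d$ is fixed you can only encode the finitely many 3-SAT formulas with $N+M\le d$ (padding does not change this). Those finitely many instances are solved in $O_d(1)$ time, which is trivially within the hypothesized bound; no contradiction with ETH arises. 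Put differently, ETH is an asymptotic statement in $N$, but in your reduction $N$ cannot go to infinity while $d$ stays fixed, so the asymptotics never engage.

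The missing idea is a source of hardness in which the ``structure'' controlling $d$ is bounded while the instance \emph{size} is unbounded. This is precisely why the paper goes through R-CSP with a constraint graph of $\le k$ vertices (Theorem~\ref{thm:RCSPeth}): for fixed $k$ the alphabet $\Sigma$---and hence $|\Pi|$---can be arbitrarily large, and Marx's lower bound rules out $|\Pi|^{o(k/\log k)}$-time algorithms. The simple reduction of Lemma~\ref{thm:red-large-gap} then yields $d$-VK instances with $d=\Theta(k)$ fixed but $n$ and $W$ growing polynomially in $|\Pi|$, so the hypothesized $(n+W)^{\zeta d/\log d}$-time algorithm would give an $|\Pi|^{O(\zeta k/\log k)}$-time algorithm for R-CSP, contradicting Theorem~\ref{thm:RCSPeth}. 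Your direct 3-SAT encoding collapses the distinction between ``structure'' and ``data'' and therefore cannot reach the per-$d$ conclusion; at best it rules out a single uniform algorithm with an absolute constant in front, which is strictly weaker than what the theorem claims (and in particular does not exclude running times of the form $f(d)\cdot(n+W)^{\zeta d/\log d}$).
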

%\ariel{Maybe change the name to ``constant-factor hardness''}
\paragraph{Constant-Factor Hardness.} The hardness results above deal with the case where the (in)approximation factor is $(1-\eps)$ for some small $\eps \in (0,1)$. It remains an intriguing question as to whether better approximation algorithms exist for smaller approximation factors. Making progress towards this question, %we show two strong conditional lower bounds. First, 
we show that for any constant approximation ratio, the running time of $n^{\Omega(\sqrt{d})}$ is still required, as stated below. Note that as $\eps$ is a constant here, the upper bound on the running time remains $n^{O(d)}$ \cite{ChandraHW76,FRIEZE1984100,CapraraKPP00} and thus this is not yet tight.

\begin{theorem} \label{thm:sqrt-runningtime-lb}
Assuming \textnormal{Gap-ETH}, for any constant $\rho \in (0,1)$, there exist constants $\delta > 0$ and $d_0 \in \N$ such that the following holds: for any constant $d \geq d_0$, there is no $\rho$-approximation algorithm for $d$-dimensional knapsack that runs in time $O(n^{\delta \sqrt{d}})$.
\end{theorem}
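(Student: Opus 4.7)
The plan is to reduce from Gap-ETH-hard $2$-CSP with tunable soundness, adapting the reduction behind \Cref{thm:main} to a parameter regime that trades a constant approximation gap for significantly fewer dimensions. Standard techniques (parallel repetition, or combining Gap-ETH with the PCP theorem) yield, for any target $\rho^* \in (0,1)$, a constant alphabet $\Sigma = \Sigma(\rho^*)$ and a constant $\delta_0 = \delta_0(\rho^*) > 0$ such that no $O(2^{\delta_0 N})$-time algorithm can distinguish satisfiable $2$-CSP instances on $N$ variables over $\Sigma$ from those in which at most a $\rho^*$ fraction of the constraints are simultaneously satisfiable.

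Fix the constant $d \geq d_0$, set $B = \Theta(\sqrt{d})$, and partition the $N$ variables into $B$ blocks of size $N/B$. I would introduce one knapsack item per (block, local assignment) pair, giving $n = B \cdot \Sigma^{N/B}$ and $\log n = \Theta(N/\sqrt{d})$; the profit of each item equals the number of $2$-CSP constraints satisfied by its local assignment (with cross-block constraints counted once per endpoint). The cost vector uses $B$ ``selection'' dimensions, each enforcing at most one item per block, plus $\binom{B}{2}$ ``compatibility'' dimensions --- one per pair of blocks --- which, via the Sidon-set / injective-encoding machinery from \Cref{thm:main}, admit a pair of items from blocks $B_j$ and $B_{j'}$ iff their local assignments jointly satisfy all $2$-CSP constraints spanning $B_j \cup B_{j'}$. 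The total number of dimensions is then $\Theta(B^2) = \Theta(d)$. Completeness is direct: a satisfying $2$-CSP assignment yields a knapsack solution of full profit. For soundness, any knapsack solution of profit at least $\rho \cdot \OPT$ should decode block by block into a $2$-CSP assignment satisfying an $\Omega(\rho)$-fraction of constraints, so taking $\rho^*$ to be a small enough constant multiple of $\rho$ contradicts the $2$-CSP lower bound. For the running time, an $O(n^{\delta\sqrt{d}})$-time $\rho$-approximation would solve the $2$-CSP in time $2^{O(\delta\sqrt{d} \cdot N/\sqrt{d})} = 2^{O(\delta N)}$, contradicting the hardness above once $\delta$ is a sufficiently small constant depending on $\rho$.

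The main obstacle is sharpening the soundness analysis of the compatibility dimensions. In \Cref{thm:main} the Sidon-style packing only needs to preserve a $(1-O(1/\log d))$-fraction of constraint satisfactions --- a very lenient requirement --- whereas here it must preserve a \emph{constant} gap despite packing all $2$-CSP constraints into only $\Theta(B^2)$ dimensions. The plan is to enlarge the range of the Sidon set so that each compatibility dimension has only a small absolute-constant false-positive rate, and to show that the aggregate false-positive profit across all $\binom{B}{2}$ pairs is at most an $O(1)$-fraction of $\OPT$ --- a bound that then folds cleanly into the final constant approximation ratio $\rho$.
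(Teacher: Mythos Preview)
Your high-level plan --- reduce from a constant-gap CSP on $\Theta(\sqrt{d})$ ``super-variables'' (blocks), then encode each pairwise block-consistency constraint in $O(1)$ knapsack dimensions to get $\Theta(d)$ dimensions total --- is the right skeleton and matches the paper. But you are making the execution much harder than necessary, and the obstacle you flag at the end is self-inflicted.

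The paper's proof differs from yours in two ways. First, instead of rebuilding the block-CSP hardness from Gap-ETH via parallel repetition, it simply cites the ``inherently enumerative'' R-CSP lower bound of~\cite{chalermsook2017gap} (\Cref{thm:inherent-enumerate}): for any constants $k \geq r$, no $|\Pi|^{\Omega(r)}$-time algorithm distinguishes $\MaxPar(\Pi)=k$ from $\MaxPar(\Pi)<r$, and the constraint graph there is the complete graph on $k$ vertices. Setting $k=\lceil\sqrt{d}\rceil$ and $r=\lfloor \rho k\rfloor$ gives the constant-factor gap you want, already packaged as an R-CSP. Second, and more importantly, it uses the \emph{simple} reduction of \Cref{thm:red-large-gap}, not the dimension-embedding machinery behind \Cref{thm:main}. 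That reduction spends two dimensions per edge and one per vertex, with a \emph{perfect} correspondence: a knapsack solution of profit $q$ yields a consistent partial assignment of size exactly $q$ --- no slack, no false positives. Since the complete graph on $k$ vertices has $\binom{k}{2}$ edges, this yields $k+2\binom{k}{2}=k^2\approx d$ dimensions, already the right count without any packing.

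Consequently, your ``main obstacle'' --- bounding false-positive profit from a Sidon-style encoding under a constant gap --- never arises. The base-$\cF$ packing in \Cref{thm:redA} is needed for \Cref{thm:main} only because there one must squeeze $\Theta(k)$ constraints into $d\ll k$ dimensions; here $d\approx k^2 \geq |V(H)|+2|E(H)|$, so two dedicated dimensions per edge are affordable. (Incidentally, there is no Sidon-set construction in the paper; the dimension-embedding reduction uses positional base-$\cF$ encoding on rectangular constraints.) Your non-uniform profit function and the worry about whether block-compatibility constraints are encodable also disappear: with unit profits and the rectangular R-CSP constraints coming out of \Cref{thm:inherent-enumerate}, both completeness and soundness are one-liners.
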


%\ariel{We need to write that there is a $\sqrt{d}$-approximation in case the costs and budgets are $0/1$, and there is a $\sqrt{d}$ lower bound in \cite{ChekuriK2004}. We must clearly state the difference between our lower bound to the one in \cite{ChekuriK2004}}

\paragraph{$\Theta(1/\sqrt{d})$-Factor Hardness\ifapprox~and Approximation\fi.} 
%Next, we consider the other extreme where we want the running time to be {\em fixed parameter tractable (FPT)} with the parameter $d$. That is, the running time is bounded by $f(d) \cdot n^{C}$, where $C$ is independent of $d$ and $f$ is an arbitrary function. %(in other words, $d$ can be considered to be given as part of the input). 
Next, we consider the other extreme where we want the running time to be $O(n^C)$ where $C$ is independent of $d$ (in other words, $d$ can be considered to be given as part of the input).
In the special case where all costs and budgets are boolean (i.e. $B_j = 1$ for all $j \in [d]$)\footnote{In fact, the algorithm works for a more general case; see \Cref{thm:randomized-rounding}.}, there is a polynomial-time $\Omega\left(\frac{1}{\sqrt{d}}\right)$-approximation algorithm \cite{Srinivasan95} and a nearly-tight NP-hardness result of inapproximability factor $\Omega\left(\frac{1}{d^{1/2 - \eps}}\right)$ for any constant $\eps > 0$ \cite{ChekuriK2004}. %in polynomial time in $d$ and in $n$. The upper bound deteriorates as a function of the maximum budget.    
%In this case, 
However, this lower bound only holds when $d = n^{\Omega(1)}$ and does not exclude {\em fixed parameter tractable (FPT)} algorithms\footnote{That is, the running time is bounded by $f(d) \cdot n^{C}$, where $C$ is independent of $d$ and $f$ is an arbitrary function.} with the parameter $d$. In this work, we strengthen this result by showing that $\Omega\left(\frac{1}{\sqrt{d}}\right)$-factor hardness holds even for sufficiently large constants $d$. As mentioned after \Cref{thm:main}, such a lower bound also rules out any FPT (in $d$) algorithm with $\omega\left(\frac{1}{\sqrt{d}}\right)$-approximation ratio.

%$o\left(\frac{1}{\sqrt{d}}\right)$-approximation
%\pasin{I removed the $f(d)$ stuff below. I think it doesn't make sense because $d$ is a constant here. Also, we don't have $f(d)$ in other theorems so it'd be incredibly confusing to have it in just this one theorem.}

\begin{theorem} \label{thm:sqrt-ratio}
Assuming \textnormal{Gap-ETH}, for any constant $C \geq 1$, there exist $\rho > 0$ and $d_0 \in \N$ such that the following holds: for any $d \geq d_0$, there is no $\left(\frac{\rho}{\sqrt{d}}\right)$-approximation algorithm for $d$-dimensional knapsack that runs in $O(n^C)$ time.
\end{theorem}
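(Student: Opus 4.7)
The plan is to reduce from a Gap-ETH-hard 2-\CSP{} instance to a $d$-dimensional knapsack instance with approximation gap $\Theta(1/\sqrt{d})$, where $d$ is a constant depending on $C$ and the knapsack instance has polynomial size. The argument combines gap amplification of the source 2-\CSP{} with the paper's 2-\CSP-to-knapsack embedding (the machinery underlying \Cref{thm:main,thm:sqrt-runningtime-lb}), with parameters suited to this regime.

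First, by Gap-ETH together with standard PCP machinery (e.g., Moshkovitz--Raz), I obtain a constant-soundness 2-\CSP{} on $N$ variables over a constant-size alphabet for which distinguishing completeness $1$ from soundness $\eta < 1$ requires time $2^{\Omega(N)}$. Applying Raz's parallel-repetition theorem $k$-fold drives the soundness down to $\eta^{\Omega(k)}$. I choose $k = \Theta(\log d)$ so that the soundness becomes $\Theta(1/\sqrt{d})$; the amplified instance has $N^{\Theta(\log d)}$ super-variables over an alphabet of size $\mathrm{poly}(d)$ and $\mathrm{poly}(N)$ super-constraints (polynomial in $N$ for any constant $d$).

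Second, I invoke the paper's 2-\CSP-to-knapsack embedding on the amplified instance, targeting $d$ as a sufficiently large constant chosen as a function of $C$. This produces a $d$-dim knapsack instance of size $n = \mathrm{poly}(N)$ whose approximation gap matches the amplified \CSP{} gap of $\Theta(1/\sqrt{d})$. If a hypothetical $O(n^C)$-time $(\rho / \sqrt{d})$-approximation existed, composing it with the embedding and the amplification would distinguish the two cases of the 2-\CSP{} gap problem in time $\mathrm{poly}(N) = 2^{o(N)}$ (for $N$ growing fast enough as a function of $C$ and $d$), contradicting Gap-ETH.

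The main technical obstacle is that the embedding must preserve a $\Theta(1/\sqrt{d})$ gap while keeping $d$ a fixed constant, independent of the source \CSP{} size. The embedding cannot afford one dimension per \CSP{} constraint; it must compress the $\mathrm{poly}(N)$-many constraints into $d$ dimensions while still reflecting the soundness. The $\sqrt{d}$ factor in the approximation threshold arises naturally from this compression via concentration inequalities---a random ``bad'' assignment incurs $\Theta(\sqrt{d})$ net constraint violations in expectation on the compressed dimensions, which is precisely what Srinivasan's matching upper bound exploits from the other direction. Verifying that the paper's embedding achieves this compression in a manner that is robust to the very-small-gap regime---and does not blow up $d$ when the source soundness is tiny---is the key step that distinguishes \Cref{thm:sqrt-ratio} from the Chekuri--Khanna hardness (which required $d = n^{\Omega(1)}$).
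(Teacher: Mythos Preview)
Your proposal has a genuine gap at exactly the step you flag as the ``main technical obstacle,'' and the paper's actual proof avoids that obstacle entirely by taking a different route.

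The paper does \emph{not} compress a large CSP into $d$ dimensions. It starts from an R-CSP instance that already has only $k = \lceil\sqrt{d}\rceil$ variables (hence at most $k + 2\binom{k}{2} \approx d$ vertex- and edge-constraints) over a large alphabet. The source hardness is the ``inherently enumerative'' result of \cite{chalermsook2017gap} (\Cref{thm:inherent-enumerate}): under Gap-ETH, for any constants $k \geq r \geq r_0$, no $O(|\Pi|^{\zeta r})$-time algorithm can distinguish $\MaxPar(\Pi) = k$ from $\MaxPar(\Pi) < r$. One sets $r = \max\{r_0, 3C/\zeta\}$ and $\rho = 2r$, applies the \emph{simple} reduction of \Cref{thm:red-large-gap} (one dimension per vertex, two per edge---no compression at all), and observes that a $(\rho/\sqrt{d})$-approximation on the resulting instance would separate profit $\geq k\rho/\sqrt{d} \geq 2r$ from profit $< r$ in time $O(|\Pi|^{3C}) \leq O(|\Pi|^{\zeta r})$. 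The $\sqrt{d}$ in the ratio is simply $k$, the number of CSP variables; no parallel repetition and no concentration argument enter.

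Your plan, by contrast, parallel-repeats a constant-gap CSP on $N$ variables down to soundness $\Theta(1/\sqrt{d})$ and then seeks to embed the result into $d$ knapsack dimensions. But repetition blows up the variable and constraint count to $N^{\Theta(\log d)}$, and neither of the paper's reductions can shrink this to a constant $d$ while keeping a nontrivial gap: the simple reduction spends one dimension per constraint, and the dimension-embedding reduction (\Cref{thm:redA}) trades a factor-$F$ reduction in the number of dimensions for a factor-$\Theta(F)$ loss in the approximation gap, so the product $d \cdot (\text{gap})^{-1}$ is essentially conserved. The concentration-based compression you allude to is neither supplied by the paper nor by your sketch; without it, the argument does not close. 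The fix is to abandon the ``large CSP $+$ compression'' picture and instead use a CSP that is already tiny ($\Theta(\sqrt{d})$ variables) but has a huge alphabet, which is exactly what \Cref{thm:inherent-enumerate} provides.
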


%$(\rho \sqrt{d})$-approximation

We stress that the lower bound in \Cref{thm:sqrt-ratio} does \emph{not} hold for boolean instances. Indeed, such a case can be easily solved (exactly) in time $O(n \cdot 2^d)$. Thus, our reduction is quite different compared to that of \cite{ChekuriK2004}. In the non-boolean case, the best known upper bound, assuming $d$ is given as part of the input, is an $\Omega\left(\frac{1}{d}\right)$-approximation~\cite{Srinivasan95,CapraraKPP00}.
%
%In this regime, it is tempting to start with the following natural reduction from maximum independent set. Given a graph $G = (V,E)$ construct a $d = |E|$-dimensional knapsack instance with one item for each vertex; the cost of item $v \in V$ is $1$ in dimension $e \in E$ if and only if $e$ is adjacent to $v$, and $0$ otherwise. The budget is $1$ in all dimensions. Clearly, there is a one-to-one correspondence between independent sets and solutions to the knapsack instance. (Un)fortunately, naive enumeration in time $2^{O(d)}$ gives an exact solution to these instances. Hence, this technique cannot be used to give meaningful FPT lower bounds with the parameter $d$. %parametrized $d$ (recall that \Cref{thm:main} allows arbitrarily large coefficient $f(d,\eps)$ depending on $d$ and $\eps$). 
%
%Using different techniques, we show an inapproximability factor of $\Omega(\sqrt{d})$ that allows an arbitrary coefficient $f(d)$ in the running time. We remark that the best known upper bound for this case is an $\Omega\left(\frac{1}{d}\right)$-approximation~\cite{RaghavanT87,CapraraKPP00}\footnote{Note that Srinivasan~\cite{Srinivasan95} gave an improved approximation algorithm for certain regimes of parameters, but overall it does not improve the approximation ratio.}.
%
\ifapprox
To complement this result, we give an $\Omega\left(\frac{1}{\sqrt{d}}\right)$-approximation algorithm that runs in polynomial time as long as $W \leq \exp(O(n^{1/d^2}))$, as stated more precisely below.
\begin{theorem} \label{thm:main-apx}
For every $d \geq 1$, there is a $\left((d \cdot \log W)^{O(d^2)} + n^{O(1)}\right)$-time $\Omega\left(\frac{1}{\sqrt{d}}\right)$-approximation algorithm for $d$-dimensional knapsack.
\end{theorem}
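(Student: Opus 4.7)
The plan is to attain the $\Omega(1/\sqrt{d})$-approximation by separating items into \emph{heavy} and \emph{light} at a $1/\sqrt{d}$ threshold on per-coordinate cost, enumerating the heavy portion of a candidate near-optimal solution after reducing the instance via cost rounding, and applying LP-based randomized rounding to the light portion. In $n^{O(1)}$ time I first round each cost $c_j(i)$ down to the nearest power of $2$ (losing a factor $2$), so that items fall into at most $M := (\log W + 1)^d$ distinct cost \emph{signatures}; for each signature, I sort the items by decreasing profit and retain only the top $K = \mathrm{poly}(d, \log W)$ representatives, since any feasible solution can use at most $K$ copies of a single signature. This reduces the problem to an instance of size $(d \log W)^{O(d)}$, absorbing all $n$-dependence into preprocessing.

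Next, call an item \emph{heavy at coordinate} $j$ if $c_j(i) \geq B_j/\sqrt{d}$, and \emph{light} if it is heavy at no coordinate. Any feasible solution contains at most $\sqrt{d}$ items heavy at a given coordinate, hence at most $d^{3/2}$ heavy items in total. The second phase enumerates the multiset of heavy items from a candidate solution, together with, within each signature, the prefix of highest-profit items chosen; I aim to show that after type rounding the number of such enumerations is bounded by $(d \log W)^{O(d^2)}$. For each enumerated heavy multiset, I subtract its aggregate cost from $B$ to obtain a residual budget $B'$, discard infeasible configurations, and otherwise solve the LP relaxation restricted to light items under budget $B'$.

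For the LP solution $x^*$, I apply randomized rounding by selecting light item $i$ independently with probability $\alpha \cdot x^*_i$, where $\alpha = \Theta(1/\sqrt{d})$. Since every light item satisfies $c_j(i) \leq B_j/\sqrt{d}$, Bernstein's inequality gives a violation probability of $o(1/d)$ per coordinate, and a union bound yields constant-probability feasibility across all $d$ coordinates; this is then derandomized via the method of conditional expectations. The expected profit from the rounded light items is $\alpha \cdot \mathrm{LP}_{\mathrm{light}}^* \geq \alpha \cdot p(\mathrm{OPT} \cap \mathrm{light})$, while the enumerated heavy portion contributes exactly $p(\mathrm{OPT} \cap \mathrm{heavy})$ when the correct guess is reached. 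Thus the best solution over all enumerations has profit $\Omega(\mathrm{OPT}/\sqrt{d})$.

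The main technical obstacle is bounding the heavy-item enumeration within $(d \log W)^{O(d^2)}$. A naive enumeration of multisets of $d^{3/2}$ heavy items drawn from $(d \log W)^{\Omega(d)}$ heavy signatures gives $(d \log W)^{\Omega(d^{5/2})}$, which is too expensive. To circumvent this, I would exploit that each heavy item is characterized primarily by its heavy coordinate (one of $d$) and its size class there (one of $O(\log d)$), while the remaining coordinates admit a coarser granularity; the enumeration then decomposes along the $d$ heavy coordinates, each contributing a factor of $(d \log W)^{O(d)}$. Proving that this coarser description suffices without losing more than a constant factor in approximation quality constitutes the core technical challenge.
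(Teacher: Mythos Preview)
Your proposal contains a genuine gap that you yourself identify: the enumeration of heavy items. With the threshold at $B_j/\sqrt{d}$, a feasible solution may contain up to $d^{3/2}$ heavy items, and each heavy item still carries a full $d$-dimensional signature with $\Theta(\log W)$ levels per coordinate. Your proposed fix---coarsening the non-heavy coordinates of a heavy item---does not work as stated: to bring the per-coordinate enumeration down to $(d\log W)^{O(d)}$ you would need each of the $\sqrt{d}$ items heavy at coordinate $j$ to have only $(d\log W)^{O(\sqrt{d})}$ possible signatures, i.e.\ roughly $O(1)$ levels per remaining coordinate. That is far too coarse to preserve feasibility when these items are later combined with the light LP solution under a shared residual budget. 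A secondary issue: your claim that at most $K=\mathrm{poly}(d,\log W)$ items of a given rounded signature can appear in a feasible solution is not justified---a signature with some $c_j=1$ allows up to $W$ copies.

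The paper avoids the enumeration blow-up by two simple changes. First, it sets the heavy/light threshold at $B_j/2$ rather than $B_j/\sqrt{d}$; then any feasible solution contains at most $d$ heavy (``2-unbounded'') items, so brute force over size-$d$ subsets of the discretized heavy instance costs only $(d\log W)^{O(d^2)}$. Second---and this is the main new idea---it uses a \emph{two-sided} discretization $\digamma(x)_j=\min\{\varpi^{\mathrm{up}}(x_j),\,B_j-\varpi^{\mathrm{down}}(B_j-x_j)\}$, rounding both the cost and its complement. This is essential because with threshold $1/2$ a heavy item may have $c_j$ arbitrarily close to $B_j$, where ordinary geometric rounding-up would destroy feasibility; the complement rounding controls exactly this regime. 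A short random-subsampling argument (include each item of $S^*$ with probability $\Theta(1/\sqrt{d})$) then shows that some $\digamma$-feasible subset of $S^*$ retains $\Omega(1/\sqrt{d})$ of the profit. Finally, the paper does not interleave enumeration with LP rounding: it simply outputs the better of the heavy-only brute-force solution and the light-only LP-rounding solution, which already suffices.
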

Since we only require $W = O(\log n)$ in the above hardness (\Cref{thm:sqrt-ratio}), this algorithm matches our lower bound for a large regime of parameters. 
\fi

%\ariel{I was not able to understand the paragraph,  and I am not sure what its purpose is. It should either be clarified or removed (after discussion: we agreed to relocate the paragraph to the more technical section and add details there )} 
%Finally, we note that we can also provide an ``approximation-vs-running time tradeoff'' in-between the settings of \Cref{thm:sqrt-runningtime-lb} and \Cref{thm:sqrt-ratio}. However, we choose not to state it since the parameterization becomes cumbersome and, in our opinion, the two cases stated here are the most interesting.

\subsection{Technical Contribution}

Existing lower bounds for $d$-dimensional knapsack can be roughly partitioned into two categories. First, lower bounds for $d = 2$ dimensions  \cite{KORTE1981415,MagazineC84,KulikS10,JansenLL16}. The basic technique in this setting is to reduce Subset Sum to $2$-dimensional knapsack. In broad terms, each number in the Subset Sum instance corresponds to an item. One dimension imposes an upper bound by the target value on the original sum, while the second dimension guarantees that the original sum is higher than the target value using clever cost construction. However, these constructions are only useful for $d = 2$ using an extremely small $\eps$, and cannot be used to reveal the asymptotic hardness of $d$-dimensional knapsack for larger values of $d$. The second category of lower bounds considers the setting where $d$ is arbitrarily large \cite{ChekuriK2004,Srinivasan95}; as mentioned previously, the existing lower bounds break down if one allows coefficients that exponentially depend on $d$. Therefore, to reach our desired results, we need different techniques.

In contrast, we resort to a reduction from {\em $2$-constrained satisfaction problem} with {\em rectangular constraints (R-CSP)} (see the definition in \Cref{sec:preliminaries}).  The problem is a variant of the  {\em 2-constraint satisfaction problem (2-CSP)} in which the input is a graph $G$, alphabets $\Sigma$ and $\Upsilon$, and mappings $\pi_{e,u}, \pi_{e,v}: \Sigma\rightarrow \Upsilon$ for every $e=(u,v)\in E$. Informally, a solution is a {\em partial} assignment $\varphi$ which assigns a letter in $\Sigma$ to {\em some} of the vertices in $G$. The assignment must satisfy the property that $\pi_{e,u} (\varphi(u)) = \pi_{e,v} (\varphi(v))$ for every $e=(u,v)\in E(G)$ such that $\varphi$ assigns values for both $u$ and $v$. The goal is to find a partial assignment which assigns values to a maximum number of vertices. 

Our starting point is a reduction from R-CSP to $d$-dimensional knapsack using small (polynomial) weights. Roughly speaking, each constraint (edge) of the R-CSP instance is expressed by a pair of dimensions, obtaining an important property that every dimension can contain up to $2$ items with non-zero cost in the dimension, yielding a reduction that preserves the approximation guarantee. %enabling to provide constant approximation guarantee. 
This construction is sufficient to obtain the lower bound presented in \Cref{thm:VKexact} in conjunction with existing lower bounds for Max 2-CSP \cite{Marx10,karthik2023conditional}. 
However, this construction is not versatile enough to give a strong lower bound where both $d$ and $\eps$ play a significant role. For instance, it cannot attain a $n^{O\left(d+\frac{1}{\eps}\right)}$ running time lower bound for a PTAS.

Our main technical contribution is a robust reduction from R-CSP to $d$-dimensional knapsack in which $d$ is (essentially) given as a parameter to the reduction. In an intuitive level, this reduction can be seen as a {\em dimension embedding} of an instance with $d$ many dimensions to an instance with $\tilde{d} \ll d$ dimensions. 
Shrinking the number of dimensions increases almost at the same rate the {\em depth} of each dimension - the maximum number possible of items with non-zero cost in the dimension that can be taken to a solution.  As the depth increases, it requires having an increasingly smaller error parameter $\eps$ to preserve the approximation guarantee of the reduction.  This trade-off between $d$ and $\eps$ is the key insight which eventually gives the very fine bound of \Cref{thm:main}. 

We remark that as a component of the reduction, we write a strengthened version of the Max 2-CSP lower bound of \cite{Marx10,karthik2023conditional} yielding inapproximability up to a certain factor based on Gap-ETH. (See \Cref{thm:CSP}.)
 Although our proof is completely based on the constructions of \cite{Marx10,karthik2023conditional}, the inapproximability result itself may be of an independent interest, since the work of \cite{Marx10} has been a classic technique in proving (exact) parametrized lower bounds (e.g., \cite{marx2022optimal,jansen2013bin,curticapean2017homomorphisms,curticapean2015parameterizing,bonnet2020parameterized,bringmann2015hitting,pilipczuk2018directed,chitnis2021parameterized,lokshtanov2020parameterized,bonnet2017graph}).  
 
\ifapprox
Finally, our approximation algorithm \Cref{thm:main-apx} is based on a simple discretization of the cost together with a bruteforce algorithm. As usual, the cost is discretized geometrically based on some scaling factor. The main difference compared to previous work is that we also discretize the \emph{complement} of the costs (i.e. $B_j - c(i)_j$); our main structural lemma shows that such a discretization preserves the cost of the optimal solution up to $\Omega\left(\frac{1}{\sqrt{d}}\right)$-factor.
\fi

 %previously written only fo exact algorithms. 

\paragraph{Organization:}
We start with some preliminary definitions in \Cref{sec:preliminaries}. In \Cref{sec:simple} we present our simple reduction from R-CSP to $d$-dimensional knapsack and use it to prove \Cref{thm:VKexact}. Then, in \Cref{sec:RtoVK} we give our second and more elaborate reduction from R-CSP to VK. In \Cref{sec:mainResults} we prove our remaining main lower bounds (\Cref{thm:main}, \Cref{thm:sqrt-runningtime-lb}, and \Cref{thm:sqrt-ratio}). \ifapprox\Cref{sec:algorithm} describes the approximation algorithm presented in \Cref{thm:main-apx}. \fi
We conclude in a discussion in \Cref{sec:discussion}. We defer the proofs on 2-CSP to \Cref{sec:SATtoRCSP}.

\comment{
As a corollary from \Cref{thm:VK} we have the following.

\begin{cor}
	\label{cor:VK}
	Assuming \textnormal{Gap-ETH},  there exist a constant $a,b \in (0,1)$, such that for any constant $\delta\in (0,1)$ 
	there is no algorithm that given a \textnormal{VK} instance $\cI$ with $d$ dimensions and $\eps \leq \frac{b}{\log(d)}$ returns a $(1-\eps)$-approximate solution for $\cI$ in time $O \left(f(d,\eps) \cdot \left|\cI\right|^{\left(\frac{d}{ \eps} \right)^{1-\delta}} \right)$ %, where $d$ is the number of dimensions in the instance $\cI$, and 
	where $f$ is a computable function.   
\end{cor}

}

\section{Preliminaries}
\label{sec:preliminaries}

%\ariel{I redefined the size of a partial assignment for R-CSP as well as $\MaxPar$}
%\ariel{ it seems we want to keep the preliminaries. We should probably define rectangular CSP and not 2CSP, and also state the hardness result for rectangular CSP here (and refer the reader to the proof in the appendix)}

\paragraph{Basic definitions} For an instance $\cI$ of a maximization problem $\Pi$, let $\OPT(\cI)$ be the optimum value of $\cI$ and let $|\cI|$ denote the encoding size of $\cI$. For a set $S$, a function $f:S \rightarrow \mathbb{N}$, and a subset $X \subseteq S$ we use the abbreviation $f(X) = \sum_{x \in X} f(x)$. For some $\alpha \in (0,1)$, an $\alpha$-{\em approximate} solution for $\cI$ is a solution for $\cI$ of value at least $\alpha \cdot \OPT(\cI)$. %We give a hardness result for \dkp~parametrized by the number of the dimensions, i.e., $d$.   We formally define the problem below. 
For all $t \in \mathbb{N}$ let $[t] = \{1,2,\ldots,t\}$. 

\paragraph{$d$-dimensional knapsack}
For short, for any $d \geq 1$ we use $d$-VK to denote an instance of $d$-dimensional (vector) knapsack (see the formal definition of $d$-VK in \Cref{def:dKP}). We also use VK to denote the collection of $d$-VK instances over all $d \geq 1$. For some $d \in \N$ and a $d$-VK instance $\cI = (I,p,c,B)$, let $W(\cI) = \max_{j \in [d]} B_j$ be the maximum budget of the instance. For any $d \in \N$ and $\tilde{d} \in [d]$, we may assume w.l.o.g. that every $\tilde{d}$-dimensional knapsack instance is also a $d$-dimensional knapsack instance by adding extra $d-\tilde{d}$ dimensions with zero budgets and zero costs for all items.  

\paragraph{Graph Theory definitions} For every graph $G$ let $V(G)$ and $E(G)$ denote the set of vertices and edges of $G$, respectively. %Throughout the paper directed edges are denoted by $(u,v)$ and undirected edges by $\{u,v\}$. 
We assume that all graphs $G$ considered in this paper are simple directed graphs with no antisymmetric edges, such that %there is a total order $<$ over the vertex set and 
edges $(u,v) \in E(G)$ are always oriented according to $u \prec v$, where $\prec$ denotes that $u$ is smaller than $v$ according to the lexicographic order. This assumption is purely technical as our reduction requires a fixed order of the endpoints of the edges. For a vertex $v \in V(G)$ we use $\Adj_G(v)$ to denote the set of adjacent edges to $v$ in $G$. For some $r \in \N$, a graph $G$ is $r$-{\em regular} if the degree of each vertex is exactly $r$.

%in an undirected graph. %to order vertices within a constraints. 
%We give below the formal definitions of the problems considered in this paper. 

\paragraph{2-CSP with rectangular constraints} We prove the hardness of VK via a reduction from 2-CSP with {\em rectangular constraints (R-CSP)}  defined formally below. 

\begin{mdframed} \vspace{-0.1in}
	\begin{definition} \leavevmode\vspace{-\baselineskip}
		{\bf 2-CSP with rectangular constraints (R-CSP)} 
		
		\begin{itemize}
			\item {\bf Input:} $\Pi = \left(G,\Sigma,\Upsilon, \{\pi_{e, u}, \pi_{e, v}\}_{e = (u,v)\in E(G)}\right)$
			consisting of
			\begin{itemize}
				\item A constraint graph $G$.
				\item Alphabet sets $\Sigma, \Upsilon$, where $\Upsilon = \{1,\ldots,m\}$ for some $m \in \mathbb{N}_{>0}$.  
				\item For each edge $e = (u, v) \in E(G)$, two functions $\pi_{e, u}: \Sigma \to \Upsilon$ and $\pi_{e, v}: \Sigma \to \Upsilon$.
				%\item A perpendicular symbol $\perp$ such that $\perp \notin \Sigma$.  
			\end{itemize}
			
			\item {\bf Satisfied edges:} For $e = (u, v) \in E(G)$, we say that that $e$ is {\em satisfied} by $(\sigma_u, \sigma_v) \in \Sigma \times \Sigma$ if and only if $\pi_{e, u}(\sigma_u) = \pi_{e, v}(\sigma_v)$. 
			
			\item {\bf Partial Assignment:} A \emph{partial assignment} is a function $\varphi: V(G) \to \Sigma \cup \{\perp\}$. 
			
				\item {\bf Consistency:} 
			A partial assignment $\varphi$ is \emph{consistent} if for all $e = (u, v) \in E(G)$ such that $\varphi(u) \neq \perp$ and  $\varphi(v) \ne \perp$ it holds that $e$ is satisfied by $(\varphi(u), \varphi(v))$. 
			
				\item {\bf Size:} The \emph{size} of the partial assignment $\varphi$ is defined as $\abs{\varphi} := |\{v \in V(G) \mid \varphi(v) \ne \perp\}|$.
				
			\item {\bf Objective:} Find a consistent partial assignment of maximum size. We define  $\MaxPar(\Pi)$ to be the maximum size of a consistent partial assignment for $\Pi$. 
			%	\item  {\bf Parameter:} $k = |E(G)|$.  
		\end{itemize}
	\end{definition}
\end{mdframed}

Using techniques from \cite{Marx10,karthik2023conditional}, we can show the following hardness result for R-CSP. We note that the objective in R-CSP as defined here is different than the usual ``Max 2-CSP'' objective (i.e. maximizing the number of edges satisfied). A similar result can also be shown for this objective. We defer the full details to \Cref{sec:SATtoRCSP}.

\begin{theorem}
	\label{thm:RCSP}
	Assuming \textnormal{Gap-ETH},  there exist constants $\alpha,\beta \in (0,1)$ and $k_0 \in \N$, such that, for any constant $k \geq k_0$, there is no algorithm that given an \textnormal{R-CSP} instance $\Pi$ with a 3-regular constraint graph $H$ such that $|V(H)| \leq k$, runs in time $O\left(|\Pi|^{\alpha \cdot \frac{k}{\log(k)}}\right)$ and distinguish between:
	\begin{itemize}
		\item {\bf (Completeness)} $\MaxPar(\Pi) = |V(H)|$.
		\item {\bf (Soundness)} $\MaxPar(\Pi) < \left(1 - \frac{\beta}{\log(k)}\right) \cdot |V(H)|$.
	\end{itemize}
\end{theorem}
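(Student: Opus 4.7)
The plan is to reduce Gap-3-SAT to R-CSP in three stages, following the templates of Marx's partitioning technique and the Gap-ETH-based refinements of Karthik et al. In the first stage, given a Gap-3-SAT instance on $n$ variables with constant gap $\epsilon_0$, apply the standard partitioning: split the variables into $k$ groups of size $n/k$ and construct a Max 2-CSP whose vertices are $[k]$, whose alphabet is $\Sigma = \{0,1\}^{n/k}$, and whose edge constraints encode consistency of shared 3-SAT clauses between the two groups. This yields an instance of size $|\Pi| = 2^{O(n/k)} \cdot \mathrm{poly}(k)$ with a constant gap in the fraction of satisfied edges; the $2^{\Omega(n)}$ Gap-ETH lower bound on Gap-3-SAT becomes $|\Pi|^{\Omega(k)}$.

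In the second stage, enforce the structural requirements of R-CSP. First, rectangularize each edge: for a non-rectangular constraint $R \subseteq \Sigma^2$, introduce an auxiliary vertex whose alphabet is $R$ itself, and replace the edge by two rectangular projection edges enforcing that the auxiliary vertex's label projects consistently onto its two endpoints. Second, reduce the degree to 3 using an expander-based gadget: replace each vertex of degree $d$ by a 3-regular cloud on $\Theta(d)$ copies, with internal edges carrying equality-of-label constraints and each external edge redistributed to exactly one copy. The vertex count multiplies by $O(\log k)$, yielding $k' = \Theta(k \log k)$, while the gap in the fraction of satisfied edges is diluted by a factor of $\Theta(\log k)$, giving a gap of $\Omega(1/\log k')$.

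In the third stage, translate the ``fraction of satisfied edges'' objective into the MaxPar objective: for a 3-regular R-CSP in which any assignment leaves an $\eta$-fraction of edges unsatisfied, removing one endpoint per unsatisfied edge gives a consistent partial assignment of size at least $(1 - O(\eta)) k'$, while conversely a consistent partial assignment of size $s$ corresponds to an assignment that satisfies all edges both of whose endpoints lie in the chosen $s$ vertices. Plugging in $\eta = \Omega(1/\log k')$ from Stage 2 and combining with the $|\Pi|^{\Omega(k)} = |\Pi|^{\Omega(k'/\log k')}$ running time lower bound yields the theorem. The main obstacle is the careful parameter bookkeeping in the second stage: the two $\log k$ factors in the running time exponent and the soundness gap both arise from the expander-based sparsification, and a naive implementation either incurs a quadratic vertex blowup (losing the $k/\log k$ exponent) or destroys the constant gap too quickly. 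Balancing the cloud expansion against the projection encoding is the crux; the remaining transformations are routine gadget constructions from prior work.
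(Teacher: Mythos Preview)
Your outline has a genuine gap in Stage~2, and it is precisely the obstacle you flag at the end but do not resolve. The Stage~1 instance you describe---partition the $n$ variables into $k$ blocks and put a constraint between two blocks whenever a clause touches both---produces a constraint graph that is generically \emph{dense}: with $k \ll n$ and $m = \Theta(n)$ clauses, essentially every pair of blocks shares a clause, so the graph has $\Theta(k^2)$ edges. Standard expander-based degree reduction replaces a degree-$d$ vertex by a cloud of $\Theta(d)$ copies, so the total vertex count after sparsification is $\Theta(\sum_x \deg(x)) = \Theta(|E|) = \Theta(k^2)$, not $\Theta(k\log k)$. The running-time lower bound then degrades to $|\Pi|^{\Omega(\sqrt{k'})}$, which is far from the claimed $|\Pi|^{\Omega(k'/\log k')}$. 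Your rectangularization step makes this worse, since it already introduces one auxiliary vertex per edge. You acknowledge the quadratic-blowup danger, but ``balancing the cloud expansion against the projection encoding'' is not a fix: no amount of tuning the expander gadget turns $\Theta(k^2)$ edges into a $3$-regular graph on $\Theta(k\log k)$ vertices.

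The paper avoids this entirely by \emph{not} passing through a dense intermediate CSP. It works with the sparse clause graph $G_{\mathrm{clause}}$ (vertices are the $m$ clauses, edges join clauses sharing a variable; this has $O(m)$ edges since the formula has bounded occurrence), and applies the Marx--Karthik~et~al.\ \emph{connected embedding} (their Theorem~3.1, restated here as \Cref{lem:embedding}) to map $G_{\mathrm{clause}}$ into a prescribed $3$-regular graph $H$ on $k$ vertices with depth $\Delta(\psi) = O\big(\tfrac{m}{k}\log k\big)$. Each vertex $x\in V(H)$ is assigned the clause set $C_x = V_x(\psi)$, and the R-CSP alphabet at $x$ is the set of satisfying partial assignments for $C_x$; the rectangular projections are simply restrictions to shared variables, so no separate rectangularization gadget is needed. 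The $\log k$ loss in the soundness gap and the $\log k$ in the instance-size exponent both come from the \emph{depth} of this embedding, not from expander replacement. This embedding is the non-trivial ingredient your proposal is missing; ordinary expander sparsification is not a substitute for it.
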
  

We also use the following result, which is analogous to \Cref{thm:RCSP} but uses the weaker assumption of ETH rather than Gap-ETH. 
%thm:RCSP
The proofs is analogous to \Cref{thm:RCSP}, which follows from the work of \cite{karthik2023conditional} combined with our reduction from Gap-ETH to R-CSP (see \Cref{sec:SATtoRCSP}). 

%\ilan{Pasin: Can we really say that the following result indeed follows from \cite{karthik2023conditional} combined with our reduction? or do we need something else?}

\begin{theorem}
	\label{thm:RCSPeth}
	Assuming \textnormal{ETH},  there exist constants $\alpha \in (0,1)$ and $k_0 \in \N$, such that, for any constant $k \geq k_0$, there is no algorithm that takes in an \textnormal{R-CSP} instance $\Pi$ with a 3-regular constraint graph $H$ such that $|V(H)| \leq k$ variables, runs in time $O\left(|\Pi|^{\alpha \cdot \frac{k}{\log(k)}}\right)$ and decides if $\MaxPar(\Pi) = |V(H)|$. %distinguish between:
	%\begin{itemize}
	%\item {\bf (Completeness)} $\MaxPar(\Pi) = |V(H)|$.
	%\item {\bf (Soundness)} $\MaxPar(\Pi) < \left(1 - \frac{\beta}{\log(k)}\right) \cdot |V(H)|$.
	%\end{itemize}
\end{theorem}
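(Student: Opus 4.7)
The plan is to mimic the proof of \Cref{thm:RCSP} with the starting hardness assumption relaxed from Gap-ETH to ETH and the gap in the conclusion removed. I would start from the ETH-based $k$-Clique lower bound used by \cite{karthik2023conditional} (following Lin--Ren--Sun--Wang): assuming ETH, there is no $f(k) \cdot N^{o(k / \log k)}$-time algorithm deciding whether an $N$-vertex graph contains a $k$-clique. This is exactly the finer ``$k/\log k$'' lower bound needed so that the final bound against R-CSP takes the form $|\Pi|^{\alpha \cdot k/\log k}$ rather than a weaker $|\Pi|^{\alpha \cdot \sqrt{k}}$ or similar.

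Next, I would apply the same reduction chain used in the proof of \Cref{thm:RCSP}: first, convert the $k$-Clique instance into a 2-CSP via the classical Marx construction (one variable per clique vertex, alphabet the host graph's vertices, constraints enforcing edges); then rectangularize via the gadget of \cite{karthik2023conditional} to obtain rectangular projection constraints $\pi_{e,u},\pi_{e,v}$; and finally apply a constant-degree regularization (e.g., replacing high-degree vertices by small consistency gadgets) to make the constraint graph $H$ 3-regular while blowing up $|V(H)|$ by only a constant factor. This is precisely the pipeline already assembled for \Cref{thm:RCSP}, and I would simply reuse it verbatim.

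The key observation that removes the need for Gap-ETH is the equivalence
\[
\text{$G$ has a $k$-clique} \iff \Pi \text{ admits a satisfying total assignment} \iff \MaxPar(\Pi) = |V(H)|.
\]
The forward direction is the standard completeness of the Marx/rectangularization reductions. For the converse, a consistent partial assignment of size exactly $|V(H)|$ is by definition a total consistent assignment, which when pulled back through the rectangularization and regularization gadgets yields a $k$-clique. Hence the reduction is a (gapless) decision reduction: any algorithm that distinguishes $\MaxPar(\Pi) = |V(H)|$ from $\MaxPar(\Pi) < |V(H)|$ in time $O(|\Pi|^{\alpha \cdot k/\log k})$ decides $k$-Clique in time $f(k) \cdot N^{O(\alpha \cdot k/\log k)}$, contradicting ETH once $\alpha$ is chosen small enough.

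The only real obstacle is the bookkeeping of constants through the two gadget reductions: I must check that rectangularization and $3$-regularization together increase $|V(H)|$ by at most a constant factor (so that $k/\log k$ survives as $|V(H)|/\log |V(H)|$ up to constants that can be absorbed into $\alpha$), and that they are polynomial-time in $N$ (so $|\Pi| = N^{O(1)}$). Both properties already hold in the proof of \Cref{thm:RCSP}, which is why the statement of \Cref{thm:RCSPeth} is genuinely analogous and the new result requires no essentially new construction beyond swapping the Gap-ETH starting point for the ETH-based $k$-Clique lower bound of \cite{karthik2023conditional}.
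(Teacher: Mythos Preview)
Your approach does not match the paper's and contains a genuine gap. The paper's reduction for \Cref{thm:RCSP} (and hence \Cref{thm:RCSPeth}) does \emph{not} pass through $k$-Clique. It goes directly from 3-SAT to R-CSP via \Cref{red:3sat-2csp} instantiated with the \textnormal{\textsf{Embedding}} algorithm of~\cite{karthik2023conditional} (\Cref{lem:embedding}): that embedding already outputs a 3-regular target graph $H$ with $|V(H)| \le k$, and the rectangular structure of the constraints comes for free because each $\pi_{e,x}$ is simply restriction of a partial assignment to the shared variables. For \Cref{thm:RCSPeth} one just drops the gap: completeness gives $\SAT(\phi)=m \Rightarrow \MaxPar(\Pi)=|V(H)|$, and the soundness argument (with any $\eps>0$) gives the converse, so deciding $\MaxPar(\Pi)=|V(H)|$ decides 3-SAT. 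The instance-size bound $|\Pi| \le 2^{\mu \cdot (n/k)\log k}$ then converts an $|\Pi|^{\alpha k/\log k}$-time R-CSP algorithm into a $2^{O(\alpha\mu n)}$-time 3-SAT algorithm, contradicting ETH for small enough~$\alpha$.

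The gap in your plan is the ``constant-degree regularization'' step. Starting from $k$-Clique, your 2-CSP has constraint graph $K_k$, where every vertex has degree $k-1$. Any reduction of $K_k$ to a 3-regular graph via consistency gadgets must accommodate the $\Theta(k^2)$ original edges and therefore produces $k' = \Theta(k^2)$ vertices---this is \emph{not} a constant-factor blowup. With $k' = \Theta(k^2)$, an assumed $|\Pi|^{\alpha k'/\log k'}$-time R-CSP algorithm yields only an $N^{\Theta(k^2/\log k)}$-time $k$-Clique algorithm, which does not contradict the $N^{o(k)}$ ETH lower bound. Equivalently, your route establishes at best an $|\Pi|^{\Omega(\sqrt{k'})}$ lower bound for R-CSP on $k'$-vertex 3-regular constraint graphs, not the claimed $|\Pi|^{\Omega(k'/\log k')}$. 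The 3-regularity in the paper is obtained \emph{inside} the Marx/KMSS embedding (\Cref{lem:embedding}), not as a post-processing degree reduction, and this is precisely what makes the $k/\log k$ exponent survive.
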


\subsubsection*{3-SAT, ETH, and Gap-ETH:} 
 
Our lower bounds are conditioned on Gap-ETH and ETH. Before we state these, recall the standard 3-SAT problem. % using convenient notations for the technical sections.  %Given a set of  variables $V$, we use  $\overline{V} = \{\overline{v}~|~v \in V\}$ to denote the set of {\em negations} of $V$, considered to be disjoint from $V$ (we also use $v = \overline{\overline{v}}$ for all $v \in V$). 
%Given a set of variables $V$, a {\em clause} over $V$ is a %logical formula of the form 
%tuple $c = (v_1,v_2,v_3,t_1,t_2,t_3)$ where $v_1,v_2,v_3 \in V$ and $t_1,t_2,t_3 \in \{0,1\}$ are the target values of $v_1,v_2,v_3$, respectively. For all $i \in [3]$ we use the notation $c_i = v_i$ and $\tar(c_i) = t_i$. %x \lor y \lor z$ where there are variables $v_x,v_y,v_z \in V$ such that for all $a \in \{x,y,z\}$ it holds that $a = v_a$ or $a = \overline{v}_a$. We use the notation $c[1] = x, c[2] = y$, and $c[3] = z$, as well as $\overline{c}[1] = \overline{x}, \overline{c}[2] = \overline{y}$, and $\overline{c}[3] = \overline{z}$. 
%We also say that $v_1, v_2$, and $v_3$ {\em appear} in $c$.  

%(we use $v = \overline{\overline{v}}$ for all $v \in V$). 

%\ariel{This does not define 3-SAT: the main issue is that the term ``clause'' is undefined, as well as ``satisfy''. Also, there is no restriction for the clauses to be with $3$ variables. Resolve this in the meeting with Ilan. What happens to the previous definition we used in the proofs for the csp problems?}
\begin{definition}
	\label{def:SAT}
	{\bf {\bf 3-SAT$(D)$}} 
		\begin{itemize}
		\item {\bf Input:} $\phi = (V,C)$, where $V$ is a set of variables and $C$ is a set of disjunction clauses of three variables or negation of variables in $V$, such that each variable appears in at most $D$ clauses.
		\item {\bf Assignment:} A function $s:V \rightarrow \{0,1\}$. %where for all $v \in V$ it holds that $s(v) = 1-s \left(\overline{v}\right)$. 
		
		%\item {\bf Satisfied Clause:} A clause $c \in C$ is {\em satisfied} by an assignment $s$ if there is $i \in [3]$ satisfying $s(c_i) = \tar(c_i)$. %(i) $c[i] \in V$ and $s \left(c[i]\right) = 1$, or (ii) $c[i] \in \overline{V}$ and $s \left(c[i]\right) = 0$.%one of the following holds.
%		\begin{itemize}
%			\item There is $i \in [3]$ such that $c[i] \in V$ and $s \left(c[i]\right) = 1$.
%				\item There is $i \in [3]$ such that $c[i] \in \overline{V}$ and $s \left(c[i]\right) = 0$.
%		\end{itemize}
		%$s \left(c[1]\right)+s \left(c[2]\right)+s \left(c[3]\right) \geq 1$.
		\item {\bf Objective:} Find an assignment satisfying a maximum number of clauses. Let $\SAT(\phi)$ be the maximum number of clauses in $C$ satisfied by  a single assignment to $\phi$. 
	%	\item  {\bf Parameter:} $k = |E(H)|$.  
	\end{itemize}
	
	\comment{
	
	A \textnormal{3-SAT} formula is a pair $\phi = (V,C)$ where $V$ is a set of variables and $C$ is a set of clauses over $V$. %contains triplets from $V$ or their negation. %That is, let $\overline{V} = \{\overline{v}~|~v \in V\}$ be the set of negations of variables, disjoint from $V$ (we use $v = \overline{\overline{v}}$ for all $v \in V$). 
	%For every $c \in C$ it holds that %$c = x \lor y \lor z$ where there are variables $v_x,v_y,v_z \in V$ such that for all $a \in \{x,y,z\}$ it holds that $a = v_a$ or $a = \overline{v}_a$; we denote $c[1] = x, c[2] = y$, and $c[3] = z$, and in addition $\overline{c}[1] = \overline{x}, \overline{c}[2] = \overline{y}$, and $\overline{c}[3] = \overline{z}$. 
	An {\em assignment} for $\phi$ is a function $s:V \rightarrow \{0,1\}$. A clause $c \in C$ is {\em satisfied} by an assignment $s$ if %(i) there is $v \in V$ such that 
	for all $c \in C$ it holds that $s \left(c[1]\right)+s \left(c[2]\right)+s \left(c[3]\right) \geq 1$. %$s(v) = 1$ and $$, or (ii)   there is $v \in V$ such that $s(v) = 0$ and $\overline{v} \in c$. 
	Let $\SAT(\phi)$ be the maximum number of clauses in $C$ satisfied by  a single assignment to $\phi$. 
}
\end{definition}

We first state ETH \cite{impagliazzo2001complexity}.

\begin{con}
	\label{ETH}
	{\bf  Exponential-Time Hypothesis (ETH) } 
	There is a constant $\xi > 0$ such that there is no algorithm
	that given a \textnormal{3-SAT} formula $\phi $ on $n$ variables and $m$ clauses %, where $m \leq \gamma \cdot n$, distinguishes
	%between 
	can decide if $\SAT(\phi) = m$ %and $\SAT(\phi)< 0.9 \cdot m$ 
	in time $O \left(2^{\xi \cdot n} \right)$. 
\end{con}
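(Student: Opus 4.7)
The statement labeled \ref{ETH} is the Exponential-Time Hypothesis of Impagliazzo and Paturi, presented in the paper as a \emph{conjecture}, not as a theorem to be established. There is no known proof plan for it, and providing one would constitute a major breakthrough in complexity theory: any proof of Conjecture \ref{ETH} would immediately imply $\textnormal{P} \neq \textnormal{NP}$ (since $\textnormal{P} = \textnormal{NP}$ would yield a polynomial-time, hence $2^{o(n)}$-time, algorithm for 3-SAT), and in fact ETH is strictly stronger than $\textnormal{P} \neq \textnormal{NP}$, ruling out even mildly subexponential algorithms. Consequently the honest answer is that I do not attempt a proof; the plan is to explain why no genuine plan is available.

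To see why no realistic approach exists, the crucial technical step in any hypothetical derivation would be an unconditional $2^{\Omega(n)}$ lower bound on the deterministic time complexity of 3-SAT on $n$ variables. This is far beyond the reach of current techniques: we lack even linear-size circuit lower bounds against explicit problems in $\textnormal{NP}$, and any candidate approach must simultaneously circumvent the relativization, algebrization, and natural-proofs barriers. No present-day framework in complexity theory is known to produce such lower bounds, which is precisely why ETH is stated as a hypothesis rather than proved. The main obstacle, then, is not a specific gap in an otherwise clear argument but the complete absence of techniques capable of establishing unconditional exponential-time lower bounds against $\textnormal{NP}$-complete problems.

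Within the context of this paper, Conjecture \ref{ETH} is useful only as an assumption: the downstream results (such as \Cref{thm:VKexact} and \Cref{thm:RCSPeth}) are obtained by \emph{reducing} a hypothetical fast algorithm for $d$-dimensional knapsack or R-CSP back to a fast algorithm for 3-SAT, which would contradict ETH. The Sparsification Lemma of Impagliazzo, Paturi, and Zane is typically invoked alongside ETH to permit working with sparse formulas ($m = O(n)$) in such reductions, but this is a supporting tool for \emph{applying} ETH, not a component of a proof of it. In short, the statement is introduced as an unproven axiom of fine-grained complexity, and the scope of any ``proof proposal'' is limited to acknowledging this and pointing to how the paper uses, rather than derives, the hypothesis.
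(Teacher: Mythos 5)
You are right: the statement is a conjecture (ETH, cited from Impagliazzo--Paturi) that the paper states as an assumption and never proves, so declining to give a proof and explaining that it serves only as a hypothesis for the downstream lower bounds matches the paper exactly. No further comparison is needed.
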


%We use the following result of \cite{karthik2023conditional}. 

%\begin{theorem}
%	\label{thm:marx}
%	Assuming \textnormal{ETH}, there is a constant $\nu \in (0,1)$ such that there is no algorithm that given a \textnormal{2-CSP} instance $\Gamma = (H,\Sigma,X)$, where $H$ is $3$-regular and $k = \abs{E(H)}$, and decides if $\CSP(\Gamma) = k$ in time $O \left(f(k) \cdot \abs{\Gamma}^{ \frac{\nu \cdot k}{\log(k)}}\right)$, where $f$ is any computable function. 
%\end{theorem}

The Gap Exponential Time Hypothesis (Gap-ETH)~\cite{dinur2016mildly,manurangsi2016birthday} is stated below. Note that the bounded degree assumption is w.l.o.g. (see e.g. \cite[Footnote 5]{manurangsi2016birthday}).

\begin{con}[{\bf Gap Exponential-Time Hypothesis (Gap-ETH)}]
\label{GapETH}
There exist constants $D \in \N$, and $\delta, \eps \in (0,1)$ such that there is no algorithm that is given a \textnormal{3-SAT}$(D)$ instance $\phi $ on $n$ variables and $m$ clauses distinguishes between $\SAT(\phi) = m$ and $\SAT(\phi)< (1 - \eps) \cdot m$ in time $O \left(2^{\delta \cdot n} \right)$.  
\end{con}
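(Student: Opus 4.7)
The final statement is the Gap-ETH conjecture itself, which is a hypothesis rather than a provable theorem; establishing it unconditionally would in particular imply $\textnormal{P} \neq \textnormal{NP}$ together with strong quantitative lower bounds, which is well beyond current techniques. The only honest ``proof proposal'' I can offer is therefore a conditional reduction from a weaker assumption. My plan is to try to derive Gap-ETH from the combination of plain ETH and an efficient PCP theorem, and then to flag the precise step where such a derivation currently falls short.

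Starting from a 3-SAT formula $\phi$ on $n$ variables that is hard under ETH, I would first apply the Impagliazzo--Paturi--Zane sparsification lemma to bring the number of clauses down to $m = O(n)$, and then use a standard variable-splitting gadget to obtain a 3-SAT$(D)$ instance on $O(n)$ variables for some absolute constant $D$. This handles the bounded-degree aspect of Gap-ETH essentially for free. The heart of the argument would be a PCP theorem that transforms $\phi$ into a gap instance $\phi'$ of 3-SAT$(D')$ on $n'$ variables, such that $\SAT(\phi) = m$ implies $\SAT(\phi') = m'$ while $\SAT(\phi) < m$ implies $\SAT(\phi') < (1 - \eps') \cdot m'$. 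For a hypothetical $2^{\delta n'}$-time distinguishing algorithm to contradict ETH back on the original instance, the reduction must be \emph{linear-size}: $n' = O(n)$, with constant gap and constant alphabet.

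The main obstacle is precisely this linear-size requirement. The classical PCP theorem yields only polynomial-size instances, and the subsequent line of work on short PCPs (Ben-Sasson--Sudan, Dinur, and their successors) achieves at best $n \cdot \mathrm{polylog}(n)$ size while preserving a constant gap. Plugging such a quasilinear PCP into the above reduction gives only a weaker form of the hypothesis, ruling out $2^{n/\mathrm{polylog}(n)}$-time algorithms rather than $2^{\Omega(n)}$-time ones; this is often called ``PCP-ETH'' and is strictly weaker than Gap-ETH for fine-grained purposes. Closing this gap requires a truly linear-size PCP with constant gap, which is a long-standing open problem.

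In the absence of such a reduction, Gap-ETH is supported only by direct heuristic evidence: the apparent exponential hardness of refuting random 3-SAT formulas above the satisfiability threshold, and the failure of all known algorithmic techniques (including SDP hierarchies and LP relaxations) to beat the trivial $2^{\Theta(n)}$ barrier on constant-gap instances. This is the justification offered by~\cite{dinur2016mildly, manurangsi2016birthday} when the hypothesis was introduced, and is the closest thing to a ``proof'' currently available; the hard part, of course, is precisely the unconditional exponential lower bound that Gap-ETH asserts, which no known technique can deliver.
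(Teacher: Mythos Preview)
Your assessment is correct: the statement is a \emph{conjecture} (note the \texttt{con} environment in the paper), not a theorem, and the paper does not attempt to prove it---it is simply stated as a hypothesis on which later results are conditioned. Your additional discussion of why a derivation from ETH via PCPs currently falls short (the missing linear-size PCP) is accurate and useful context, though it goes beyond what the paper itself provides.
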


\comment{

\subsection{Graph Embedding}

Let $H = (V(H),E(H))$ be some graph and let $H_1 = (V(H_1),E(H_1)), H_2 = (V(H_2),E(H_2))$ be connected subgraphs of $H$. We say that $H_1$ and $H_2$ {\em touch} if (i) $V(H_1) \cap V(H_2) \neq \emptyset$ or if (ii) there are $v_1 \in V(H_1)$ and $v_2 \in V(H_2)$ such that $(u,v) \in E(H)$. A {\em connected
	embedding} of a graph $G$ in a graph $H$ is a function $\psi : V(G) \rightarrow 2^{V(H)}$ that maps every
$v \in V(G)$ to a nonempty connected subgraph $\psi(v)$ in $H$, such that for every edge $(u,v) \in E(G)$ the
subgraphs $\psi(u)$ and $\psi(v)$ touch. For every $x \in V(H)$ define $V_x(\psi) = \{v \in V(G)~|~x \in \psi(v)\}$ as all vertices in $V(G)$ mapped to a subgraph that contains $x$. The {\em depth} of an embedding $\psi$, denoted $\Delta(\psi)$, is the maximum number of vertices in $V(G)$ that meet in a single subgraph by $\psi$: $\Delta(\psi) = \max_{x \in V(H)} \left|V_x(\psi)\right|$. We use the following result of \cite{KMSS23}.

\begin{lemma}
	\label{lem:embedding}
	\textnormal{[Theorem 3.1 in \cite{KMSS23}]} There exists a constant $Z > 1$ and an algorithm \textnormal{\textsf{Embedding}} that takes as input a
	graph $G$ and an integer $k > 6$, and outputs a bipartite $3$-regular simple graph $H$ with no
	isolated vertices and a connected embedding $\psi : V(G) \rightarrow 2^{V(H)}$ such that the following
	holds.\footnote{Another property of the algorithm not explicitly stated in \cite{karthik2023conditional} is that $|\psi(v)| = O(\log(k))$ for all $v \in V(G)$. This can slightly lower the running time of our reduction.}
	
	\begin{itemize}
		\item {\bf Size} $|V(H)| \leq k$.
		
		\item {\bf Depth Guarantee} $\Delta(\psi) \leq Z \cdot \left(1+\frac{|V(G)|+|E(G)|}{k}\right) \cdot \log(k)$.
		
		\item {\bf Runtime} $\cA$ runs in time $\left(|V(G)|+|E(G)|\right)^{O(1)}$. 
	\end{itemize}
\end{lemma}

}

\section{A Simple Reduction from R-CSP to $d$-Dimensional Knapsack}
\label{sec:simple}

%The reduction is stated and proved below.
In this section, we give our first reduction from R-CSP to $d$-VK and prove \Cref{thm:VKexact}. Then, in the next section we give a more involved reduction between the two mentioned problems in order to prove our second main result (\Cref{thm:main}). %We give the easier reduction as part of the exposition and since it is lossless - completely preserves the approximation hardness of~R-CSP.  

Given an R-CSP instance
$\Pi = \left(G,\Sigma,\Upsilon, \{\pi_{e, u}, \pi_{e, v}\}_{e = (u,v)\in E(G)},\perp \right)$, the reduction creates a $d$-VK instance $\cI(\Pi) = (I,p,w,B)$ where $d = |V(G)| + 2|E(G)|$; namely, there is a dimension for every vertex and every endpoint of an edge. The items will be pairs $(v,\sigma) \in V(G) \times \Sigma$ of a vertex and a symbol from the alphabet. The vertex-dimensions guarantee that we can take at most one {\em copy} of each vertex - corresponding to an assignment of at most one symbol to the vertex.  The edge-dimensions guarantee consistency. For each edge $(u,v) \in E(G)$, we have two dimensions $(e,u)$ and $(e,v)$. We define the costs of items that are copies of $u,v$ accordingly so that any two items $(u,\sigma_u)$ and $(v,\sigma_v)$ satisfy together the budget constraints in both $(e,u)$ and $(e,v)$ if and only if $\pi_{e,u}(\sigma_u) = \pi_{e,v}(\sigma_v)$. The reduction is stated and proven as follows.  
%The cost of item $(u,\sigma)$ in dimension $(e,u)$

\begin{lemma} 
	\label{thm:red-large-gap}
	There is a reduction \textnormal{\textsf{simple}},  %\textnormal{\textsf{R-CSP $\rightarrow $ VK}} 
	which given an \textnormal{R-CSP} instance 
	\\$\Pi = \left(G,\Sigma,\Upsilon, \{\pi_{e, u}, \pi_{e, v}\}_{e = (u,v)\in E(G)}\right)$, returns in time $O(|\Pi|^3)$ an instance $\cI(\Pi) = (I,p,w,B)$ of $d$-\textnormal{VK} such that the following holds.
	\begin{enumerate}
		\item There is a solution for $\cI(\Pi)$ of profit $q$ iff there is a consistent partial assignment for $\Pi$ of size~$q$.  
		
		\item $d = |V(G)| + 2 \cdot |E(G)|$. % \leq |V(G)|^2$,
		
		\item $\abs{\cI(\Pi)} \leq O(|\Pi|^3)$. 
		
		\item $W(\cI(\Pi)) \leq \abs{\Pi}$. 
	\end{enumerate}
\end{lemma}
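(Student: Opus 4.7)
The plan is to construct $\cI(\Pi) = (I, p, c, B)$ with item set $I = V(G) \times \Sigma$, where item $(v, \sigma)$ represents assigning symbol $\sigma$ to vertex $v$. All items get unit profit, $p(v, \sigma) = 1$. The $d = |V(G)| + 2|E(G)|$ dimensions are indexed by $V(G)$ (one per vertex) together with the set of ordered endpoint-edge pairs $\{(e, u), (e, v) : e = (u, v) \in E(G)\}$ (two per edge). For each vertex dimension $v$, I would set $B_v = 1$, put cost $1$ on every item $(v, \sigma)$ and cost $0$ on all other items; this forces any feasible $S$ to contain at most one item per vertex and thus to correspond to a unique partial assignment $\varphi_S$ defined by $\varphi_S(v) = \sigma$ iff $(v, \sigma) \in S$ and $\varphi_S(v) = \perp$ otherwise.

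For each edge $e = (u, v)$, write $m = |\Upsilon|$ and set $B_{(e, u)} = B_{(e, v)} = m$. In dimension $(e, u)$, item $(u, \sigma)$ gets cost $\pi_{e, u}(\sigma)$, item $(v, \sigma)$ gets cost $m - \pi_{e, v}(\sigma)$, and every other item gets cost $0$. Symmetrically in dimension $(e, v)$, item $(v, \sigma)$ costs $\pi_{e, v}(\sigma)$ and item $(u, \sigma)$ costs $m - \pi_{e, u}(\sigma)$. All entries are non-negative because $\pi$-values lie in $\{1, \ldots, m\}$, and the table is writable in time polynomial in $|\Pi|$.

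Correctness reduces to the following observation. Because of the vertex dimensions, only the (at most two) items associated with endpoints of $e$ contribute to dimensions $(e, u)$ and $(e, v)$. If both endpoints are assigned, $\varphi_S(u) = \sigma_u$ and $\varphi_S(v) = \sigma_v$, the two budget inequalities become
\[ \pi_{e, u}(\sigma_u) + \bigl(m - \pi_{e, v}(\sigma_v)\bigr) \leq m \quad \text{and} \quad \bigl(m - \pi_{e, u}(\sigma_u)\bigr) + \pi_{e, v}(\sigma_v) \leq m, \]
which jointly force $\pi_{e, u}(\sigma_u) = \pi_{e, v}(\sigma_v)$, exactly the consistency condition for $e$. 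If at most one endpoint of $e$ is assigned, each sum is at most $m$ automatically. Hence feasible $S$ are in bijection with consistent partial assignments $\varphi_S$ of the same cardinality, and with unit profits this gives part (1).

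Parts (2)--(4) are then routine bookkeeping: part (2) is by construction; part (4) holds because every cost and budget is bounded by $m \leq |\Pi|$; for part (3), the instance has $|V(G)| \cdot |\Sigma|$ items across $|V(G)| + 2|E(G)|$ dimensions with $O(\log m)$ bits per entry, and each factor is bounded by $|\Pi|$, yielding total size $O(|\Pi|^3)$, which also bounds the running time of the reduction. The only genuinely creative ingredient is the complementary-cost gadget that encodes the equality $\pi_{e, u}(\sigma_u) = \pi_{e, v}(\sigma_v)$ as two inequality budgets of size $m$ while leaving partially-assigned edges unconstrained; I expect this to be the main obstacle, and everything else is straightforward verification.
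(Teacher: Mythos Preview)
Your proposal is correct and essentially identical to the paper's proof: same item set $V(G)\times\Sigma$, same unit profits, same dimension set, and the same complementary-cost gadget on the edge dimensions. The only cosmetic difference is that the paper sets the vertex-dimension budget and cost to $m$ (instead of your $1$), keeping all budgets uniformly equal to $m$; this changes nothing in the argument.
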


\begin{proof}
	%Let $\Pi = \left(G,\Sigma,\Upsilon, \{\pi_{e, u}, \pi_{e, v}\}_{e = (u,v)\in E(G)},\perp \right)$ be an R-CSP instance. 
	We define the instance $\cI(\Pi) = (I,p,w,B)$ as follows:
	\begin{itemize}
		\item Let $I = V(G) \times \Sigma$.
		\item  Define $p(i) = 1$ for all $i \in I$. %and let the profit be 1 for every item.
		\item Let $d = |V(G)| + 2 \cdot |E(G)|$. For notational convenience, we assume that each dimension is associated with an element in $V(G) \cup \{(e, v) \mid v \in V(G), e \in \Adj_G(v)\}$.
		\item Let $m = \abs{\Upsilon}$ and let the budget $B_j$ be $m$ for all dimensions $j$. 
		\item Finally, let the cost be as follows:
		\begin{itemize}
			\item For all $v \in V(G)$ and $\sigma \in \Sigma$, let $w_v(v, \sigma) = m$
			\item For all $e = (u, v) \in E(G )$ and $\sigma \in \Sigma$, let
			\begin{align*}
				w_{(e, u)}(v, \sigma) &= m-\pi_{(e, v)}(\sigma), \\
				w_{(e, v)}(v, \sigma) &=  \pi_{(e, v)}(\sigma), \\
				w_{(e, u)}(u, \sigma) &=  \pi_{(e, u)}(\sigma), \\
				w_{(e, v)}(u, \sigma) &= m-\pi_{(e, u)}(\sigma).
			\end{align*}
			\item Cost of all items in all other coordinates are set to zero.
		\end{itemize}
	\end{itemize}
	
	It is clear that the reduction runs in $O(|\Pi|^3)$ time and the encoding size is $\abs{\cI(\Pi)} \leq O(|\Pi|^3)$.  Also, by definition it holds that $W(\cI(\Pi)) = m \leq \abs{\Pi}$. %Consider the following example. %of the construction.
	We next prove the completeness and soundness. %An example of the construction is given at the end of this section. 
	
	\paragraph{(Completeness)} Let $\varphi: V(G) \to \Sigma \cup \{\perp\}$ be a consistent partial assignment for $\Pi$. Define a solution $S = \{(v, \varphi(v)) \mid \varphi(v) \ne \perp\}$. Clearly, $p(S) = |S| = |\varphi|$. To prove the feasibility of $S$, note that for all $v \in V(G)$ it holds that $w_v(S) \leq m = B_v$. For every $e = (u,v) \in E(G)$ one of the following cases holds.
	
	\begin{itemize}
		\item  $\varphi(u) = \perp$ or $\varphi(v) = \perp$. Then, it can be easily verified that $w_{(e,u)}(S) \leq m = B_{(e,u)}$ and $w_{(e,v)}(S) \leq m = B_{(e,v)}$. 
		
		\item 	$\varphi(u) \neq \perp$ and $\varphi(v) \neq \perp$. then
		$$w_{(e,u)}(S) = \pi_{(e,u)}(\varphi(u))+m-\pi_{(e,v)}(\varphi(v)) = m = B_{(e,u)},$$ where the second equality follows from the consistency of $\varphi$. Similarly, it holds that $$w_{(e,v)}(S) = \pi_{(e,v)}(\varphi(v))+m-\pi_{(e,u)}(\varphi(u)) = m = B_{(e,v)}.$$
	\end{itemize}
	%Otherwise, if 
 Thus, $S$ is a feasible solution for $\cI(\Pi)$.  %of profit $\abs{\varphi}$. 

	%\ariel{The completeness proof is technically incorrect: in the above $\varphi(v)$ may be $\perp$ and the expression becomes undefined. I am not sure whether we want to resolve this issue. }	
	
	%It is also simple to verify that $S$ is a valid solution of $\cI(\Pi)$.
	
	\paragraph{(Soundness)} Let $S \subseteq I$ be a solution of $\cI(\Pi)$. For every $v \in V(G)$ let $\Sigma_{v, S} := \{\sigma \in \Sigma \mid (v, \sigma) \in S\}$. Since $S$ is a feasible solution, for every $v \in V(G)$ it holds that $w_v(S) = \abs{\Sigma_{v, S}} \cdot m \leq m$. Thus, $\abs{\Sigma_{v, S}} \leq 1$. 
	Construct an assignment $\varphi: V(G) \to \Sigma \cup \{\perp\}$ such that for all $v \in V(G)$ set $\varphi(v) = \perp$ if  $\Sigma_{v, S} = \emptyset$, and otherwise set $\varphi(v)$ as the unique element in $\Sigma_{v, S}$. %be the only element in $\Sigma_{v, S}$
%	\begin{itemize}
%		\item For each $v \in V(G)$:
%		\begin{itemize}
%			\item Consider the set $\Sigma_{v, S} := \{\sigma \in \Sigma \mid (v, \sigma) \in S\}$. 
%			\item Observe that the coordinate $v$ implies that $|\Sigma_{v, S}| \leq 1$.
%			\item If $\Sigma_{v, S} = \emptyset$, let $\varphi(v) = \perp$. Otherwise, let $\varphi(v)$ be the only element in $\Sigma_{v, S}$.
%		\end{itemize}
%	\end{itemize}
	Since $\abs{\Sigma_{v,S}} \leq 1$ for all $v \in V(G)$, it follows that $|\varphi| = |S| = p(S)$. To see that $\varphi$ is a consistent partial assignment, suppose for the sake of contradiction that %this is not the case, i.e. 
	there exists $e = (u, v) \in E(G)$ such that $\varphi(u), \varphi(v) \ne \perp$ and $\pi_{(e, u)}(\varphi(u)) \ne \pi_{(e, v)}(\varphi(v))$. Suppose w.l.o.g. that $\pi_{(e, u)}(\varphi(u)) > \pi_{(e, v)}(\varphi(v))$. Since $S$ contains both $(v, \varphi(v))$ and $(u, \varphi(u))$ we have
	\begin{align*}
		w_{(e,u)}(S) \geq w_{(e, u)}(u, \varphi(u)) + w_{(e, u)}(v, \varphi(v)) = \pi_{(e, u)}(\varphi(u)) + m-\pi_{(e, v)}(\varphi(v)) > m = B_{(e,u)}. 
	\end{align*}
	 It follows that $S$ is not a valid solution to $\cI(\Pi)$, a contradiction.
\end{proof}

%We conclude with a simple example of the construction.
%\paragraph{Example} Assume that $V(G) = \{u,v\}$ and $E(G) = \{(u,v)\}$. Moreover, let $\Sigma = \{a,b\}$ and $\Upsilon = \{1,2\}$, $\pi_{(e, v)}(a) = \pi_{(e, u)}(a)= 1$, and $\pi_{(e, v)}(b) = \pi_{(e, u)}(b)= 2$. Thus, in words, an assignment $\varphi$ to the instance $\Pi$ can be consistent only by setting $\varphi(u) = \varphi(v)$. We show how the reduced VK instance $\cI(\Pi)$ adheres to this property. the instance $\cI(\Pi)$ has four dimensions $u,v,(e,u)$, $(e,v)$, and four items $(u,a), (u,b), (v,a), (v,b)$. The dimension $u$ (resp. $v$) restricts a solution to take at most one item from $(u,a), (u,b)$ (resp. $(v,a), (v,b)$). Additionally, if a solution takes two items $(u,x), (v,y)$, the dimension $(e,u)$ guarantees that $w_{(e,u)}((u,x))+w_{(e,u)}((v,y)) = \pi_{(e, u)}((u,x))+m-\pi_{(e, u)}((v,y)) = x+m-y \leq m$. Conversely, using the dimension $(e,u)$ we have $m-\pi_{(e, v)}((u,x))+\pi_{(e, u)}((v,y)) = m-x+y \leq m$. Overall, it implies that $x = y$.  

The above reduction, together with \Cref{thm:RCSPeth}, suffices to prove \Cref{thm:VKexact}. 

%We now prove  \Cref{thm:VKexact} using the above reduction. %We remark that it suffices to use the simpler reduction from \Cref{thm:red-large-gap}, however, we use the more general reduction to demonstrate its versatility. 

\comment{
	\begin{theorem}
		\label{thm:VKexact}
		Assuming \textnormal{ETH},  there exist a constant  $q \in (0,1)$, such that 
		there is no algorithm that 
		optimally solves \textnormal{VK} in time $O \left(f(d) \cdot \big(\left|\cI\right|+W(\cI) \big)^{q \cdot \frac{d}{ \log(d)}} \right)$%, 
		where $d$ is the number of dimensions in the instance $\cI$, and 
		where $f$ is any computable function.   
	\end{theorem} 
	Assuming \textnormal{ETH}, there is a constant $\nu \in (0,1)$ such that there is no algorithm that given a \textnormal{2-CSP} instance $\Gamma = (H,\Sigma,X)$, where $H$ is $3$-regular and $k = \abs{E(H)}$, and decides if $\CSP(\Gamma) = k$ in time $O \left(f(k) \cdot \abs{\Gamma}^{ \frac{\nu \cdot k}{\log(k)}}\right)$, where $f$ is any computable function. 
}

\subsubsection*{Proof of \Cref{thm:VKexact}}
Assume that \textnormal{ETH} holds. 
Let $\alpha \in (0,1)$ and $k_0 \in \N$ be the promised constants by \Cref{thm:RCSPeth}. 
%Define a constant $b = \frac{\chi}{6}$. 
%Let $C\geq1$ be a constant such that for any R-CSP instance $\Pi$ it holds that the reduction \textnormal{\textsf{simple}} (described in \Cref{thm:red-large-gap}) runs in time $O \left(\abs{\Pi}^C\right)$. 
%Let $\nu \in (0,1)$ be the promised constant by \Cref{thm:marx}. Let $C\geq1$ be a constant such that for any 2-CSP instance $\Gamma'$ it holds that the reduction \textnormal{2-CSP $\rightarrow$ VK} (described in \Cref{thm:redMain}) runs in time $O \left(\abs{\Gamma'}^C\right)$; there is such a constant by \Cref{thm:redMain}. 
Define a parameter $\zeta =   \frac{\alpha}{1000}$. 
%Let $d_0 = $. 
Let $d_0 \in \N$ such that the following holds.
\begin{enumerate}
	\item $\alpha \cdot \frac{\floor{\frac{d_0}{10}}}{\log \left(\floor{\frac{d_0}{10}}\right)} \geq 3$.
	
	\item $d_0 \geq \max \{10 \cdot k_0,40\}$.

	%	\item $\frac{d_0}{15} \in \N$. 
	%	\item $\floor{\frac{d_0 \cdot \beta \cdot \zeta}{10 \cdot \log(d_0)}} \geq \max \{k_0,C,6\}$.
	
	%\item $d_0 \geq \max \{k_0,6\}$.
	
	%\item $ \frac{\log \left(\log(d_0)\right)}{\log(d_0)} \leq \chi$.
	%\frac{\beta}{40}$.  
\end{enumerate}
%Clearly, there is such $d_0$ since $\zeta,\chi,\alpha,\beta,C$, and $k_0$ are constants, $\lim_{x \rightarrow \infty} \frac{x}{\log x} = \infty$, and $\lim_{x \rightarrow \infty} \frac{\log \log x}{\log x} = 0$. 

\comment{

	Assuming \textnormal{ETH},  there exist constants  $\chi,d_0 >0$, such that  for every integer $d>d_0$ there is  no algorithm that 
	solves $d$-dimensional knapsack exactly in   time $O \left( \big(n+W \big)^{\zeta \cdot \frac{d}{ \log(d)}} \right)$.   
	
}

%In addition, let $\eps = \frac{\chi}{}$. 
Assume towards a contradiction that  there is $d > d_0$ and an algorithm $\cA$ that given a $d$-dimensional knapsack instance returns an optimal solution in time $O \left( \big(n+W \big)^{\zeta \cdot \frac{d}{ \log(d)}} \right)$, where $n$ and $W$ are the encoding size and maximum number in the instance, respectively. %, where $d$ is the number of dimensions in the instance $\cI$, and 
Let $k = \floor{\frac{d}{10}}$. 
We define the following algorithm $\cB$ that decides if an R-CSP instance $\Pi$ on $3$-regular constraint graph $H$ where $\abs{V(H)} \leq k$ satisfies $\MaxPar(\Pi) = \abs{V(H)}$ or $\MaxPar(\Pi)<\abs{V(H)}$. %Let $\Gamma = (H,\Sigma,X)$ be a 2-CSP instance with $k =  \abs{E(H)}$ constraints such that $H$ is $3$-regular. Define $\cB$ on input $\Gamma$ as follows. 
Let %$\Gamma = (H,\Sigma,X)$ 
$$\Pi = \left(H,\Sigma,\Upsilon, \{\pi_{e, u}, \pi_{e, v}\}_{e = (u,v)\in E(G)} \right)$$
be an R-CSP instance with $\abs{V(H)} \leq k$ vertices such that $H$ is $3$-regular. Define $\cB$ on input $\Pi$ by:  

\begin{enumerate}
	%\item Compute using $\textnormal{2-CSP}  \rightarrow  \textnormal{R-CSP}$ the R-CSP instance $$\Pi(\Gamma) = \left(G,\Sigma_{\Pi},\mathcal{X}, \{\pi_{e, u}, \pi_{e, v}\}_{e = (u,v)\in E(G)}, \right).$$
	%\item Compute the VK instance $\cI(\Pi \left(\Gamma\right)) = (I,p,c,B)$ using $\textnormal{R-CSP}  \rightarrow  \textnormal{VK}$.  
	
	\item Compute the VK instance $\cI(\Pi) = (I,p,w,B)$ by the reduction \textnormal{\textsf{simple}} described in \Cref{thm:red-large-gap}. %with parameter $F = 1$.  
	
	\item Execute $\cA$ on $\cI \left(\Pi\right)$. Let $S$ be the returned solution. 
	
	\item If $p(S) = \abs{V(H)}$: return that  $\MaxPar(\Pi) = \abs{V(H)}$.
	
	\item If $p(S) <  \abs{V(H)}$: return that $\MaxPar(\Pi) < \abs{V(H)}$. 
	%	\item Compute $\rightarrow $ R-CSP}} that, given a \textnormal{2-CSP} $$\Pi(\Gamma) = \left(G,\Sigma_{\Pi},\mathcal{X}, \{\pi_{e, u}, \pi_{e, v}\}_{e = (u,v)\in E(G)},\perp \right)$$
\end{enumerate}   

First, by  \Cref{thm:red-large-gap} observe that the number of dimensions of  $\cI \left(\Pi\right)$ is 
\begin{equation}
	\label{eq:XfeasR}
 |V(H)| + 2 \cdot |E(H)| \leq k+ 3 \cdot 2 \cdot k  \leq 7 \cdot k \leq d. 
\end{equation}
The first inequality holds since $\abs{V(H)} \leq k$ and since $H$ is $3$-regular.
\comment{
\begin{equation}
\label{eq:XfeasR}
3 \cdot \ceil{ \frac{\abs{V(H)}+\abs{E(H)}}{F}} = 3 \cdot \abs{V(H)}+3 \cdot \abs{E(H)} \leq 3 \cdot k+ 3 \cdot 2 \cdot k  = 9 \cdot k \leq d. 
\end{equation}
The inequality holds since $\abs{V(H)} \leq k$ and since $H$ is $3$-regular. The last inequality follows from the selection of $k$. By \eqref{eq:XfeasR} it holds that the number of dimensions of $\cI \left(\Pi\right)$ is at most $d$. 
}
%For simplicity, let $n = \abs{\cI(\Pi)}$ be the encoding size and 
Let $W = W \left(\cI(\Pi)\right)$ be the maximum weight of the instance. By  \Cref{thm:red-large-gap}  it holds that $W \leq \abs{\Pi}$. 
\comment{
 By \Cref{thm:red-large-gap} it holds that 
\begin{equation}
\label{eq:Wval}
W \leq \left(3 \cdot F \cdot \abs{\Pi}\right)^{6 \cdot F} = \left(3 \cdot \abs{\Pi}\right)^{6} 
%O \left(\left(3 \cdot F \cdot \abs{\Pi}\right)^{6 \cdot F}\right).
%2 \cdot F \left(3 \cdot F^2 \cdot \abs{\Upsilon}\right)^{2 \cdot F} = 2 \cdot \left(3 \cdot \abs{\Upsilon}\right)^{2} \leq 18 \cdot \abs{\Pi}^2.  
\end{equation}
}

By \Cref{thm:red-large-gap}, there is a constant $C \geq 3$ such that 
$\abs{\cI(\Pi)} \leq C \cdot \abs{\Pi}^3$ if $\abs{\Pi} > C$. If $\abs{\Pi} \leq C$, then the running time of $\cB$ on the input $\Pi$ is bounded by a constant. Otherwise, assume for the following that $\abs{\Pi} > C$ thus $\abs{\cI(\Pi)} \leq \abs{\Pi}^4$. Therefore,
\begin{equation}
\label{eq:NW}
\abs{\cI(\Pi)}+W \leq \abs{\Pi}^4+\abs{\Pi} \leq \abs{\Pi}^5.%\left(3 \cdot \abs{\Pi}\right)^{6}  \leq 3^7 \cdot\abs{\Pi}^6 %\abs{\Pi}^5 \cdot \abs{\Pi}^5 
%\leq \abs{\Pi}^{13}. 
\end{equation} The first inequality holds since $\abs{\cI(\Pi)} \leq \abs{\Pi}^4$. %The second inequality holds since we assume $\abs{\Pi} \geq E \geq 3$. 
In addition, we have
\begin{equation}
\label{eq:zetaBoundk}
5 \cdot \zeta \cdot \frac{d}{\log (d)} = \frac{     5 \cdot \zeta \cdot   \left( 10 \cdot \left(\frac{d}{10}-1\right)+10\right)             }{\log (d)} \leq 	5 \cdot \zeta \cdot \frac{ \left(10 \cdot k+10\right)}{\log (k)} \leq  5 \cdot \zeta \cdot \frac{ 20 \cdot k}{\log (k)} \leq \frac{\alpha \cdot k}{\log (k)}. 
\end{equation} The first inequality holds since $k = \floor{\frac{d}{10}}$; thus, $k \leq d$ and $k \geq \frac{d}{10}-1$. The second inequality holds since $k \geq \frac{d}{10}-1 \geq \frac{d_0}{10}- \geq 1$. The last inequality follows from the selection of $\zeta$. Then, by the running time guarantee of $\cA$ and since the running time of computing $\cI(\Pi)$ can be bounded by $O \left(\abs{\Pi}^3\right)$, executing $\cB$ on $\Pi$ can be done in time 

\begin{equation*}
\label{eq:TIMEF}
\begin{aligned}
O \left(\big(\abs{\cI(\Pi)}+W \big)^{\zeta \cdot \frac{d}{ \log(d)}} +\abs{\Pi}^{3}\right) 
\leq
O \left( \left|\Pi\right|^{5 \cdot \zeta \cdot \frac{d}{ \log \left(d\right)}} +\abs{\Pi}^{3}\right) 
\leq
O \left( \abs{\Pi}^{ \frac{\alpha \cdot k}{	 \log \left(k \right)}
} +\abs{\Pi}^{3}\right)  \leq 
O \left( \abs{\Pi}^{ \frac{\alpha \cdot k}{	 \log \left(k \right)}
}
\right) 
\end{aligned}
\end{equation*} The first inequality follows from \eqref{eq:NW}. The second inequality uses \eqref{eq:zetaBoundk}. The last inequality holds since $\frac{\alpha \cdot k}{	 \log \left(k \right)} \geq 3$ by the assumption on $d_0$ using that $d \geq d_0$.

It remains to prove the two directions of the reduction.
First, assume that $\MaxPar(\Pi) = \abs{V(H)}$. Thus, there is a consistent partial assignment for $\Pi$ of size $|V(H)|$. Then, by  %\Cref{thm:MainCSPtoG-CSP} there is a consistent partial assignment to $\Pi(\Gamma)$ of size $k = |E(H)|$. Thus, by 
\Cref{thm:red-large-gap} there is a solution for $\cI  \left(\Pi\right)$ of profit $\abs{V(H)}$. Therefore, since $\cA$ returns an optimal solution for $\cI   \left(\Pi\right)$, the returned solution $S$ by $\cA$ has profit $\abs{V(H)}$. 
Thus, $\cB$ correctly decides that $\MaxPar(\Pi) = \abs{V(H)}$. Conversely, assume that $\MaxPar(\Pi) < \abs{V(H)}$. Thus, every consistent partial assignment for $\Pi$ has size strictly less than  $\abs{V(H)}$. Therefore, by %\Cref{thm:MainCSPtoG-CSP} there is no consistent partial assignment for $\Pi(\Gamma)$ of size at least $k-\frac{\chi}{6 \cdot \log(k)} \cdot k$. Thus, by 
\Cref{thm:red-large-gap} there is no solution for $\cI  \left(\Pi\right)$ of profit $\abs{V(H)}$. Hence, the returned solution $S$ by $\cA$ has profit strictly less than $\abs{V(H)}$. It follows that $\cB$ returns that $\MaxPar(\Pi) < \abs{V(H)}$ as required. By the above, $\cB$ is decides correctly if $\MaxPar(\Pi) = \abs{V(H)}$ in time $O \left(\abs{\Pi}^{\alpha \cdot \frac{k}{\log (k)}}\right)$. This is a contradiction to \Cref{thm:RCSPeth} and the proof follows.  \qed

\section{A reduction from R-CSP to VK with Varying Dimensions} 
\label{sec:RtoVK}

%\pasin{We need to figure out whether we want to us VK or $d$-VK; currently, the $d$ is not clear at all from the definition.}

%"In this section, we outline our key reduction from \textnormal{R-CSP} to \textnormal{VK} (\Cref{thm:redA}). This reduction is pivotal for our main theorems (\Cref{thm:main} and \Cref{thm:VKexact}), detailed later. The necessity for this second reduction arises because the previous approach (\Cref{sec:simple}) limits solutions to at most $|V(G)|$ items, where $G$ denotes the constraint graph of the R-CSP instance. 
%Given our lower bound applies to $3$-regular graphs, this implies $d = O(|V(G)|)$. 

In this section, we give our main reduction from \textnormal{R-CSP} to \textnormal{VK}. As we will see in the next section, this implies our main result (\Cref{thm:main}).  
%One may justifiably wonder why this second reduction is needed, namely, why the reduction described in \Cref{sec:simple} cannot be used to prove \Cref{thm:main}. The reason is that a solution to the reduction from the previous section may contain at most $|V(G)|$ items, where $G$ is the constraint graph of the given R-CSP instance. 
%The necessity for this second reduction arises because the previous reduction (\Cref{sec:simple}) limits solutions to at most $|V(G)|$ items, where $G$ denotes the constraint graph of the R-CSP instance.
%As our lower bound for R-CSP applies to $3$-regular graphs, the number of dimensions is $d = O(\abs{V(G)})$. %Thus, in time $\abs{\Pi}^{O(d)}$ we can find using exhaustive search the optimal solution for $\cI(\Pi)$, which is significantly weaker then the running time lower bound of \Cref{thm:main} for small error parameter $\eps$.
%Consequently, exhaustive search within time $\abs{\Pi}^{O(d)}$ yields an optimal solution for $\cI(\Pi)$, with significantly faster running time than the lower bound of \Cref{thm:main} for small error parameter $\eps \ll 1$. For example, ruling out  an $n^{O(d+\frac{1}{\eps})}$-time PTAS using the previous reduction does not seem possible.  
We observe the result cannot be attain from \Cref{thm:red-large-gap} since the VK instances generates by \Cref{thm:red-large-gap} {\em can} be solved  exactly in time $\abs{\cI(\Pi)}^{d}$  using an exhaustive enumeration in time $\abs{\Pi}^{\abs{E(G)}}$ on the original R-CSP instance. For example, this implies that the reduction cannot be used to rule out approximation schemes for VK which run in  time $n^{d+\frac{1}{\eps}}$.

The reduction considers an additional integer parameter $F$ besides the R-CSP instance $\Pi$. This parameter is used to control the three key aspects of the reduced $d$-VK instance $\cI(\Pi,F)$. First, the number of dimensions is roughly $d \approx \frac{\abs{V(G)}}{F}$, where $G$ is the constraint graph of $\Pi$; we therefore coin this reduction as the {\em dimension embedding reduction}. Second, the approximation guarantee loses a factor of $2 \cdot F$; thus, there is an almost tight trade-off between the number of dimensions and the approximation guarantee.
The main properties of the reduction are given in \Cref{thm:redA}.   

\begin{lemma} 
	\label{thm:redA}
	\textnormal{\textsf{Dimension Embedding Reduction:}}
	There is a reduction \textnormal{\textsf{R-CSP $\rightarrow $ VK}} that, given an \textnormal{R-CSP} instance %with rectangular constraint instance 
	%$\Pi = (V, E, \Sigma, \{\pi_{e, u}, \pi_{e, v}\}_{e = (u,v)\in E})$ 
	%$\Pi = \left(G,\Sigma,\Upsilon, \{\pi_{e, u}, \pi_{e, v}\}_{e = (u,v)\in E(G)}\right),$
	$\Pi$ whose constraint graph $G$ is $3$-regular and $F \in \left[\abs{V(G)}\right]$
	returns in time $|\Pi|^{O(1)}$ an instance $\cI(\Pi,F) = (I,p,c,B)$ of $d$-\textnormal{VK} such that the following holds. %the following holds.
	\begin{enumerate}
		\item The number of dimensions is $d = 2 \cdot \ceil{\frac{|V(G)|+|E(G)|}{F}}$. %$3 \cdot \frac{|V(G)|+|E(G)|}{F} \leq d \leq 3 \cdot \frac{|V(G)|+|E(G)|}{F}+3$. %, where $G$ is the constraint graph of $\Pi$. 
		\item 
		$\abs{\cI(\Pi)} = O \left(\abs{\Pi}^4  \right)$. 
		%$\abs{\cI(\Pi)} = O \left(\log (|\Upsilon| \cdot F) \cdot  \abs{V(G)} \cdot \abs{ \Sigma} \cdot \left(|V(G)|+|E(G)|\right)  \right)$. %\abs{V(G)} \cdot \abs{\Sigma} \cdot \left(2 \cdot |E(G)|+|V(G)|\right) \cdot |\Upsilon|$.
	%	\item $W\left(\cI(\Pi,F)\right) \leq \left(3 \cdot F \cdot \abs{\Pi}\right)^{6 \cdot F}$.%2 \cdot F \cdot \left(3 \cdot F^2 \cdot |\Upsilon|\right)^{2 \cdot F}$. %\pasin{I don't think we've defined the notation $W(\cI)$ before?}

		\item 	{\bf (Completeness)} If $\MaxPar(\Pi) = |V(G)|$, then there is a solution for $\cI(\Pi, F)$ with profit \\ $\abs{V(G)}+2 \cdot \abs{E(G)}$. 
		
		\item 	{\bf (Soundness)} For every $q \in \mathbb{N}$, if there is a solution for $\cI(\Pi,F)$ of profit at least \\ $\abs{V(G)}+2 \cdot \abs{E(G)}-q$, then $\MaxPar(\Pi) \geq \abs{V(G)} - 2 \cdot q \cdot  F$.

		%if there is a consistent partial assignment to $\Pi$ of size $\abs{V(G)}$ then there is a solution for $\cI(\Pi,F)$ of profit $\abs{V(G)}+2 \cdot \abs{E(G)}$.    
		
		%$q \in \mathbb{N}$, if there is a solution for $\cI(\Pi,F)$ of profit $\abs{V(G)} +2 \cdot \abs{E(G)} - q$, then there is a consistent partial assignment to $\Pi$ of size $\abs{V(G)} - 2 \cdot q \cdot  F$.  
	\end{enumerate}
\end{lemma}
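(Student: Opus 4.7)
The plan is to generalize the reduction of \Cref{thm:red-large-gap} by packing $F$ elements of $V(G)\cup E(G)$ into each pair of dimensions via a positional encoding. I would partition $V(G)\cup E(G)$ into groups $g_1,\ldots,g_D$ of size at most $F$, where $D=\lceil(|V(G)|+|E(G)|)/F\rceil$, and associate two dimensions $j^+$ and $j^-$ with every group $g_j$ (playing the roles of the two edge dimensions of \Cref{thm:red-large-gap}). Inside each group I fix an ordering and attach a weight $W_\ell := M^\ell$ to position $\ell$, where $M=|\Pi|^{O(1)}$ is a sufficiently large base. The items remain $(v,\sigma)\in V(G)\times\Sigma$, with profit $p(v,\sigma):=1+\deg_G(v)$, so the maximum achievable profit equals $\sum_{v}(1+\deg_G(v))=|V(G)|+2|E(G)|$. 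The cost of $(v,\sigma)$ in $j^{\pm}$ is the sum, over every ``concern'' of $(v,\sigma)$ contained in $g_j$ (namely $v$ itself if $v\in g_j$, and every edge $e\in g_j$ incident to $v$), of the contribution from the simple reduction rescaled by the positional weight: a vertex concern at position $\ell_v$ contributes $W_{\ell_v}\cdot m$ to $j^+$ (and $0$ to $j^-$), whereas an edge concern $e=(u,v)\in g_j$ at position $\ell_e$ contributes the same $(\pi,m-\pi)$-split between $j^+$ and $j^-$ as in \Cref{thm:red-large-gap}, scaled by $W_{\ell_e}$. The budget in $j^{\pm}$ is $\sum_{\ell}W_\ell\beta^{\pm}_\ell$, the sum of the per-position budgets induced by a single consistent complete assignment. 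Each item is supported on at most $\deg_G(v)+1=O(1)$ groups, and each coefficient fits in $O(F\log M)=O(|\Pi|\log|\Pi|)$ bits, so $|\cI(\Pi,F)|=O(|\Pi|^4)$ and the construction runs in polynomial time.

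For completeness, given a consistent complete assignment $\varphi$ with $|\varphi|=|V(G)|$, take $S:=\{(v,\varphi(v)):v\in V(G)\}$. Going position by position inside each group, the argument from \Cref{thm:red-large-gap} shows that the contribution of $S$ at every position matches the per-position budget in both $j^+$ and $j^-$ (the edge positions using the identity $\pi_{(e,u)}(\varphi(u))=\pi_{(e,v)}(\varphi(v))$ from consistency). Summing over positions gives $c_{j^{\pm}}(S)=B_{j^{\pm}}$ in every dimension, so $S$ is feasible with profit $\sum_{v}(1+\deg_G(v))=|V(G)|+2|E(G)|$.

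For soundness, given a feasible $S$ with $p(S)\geq|V(G)|+2|E(G)|-q$, I define a tentative map $\varphi$ by $\varphi(v):=\sigma$ when $\{(v,\tau)\in S:\tau\in\Sigma\}=\{(v,\sigma)\}$ and $\varphi(v):=\perp$ otherwise. Let $k_v:=|\{\sigma:(v,\sigma)\in S\}|$. Call $g_j$ \emph{bad} if any of the following occurs inside $g_j$: (i) some vertex $v\in g_j$ has $k_v\neq 1$; (ii) some edge $e=(u,v)\in g_j$ has $k_u+k_v<2$; or (iii) some edge $e=(u,v)\in g_j$ has $k_u=k_v=1$ but $\pi_{(e,u)}(\varphi(u))\neq\pi_{(e,v)}(\varphi(v))$. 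The key claim is that there are at most $q$ bad groups. The ``only if'' direction of feasibility here uses a digit-isolation argument: when a group satisfies none of (i)-(iii), the two aggregate constraints on $j^+$ and $j^-$ force the per-position discrepancies to vanish individually via the base-$M$ uniqueness of $\sum_\ell W_\ell t_\ell$ for $|t_\ell|<M$. Conversely, every bad group must exhibit some per-position deficit (a missing vertex or endpoint), and a careful accounting charges each bad group to at least one unit of profit loss; combined with $\deg_G(v)+1=O(1)$, this bounds the number of bad groups by $q$. Finally, I obtain the claimed partial assignment by additionally setting $\varphi(v):=\perp$ for every $v$ that lies in a bad group or is an endpoint of an edge in a bad group. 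Since each bad group has at most $F$ elements and each element induces at most two endpoints, this erases at most $2qF$ vertices, giving $|\varphi|\geq|V(G)|-2qF$.

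The main technical obstacle is the digit-isolation step and its tight conversion to profit loss. A single aggregate inequality in one dimension does \emph{not} imply per-position bounds, since slack at one position can mask excess at another. The remedy is to use $j^+$ and $j^-$ \emph{together}: their mirrored cost structure ensures that any inconsistency at an edge position creates equal-and-opposite discrepancies in the two dimensions, which must be absorbed by matching deficits elsewhere in the same group. Choosing $M$ larger than the maximum per-position discrepancy (it suffices that $M>2m$) makes the base-$M$ expansion unique, so that a clean group must behave exactly according to a consistent local assignment and every violation can be charged to at least one unit of profit loss.
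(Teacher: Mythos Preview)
Your architecture is right — partition $V(G)\cup E(G)$ into blocks of size $F$, use a base-$M$ positional encoding inside each block, set $p(v,\sigma)=1+\deg_G(v)$ — and completeness goes through. The soundness, however, has a real gap.

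The problem is that your two dimensions $j^+,j^-$ do not bound the \emph{count} $\sum_{(v,\sigma)\in S}J(j,v)$ in a group. For vertex positions you place the whole weight in $j^+$ and $0$ in $j^-$, so the only constraint on a vertex-only group is $\sum_\ell W_\ell k_{v_\ell}\le\sum_\ell W_\ell$. A single deficit at the top position absorbs arbitrarily many excesses below: with $g_j=\{v_1,\dots,v_F\}$, the choice $k_{v_\ell}=2$ for $\ell<F$ and $k_{v_F}=0$ is feasible, yet the local profit contribution is $2(F-1)>F$. Globally there is nothing forcing $\sum_v k_v\le|V(G)|$, so the ``profit loss'' $4\sum_v(1-k_v)$ can be zero (or negative) even with many bad groups, and the step ``charge each bad group at least one unit of profit loss'' cannot be carried out. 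Your ``mirrored'' intuition only covers genuine edge inconsistencies (type~(iii)); an extra copy of an endpoint, say $k_u=2$ for $e=(u,w)\in g_j$, creates a net excess of $m$ in $c_{j^+}+c_{j^-}$ at position $\ell_e$ (not an equal-and-opposite pair), and incidentally triggers none of your conditions (i)--(iii). Also, ``$M>2m$'' is far too small: the per-digit contribution $w_j(S)$ can be as large as $|I|\cdot m$, so the base must dominate that.

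The paper's construction avoids all of this by \emph{not} mirroring the two dimensions. Dimension $(\ell,1)$ carries the positional sum $\sum_{j\in D_\ell}w_j(i)\,\mathcal{Q}^{\mathrm{ord}_\ell(j)}$, while dimension $(\ell,2)$ carries the complement $M\cdot J(\ell,v)-c_{(\ell,1)}(i)$ with a single extra ``counting digit'' $M=\mathcal{Q}^{2F}$ that sits strictly above every positional weight. Then $c_{(\ell,1)}(S)+c_{(\ell,2)}(S)=M\sum_{(v,\sigma)\in S}J(\ell,v)$ exactly, and comparing with $B_{(\ell,1)}+B_{(\ell,2)}=M\cdot N_\ell$ forces $\sum J(\ell,v)\le N_\ell$ directly. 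Summing over $\ell$ gives $p(S)\le|V(G)|+2|E(G)|$, so the deficit $q$ is non-negative and equals at least the number of non-tight groups; for each \emph{tight} group one then gets $c_{(\ell,1)}(S)=B_{(\ell,1)}$ as an equality, and only at that point does the base-$\mathcal{Q}$ digit isolation apply to recover $w_j(S)=m$ per position. This counting digit is the missing idea in your proposal.
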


The above reduction considers an R-CSP instance $\Pi = \left(G, \Sigma,\Upsilon, \{\pi_{e, u}, \pi_{e, v}\}_{e = (u,v)\in E(G)} \right)$ and some integer parameter $F \in \left[\abs{V(G)}\right]$. %Our aim is to construct 
%We define the output \textnormal{VK} 
The reduction outputs a $d$-VK instance $\cI(\Pi,F) = (I,p,c,B)$ that in a high level constructed as follows. We let $D = V(G) \cup E(G)$ be the set of {\em constraints}, resembling the use of the vertices and edges as constraints in the reduction of \Cref{sec:simple}. We define an arbitrary partition $D_1,\ldots, D_r$ of $D$ with $F$ constraints in each set (with possibly fewer constraints in the last set). For each $\ell \in [r]$, we make an {\em embedding} of $D_{\ell}$ into only two dimensions, thus having overall $d = 2 \cdot r$ dimensions. 

The items of the constructed instance are $I = V(G) \times \Sigma$ - as in the previous reduction. For each~$\ell\in [r]$ the reduction defines two dimensions $(\ell,1)$ and $(\ell,2)$, which can be viewed as $(2\cdot F+1)$-digit numbers on a basis of a very large number $\mathcal{Q}$. Each of the constraints in $D_{\ell}$ is encoded into one of the first $F$ digits in both dimension. The encoding of each constraint into its digit resembles the encoding used in \Cref{thm:red-large-gap}. The highest digit is only used in $(\ell,2)$, and its goal is to bound the total number of selected items which participate in one of the constraints in $D_{\ell}$. The remaining digits are not used.  The reduction is formally stated as follows.

\comment{
For each $\ell \in [r]$, the two corresponding dimensions $(\ell,1), (\ell,2)$ of $\ell$ are defined as follows. We rely on an auxiliary weight function $w$ which is similar to the costs defined in the simple reduction in \Cref{sec:simple}. %; the weight function $w_j$, ensures that $w_j(S) = m$ if and only if constraint $j \in D_{\ell}$ is satisfied. 
In dimension $(\ell,1)$, 
we encode the constraints in $D_{\ell}$ in a very large base-$\cF$, each constraint $j$ encoded in a different {\em digit} in this basis, where the constraints are ordered arbitrarily. The cost of item $i \in I$ in dimension $(\ell,1)$ in the digit (place in the order) of $j \in D_{\ell}$ is $w_j(i)$. In other words, the cost of item $i = (v,\sigma) \in I$ in dimension $(\ell,1)$ is determined by all {\em corresponding} constraints $j \in D_{\ell}$ to the vertex $v$; these constraints are either $j = v$, or $j = e$, where $e$ is an adjacent edge to $v$. The number of  corresponding constraints to $v$ is denoted by $J(\ell,v)$ and the profit of item $i$ is the sum of $J(\ell,v)$ over all $\ell \in [r]$ - which is the degree of $v$ in $G$ incremented by one.  The overall number of vertices multiplied by their number of constraints in $\ell$ is $N_{\ell}$, which can be easily bounded by $2 \cdot F$. 

The second dimension $(\ell,2)$ of some $\ell \in [r]$, is used to guarantee equality of the constraints in $(\ell,1)$. Namely, a desired property of a solution $S$ for $\cI(\Pi,F)$ is that $w_j(S) = m = \abs{\Upsilon}$ for all $j \in D_{\ell}$. To obtain this property, we use a parameter $M \gg \cF$ and define the cost of item $i = (v,\sigma) \in I$ in $(\ell,2)$ as $M \cdot J(\ell,v)-c_{(\ell,1)}(i)$. We finally set the budget $B_{(\ell,1)}$ as $m$ in every digit of the base-$\cF$ cost in $(\ell,1)$ and as $M \cdot N_{\ell}-B_{(\ell,1)}$ in dimension $(\ell,2)$. The reduction is formally stated as follows. 
}

\begin{definition}
	\label{def:DKPInstance}
	Let $\Pi = \left(G, \Sigma,\Upsilon, \{\pi_{e, u}, \pi_{e, v}\}_{e = (u,v)\in E(G)} \right)$ be an \textnormal{R-CSP} instance, where $G$ is $3$-regular, and let $F \in \left[\abs{V(G)}\right]$. We define the output $d$-\textnormal{VK} instance $\cI(\Pi,F) = (I,p,c,B)$ as follows. 
	\begin{itemize}

		\item The items are pairs of a vertex and a symbol in $\Sigma$, i.e., $I = \{(v,\sigma) ~|~v \in V(G), \sigma \in \Sigma\}$.

%		\item Define $p:I \rightarrow \mathbb{N}$ by $p(i) = 1$ for all $i \in I$. 
		%\item Define $d = 2 \cdot |E(G)|+|V(G)|$ and define $m = |\Upsilon|$.  %and let $m = |E(G)|$. %and 
		
		%		\item Define $D = V(G) \cup \left(E(G) \times \{1,2\} \right)$. 
		
		\item Define $D = V(G) \cup E(G)$. 
		
		%\item Let $\prec$ denote the lexicographic 
		
		\item Let $D_1,\ldots, D_{r}$ be an arbitrary partition of $D$ such that $|D_{\ell}| \leq F$ for all $\ell \in [r]$.
		
			\item Define the number of dimensions as $d = 2 \cdot r$.   
		%\begin{itemize}
		%	\item For all $j \in [r-1]$ it holds that $|D_j| = F$ and $|D_r| \leq F$. 
		%\end{itemize} 
		
		\item For all $\ell \in [r]$ and $v \in V(G)$ define 
		$$J(\ell,v) = \begin{cases}
			|\Adj_G(v) \cap D_{\ell}|+1, & \textnormal{if } v \in D_{\ell} \\
		|\Adj_G(v) \cap D_{\ell}|, & \textnormal{else }
		\end{cases}$$
		Intuitively, $J(\ell,v)$ can be interpreted as  the number of constraints in $D_{\ell}$ in which $v$ participate. 
		%$J(\ell,v) = \Adj_G(v) \cap D_{\ell}+$
		
		%$V_{\ell} = \left(V(G) \cap D_{\ell}\right) \cup \left\{ v \in V(G) \mid \exists e \in E(G) \cap D_{\ell} \textnormal{ s.t. } e \in \Adj_G(v) \right\}$.
		
		\item For all $\ell \in [r]$ define $N_{\ell}  = \sum_{v \in V(G)} J(\ell,v)$ and $I_{\ell} = \left\{   (v,\sigma) \in I \mid J(\ell,v)>0 \right\}$.  
		
			\item Define $p:I \rightarrow \mathbb{N}$ by $p((v,\sigma)) = \sum_{\ell \in [r]} J(\ell,v)$ %\abs{\left\{  \ell \in [r] \mid  v \in V_{\ell} \right\}}$
			 for all $(v,\sigma) \in I$.  
		
		\item Let $m = |\Upsilon|$.  
		
		\item  Define $\cF = 3 \cdot F^2 \cdot m \cdot \abs{V(G)} \cdot \abs{\Sigma}$. 
		\item Define $M = \cF^{2 \cdot F}$. %$M = \sum_{j \in D_{\ell}} m \cdot \cF^{\ord_{\ell}(j)}$. %and let $M = 2 \cdot F \cdot M$. 
		%and associate the $d$ dimensions with elements of $D$. 
		
				\item For all $\ell \in [r]$ let $\ord_{\ell}: D_{\ell} \rightarrow  \left[\abs{D_{\ell}}\right]$ be an arbitrary bijection.

		\item Define $w:I \rightarrow \mathbb{N}^D$ such that for all $(v,\sigma) \in I$ %, $\sigma \in \Sigma$, 
		and $j \in D$: 
		$$w_j((v,\sigma)) = \begin{cases}
			m, & \textnormal{ if } j = v, \\
			\pi_{e, v}(\sigma), & \textnormal{ if } j = e, \textnormal{where } e = (v,u) \textnormal{ and } e \in E(G), \\
			m -	\pi_{e, v}(\sigma), & \textnormal{ if }  j = e, \textnormal{where } e = (u,v) \textnormal{ and } e \in E(G), \\
			%	m-\pi_{e, v}(\sigma), &  j = (e,1), \textnormal{where } e = (v,u) \textnormal{ and } e \in E(G)\\
			%\pi_{e, v}(\sigma), &  j = (e,2), \textnormal{where } e = (v,u) \textnormal{ and } e \in E(G)\\
			0, & \textnormal{otherwise.} %i = (e = (v, u),2) \in E(G)\\
		\end{cases}$$
		
		\item Define $c:I \rightarrow \mathbb{N}^{[r] \times \{1,2\}}$ such that for all $i = (v ,\sigma) \in I$ and $(\ell,t) \in [r] \times \{1,2\}$ define
		$$c_{(\ell,t)}(i) = \begin{cases}
			\sum_{j \in D_{\ell}} w_j(i) \cdot \cF^{\ord_{\ell}(j)}, & \textnormal{ if } i \in I_{\ell} \textnormal{ and } t = 1, \\
		M \cdot J(\ell,v)-\sum_{j \in D_{\ell}} w_j(i) \cdot \cF^{\ord_{\ell}(j)}, & \textnormal{ if } i \in I_{\ell} \textnormal{ and } t = 2, \\
			0, & \textnormal{ if } i \notin I_{\ell}.\\
		\end{cases}$$
		
		%as follows. 
		\item Define the budget $B \in \mathbb{N}^{[r] \times \{1,2\}}$ for all $(\ell,t) \in [r] \times \{1,2\}$ by
		$$B_{\ell,t} = \begin{cases}
\sum_{j \in D_{\ell}} m \cdot \cF^{\ord_{\ell}(j)}, & \textnormal{ if } t = 1, \\
		M \cdot N_{\ell} - \sum_{j \in D_{\ell}} m \cdot \cF^{\ord_{\ell}(j)} & \textnormal{ if } t = 2.
		\end{cases}$$
		%$B_{(\ell,t)} = N_{\ell} \cdot M +M$ for all $(\ell,t) \in [r] \times \{1,2\}$.   
	\end{itemize}
\end{definition}

Henceforth, for every R-CSP instance $\Pi$ with a graph  $G$ and $F \in \left[\abs{V(G)}\right]$, we use $\cI\left(\Pi,F\right)$ to denote the VK instance described in \Cref{def:DKPInstance}. To simplify the presentation, when $\Pi, F$ are clear from the context, we may discard $\Pi, F$ from the notations. We give an illustration of the construction in \Cref{fig:1}. 

\begin{figure}[h]	
	\begin{center}
		\begin{tikzpicture}[ultra thick,scale=1.1, every node/.style={scale=1}]

			\fill[gray, opacity=0.4] (-3,0) rectangle (2,1);
			%	\node at (0.5,-0.5) {$\x$};
			%		\foreach \x in {1,2,3,4,5,6,7,8,9}{
				%			\draw (\x,0)--(\x,1);
				%			\node at (\x-0.5,-0.5) {$\x$};
				%		}
			%		\node at (10.5,-0.5) {$\ldots$};
			%		\node at (11.5,-0.5) {$Q()b$};}

										\fill[gray, opacity=0.25] (-3,3) rectangle (0,4);

			\fill[gray, opacity=0.25] (3,3) rectangle(4.2,4);
			
			\node at (-4,3.5) {$\bf \textcolor{blue}{c_{(\ell,1)}(i)}$};
			\draw (-3,3) rectangle (9,4);
			\draw (0,3)--(0,4);
			%\draw (2,3)--(2,4);
			\draw (3,3)--(3,4);
			
			\draw (6,3)--(6,4);
		
						\node at (-1.5,3.5) {$w_v(i) = m$};
			
					\node at (1.5,3.5) {$w_{j_2}(i) = 0$};
			
								\node at (4.5,3.5) {$w_e(i) = \pi_{(e,u)}(\sigma)$};

				\node at (7.5,3.5) {$w_{j_4}(i) = 0$};
			\node at (-1.5,2.5) {$\ord_{\ell}(v) = 1$};
				\node at (1.5,2.5) {$\ord_{\ell}(j_2) = 2$};
					\node at (4.5,2.5) {$\ord_{\ell}(e) = 3$};
				\node at (7.5,2.5) {$\ord_{\ell}(j_4) = 4$};

								\node at (9,-0.5) {$4 \cdot M$}; 
								
									\node at (6,-0.5) {$3 \cdot M$}; 
									
													\node at (3,-0.5) {$2 \cdot M$}; 
									
									\node at (0,-0.5) {$M$};

			\node at (-4,0.5) {$\bf \textcolor{red}{c_{(\ell,2)}(i)}$};
			
				%\draw (-3,3) rectangle (9,4);
			\draw (-3,0) rectangle (9,1);
			
				\draw (0,0)--(0,1);
			%\draw (2,3)--(2,4);
			\draw (3,0)--(3,1);
			
			\draw (6,0)--(6,1);
			
	\vspace{-1pt}
		\end{tikzpicture}
	\end{center}
	\caption{\label{fig:1} An illustration of the reduction \Cref{def:DKPInstance}. The figure shows the cost of item $i = (v,\sigma) \in I$ in dimensions $(\ell,1)$ and $(\ell,2)$ for some $\ell \in [r]$. The constraints in $D_{\ell}$ are $D_{\ell} = \{j_1,j_2,j_3,j_4\}$ where $j_1 = v$, $j_3 = e = (u,v)$ which is adjacent to $v$, and $j_2,j_4$ are constraints not involving $i$. Thus, $J(\ell,v) = 2$. The constraints are ordered by $j_1,j_2,j_3, j_4$ so that $\ord_{\ell}(j_1) = 1$, $\ord_{\ell}(j_2) = 2$, etc. The cost of $i$ in dimension $(\ell,1)$ is $w_v(i)+w_{e}(i) \cdot \cF^{3}$. Considering this cost as a base-$\cF$ number, the first digit is $m$ and the third digit is $\pi_{(e,u)}(\sigma)$. This is illustrated  as the gray area in the figure upper rectangle, depicting the $4$-digit number in base-$\cF$.  The cost of $i$ in dimension $(\ell,2)$ is $2 \cdot M-c_{(\ell,1)}(i)$ (since $J(\ell,v) = 2$) depicted as the gray area in the lower rectangle. Note that the two rectangles are not in their true proportions.  
	}
\end{figure}
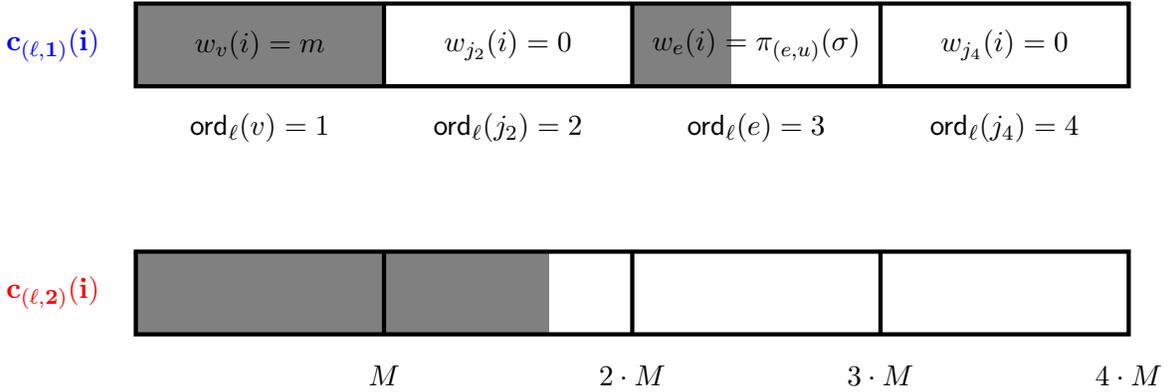

%Before we proceed with the analysis, consider the following example of the construction. 
%\paragraph{Example} Consider an example for an R-CSP instance $\Pi$ with the following properties. A graph $G  = (V,E)$ with two vertices $V = \{u,v\}$ and one edge $E = \{e = (u,v)\}$. Let $\Sigma = \{a,b\}$ and let $\Upsilon = \{1,2\}$. Moreover, assume that $\pi_{e,x} (a) = 1$ and $\pi_{e,x} (b) = 2$ for all $x \in V$. Thus, the edge $e$ is satisfied by an assignment $\varphi:V \rightarrow \Sigma$ if and only $\varphi(u) = \pi_{e,u} (\varphi(u)) = \pi_{e,v} (\varphi(v)) = \varphi(v)$. Now, consider the reduced instance $\cI(\Pi,F)$. The items are all pairs of a vertex and a possible assignment to the vertex: $I = \{u,v\} \times \{a,b\}$, and there are $d = 2 \cdot 1+2 = 4$ budget constraints. We are given a budget constraint $x$ for every vertex $x \in V$; this constraints guarantee that a solution for $\cI(\Pi,F)$ chooses at most one {\em representative} item from $(x,a),(x,b)$, which is analogous to an assignment for the vertex $x$. In addition, there are two constraints $(e,1)$ and $(e,2)$. The first constraint $(e,1)$ guarantees that if a solution for $\cI(\Pi,F)$ chooses a representative item for both vertices $u,v$, then the value of $ \pi_{e,u} (\varphi(u))$ is at least as large as $ \pi_{e,v} (\varphi(v))$; conversely, the constraint $(e,2)$ restricts $ \pi_{e,u} (\varphi(u))$ to be smaller or equal than $ \pi_{e,v} (\varphi(v))$, yielding that the assignment, e.g., $\varphi(u) = a, \varphi(v) = a$ satisfies the edge $e$. 

Clearly, given an R-CSP instance $\Pi$ with a constraint graph  $G$ and $F \in \left[\abs{V(G)}\right]$, the instance $\cI(\Pi,F)$ can be constructed in time $|\Pi|^{O(1)}$. %We start by proving the properties described in \Cref{thm:redA}. 
%To simplify the presentation, for any \textnormal{R-CSP} instance 
%$\Pi$ with a graph $G$ and $F \in \left[\abs{V(G)}\right]$, when $\Pi$ is clear from the context, we refer in this section to all values and objects defined in \Cref{def:DKPInstance} w.r.t. $\Pi$ and $F$ without a repeated deceleration.
%Intuitively, solutions for $\cI$ of high profit have larger 
We state the following observation (which follows directly from definitions) regarding the costs and profits. This observation will be helpful throughout this section. 

\begin{obs} \label{obs:basic}
For any $S \subseteq I$ and $\ell \in [r]$, we have
\begin{enumerate}[(i)]
\item $c_{\ell, 1}(S) = \sum_{j \in D_{\ell}} w_j(S) \cdot \cF^{\ord_{\ell}(j)}$.
\item $c_{\ell, 2}(S) = M \cdot \sum_{(v, \sigma) \in S} J(\ell, v) - \sum_{j \in D_{\ell}} w_j(S) \cdot \cF^{\ord_{\ell}(j)}$.
\item $p(S)	= \sum_{\ell \in [r]~} \sum_{(v,\sigma) \in S}  J(\ell,v)$.
\end{enumerate}
\end{obs}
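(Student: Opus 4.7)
The plan is to verify each of the three identities (i), (ii), (iii) by directly unfolding the definitions in \Cref{def:DKPInstance} and swapping summations. Since $c_{\ell,t}(S) = \sum_{i \in S} c_{\ell,t}(i)$ and $p(S) = \sum_{i \in S} p(i)$ by the standard convention $f(X) = \sum_{x \in X} f(x)$ introduced in \Cref{sec:preliminaries}, each claim reduces to a short double-counting argument. The only mild subtlety is dealing with items $i \in S \setminus I_{\ell}$, whose contribution to $c_{\ell,1}$ and $c_{\ell,2}$ is defined to be $0$; I will handle this by observing that such items also contribute $0$ to $w_j$ for every $j \in D_{\ell}$, so restricting or not restricting to $I_{\ell}$ yields the same sums.

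For part (i), I will write
\[
c_{\ell,1}(S) \;=\; \sum_{i \in S} c_{\ell,1}(i) \;=\; \sum_{i \in S \cap I_{\ell}} \sum_{j \in D_{\ell}} w_j(i)\cdot \cF^{\ord_{\ell}(j)},
\]
then swap the order of summation. To replace the inner sum over $S \cap I_{\ell}$ by one over $S$, I will verify the key fact: for any $i=(v,\sigma) \notin I_{\ell}$ we have $J(\ell,v)=0$, which by definition means $v \notin D_{\ell}$ and no edge of $D_{\ell}$ is adjacent to $v$; hence $w_j(i)=0$ for every $j \in D_{\ell}$ by inspection of the piecewise definition of $w$. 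Therefore $\sum_{i \in S \cap I_{\ell}} w_j(i) = \sum_{i \in S} w_j(i) = w_j(S)$, and (i) follows.

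For part (ii), I will split the definition of $c_{\ell,2}(i)$ into its two terms. The $M \cdot J(\ell,v)$ term summed over $i=(v,\sigma) \in S \cap I_{\ell}$ equals $M \cdot \sum_{(v,\sigma)\in S} J(\ell,v)$, again because $J(\ell,v)=0$ for items outside $I_{\ell}$. The second term, $-\sum_{j\in D_{\ell}} w_j(i)\cdot \cF^{\ord_{\ell}(j)}$ summed over $S \cap I_{\ell}$, equals $-c_{\ell,1}(S)$ by (i), which in turn equals $-\sum_{j \in D_{\ell}} w_j(S) \cdot \cF^{\ord_{\ell}(j)}$.

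For part (iii), I will simply expand $p(i)=\sum_{\ell \in [r]} J(\ell,v)$ for $i=(v,\sigma)$ and swap the two sums: $p(S) = \sum_{(v,\sigma)\in S} \sum_{\ell\in[r]} J(\ell,v) = \sum_{\ell\in[r]} \sum_{(v,\sigma)\in S} J(\ell,v)$. There is no real obstacle here; the only item to watch is the bookkeeping in (i) and (ii) that justifies enlarging the summation domain from $S \cap I_{\ell}$ to $S$, and this is settled once and reused in both parts.
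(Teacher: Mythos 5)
Your proof is correct and is exactly the routine verification the paper has in mind: the paper states this observation without proof, remarking only that it "follows directly from definitions." Your careful handling of the one subtlety — that items outside $I_{\ell}$ contribute zero both to $c_{(\ell,t)}$ and to $w_j$ for every $j \in D_{\ell}$ (since $J(\ell,v)=0$ forces $v \notin D_{\ell}$ and $\Adj_G(v) \cap D_{\ell} = \emptyset$) — is precisely what justifies the summation swaps, so nothing is missing.
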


Next, we prove some basic properties of the reduction.

	\begin{lemma}
	\label{claim:NF}
		For any \textnormal{R-CSP} instance 
	$\Pi = \left(G, \Sigma,\Upsilon, \{\pi_{e, u}, \pi_{e, v}\}_{e = (u,v)\in E(G)},\perp \right)$ and $F \in \left[\abs{V(G)}\right]$ the following holds. 
	
	\begin{enumerate}
		\item $\sum_{\ell \in [r]} N_{\ell} = 	\abs{V(G)}+ 2 \cdot \abs{E(G)}$. 
		
		\item 	For all $\ell \in [r]$ it holds that $N_{\ell} \leq 2 \cdot F$. 
		
	%	\item 	$3 \cdot \frac{|V(G)|+|E(G)|}{F} \leq d \leq 3 \cdot \frac{|V(G)|+|E(G)|}{F}+3$. 
	
	\item  $d = 2 \cdot \ceil{\frac{|V(G)|+|E(G)|}{F}}$. 
		
%		\item $W\left(\cI(\Pi,F)\right) = \left(3 \cdot F \cdot \abs{\Pi}\right)^{6 \cdot F}$.%2 \cdot F \cdot \left(3 \cdot F^2 \cdot m\right)^{2 \cdot F}$. 
		
		\item $\abs{\cI(\Pi,F)} \leq O \left( \abs{\Pi}^4\right)$.
	\end{enumerate}
\end{lemma}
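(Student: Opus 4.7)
The plan is to handle each item of the lemma in turn, relying on the definitions of $J(\ell,v)$, $N_\ell$, $D$, and the parameters $\mathcal{Q},M$ from \Cref{def:DKPInstance}. The four items are essentially elementary counting/bookkeeping once the right decomposition is in place; no step should be a real obstacle.

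For items (1) and (2), the key observation is the double counting
\[
N_\ell \;=\; \sum_{v \in V(G)} J(\ell,v) \;=\; |D_\ell \cap V(G)| \;+\; \sum_{v \in V(G)} |\mathrm{Adj}_G(v) \cap D_\ell|.
\]
The first term simply counts the vertices in $D_\ell$ (each contributes the ``$+1$'' summand in the definition of $J(\ell,v)$), while the second counts, over all vertices, the edges of $D_\ell$ incident to $v$. Since every edge has two endpoints, this equals $2\,|D_\ell \cap E(G)|$. Thus $N_\ell = |D_\ell \cap V(G)| + 2|D_\ell \cap E(G)| \le 2|D_\ell| \le 2F$, proving (2). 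Summing over $\ell \in [r]$ and using that $D_1,\ldots,D_r$ is a partition of $D = V(G) \cup E(G)$ gives $\sum_\ell N_\ell = |V(G)| + 2|E(G)|$, which is (1).

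For item (3), note that $r$ is determined by the partition of $D$ into parts of size at most $F$: the minimum such $r$ is $\lceil |D|/F \rceil = \lceil (|V(G)|+|E(G)|)/F \rceil$, and the reduction may be assumed to choose a partition achieving this bound (or we simply observe $d = 2r$ with $r$ being the number of parts, which equals $\lceil (|V(G)|+|E(G)|)/F \rceil$ for the natural partition). Then $d = 2r$ gives the stated formula.

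For item (4), I would bound the bit-length of every number appearing in $\cI(\Pi,F)$. The costs $c_{\ell,t}(i)$ and budgets $B_{\ell,t}$ are bounded above by $M \cdot N_\ell \le \mathcal{Q}^{2F} \cdot 2F$, so their binary encoding uses $O(F \log \mathcal{Q})$ bits. Since $\mathcal{Q} = 3F^2 m |V(G)||\Sigma|$ is polynomial in $|\Pi|$, we have $\log \mathcal{Q} = O(\log |\Pi|)$, hence $O(F \log|\Pi|)$ bits per entry. The total number of (item, dimension) entries is $|I| \cdot d \le |V(G)||\Sigma| \cdot 2r$, and using $|E(G)| = \tfrac{3}{2}|V(G)|$ (as $G$ is $3$-regular) together with $r \le |D|/F + 1$, this is at most $O(|V(G)|^2 |\Sigma|/F)$. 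Multiplying yields total encoding size $O(|V(G)|^2 |\Sigma| \log |\Pi|) = O(|\Pi|^3 \log|\Pi|)$, which is certainly $O(|\Pi|^4)$; the profits $p(i) \le \sum_\ell J(\ell,v) \le \deg_G(v)+1 = 4$ contribute negligibly. I don't anticipate any real obstacle: the only mildly subtle point is recognizing that the factor $F$ in the exponent of $M$ cancels with the factor $1/F$ in $r$ after the counting.
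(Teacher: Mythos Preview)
Your proposal is correct and follows essentially the same approach as the paper. The only cosmetic difference is in item~(1): the paper swaps the order of summation to compute $\sum_\ell \sum_v J(\ell,v) = \sum_v (|\mathrm{Adj}_G(v)|+1)$ directly, whereas you first establish the identity $N_\ell = |D_\ell\cap V(G)| + 2|D_\ell\cap E(G)|$ (which the paper uses for item~(2)) and then sum over $\ell$; both are the same double count.
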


\begin{proof}
	The first property of the lemma follows from 
	 \begin{equation*}
		\label{eq:profitTwo}
		\begin{aligned}
			\sum_{\ell \in [r]} N_{\ell}= {} &  \sum_{\ell \in [r]} \sum_{v \in V(G)} J(\ell,v) =  \sum_{v \in V(G)}  \sum_{\ell \in [r]} J(\ell,v) =  \sum_{v \in V(G)} \left(\abs{\Adj_G(v)}+\abs{\{v\}}\right) = \abs{V(G)}+ 2 \cdot \abs{E(G)}. 
		\end{aligned}
	\end{equation*} %The third equality uses the definition of $J$ and holds since for each $j \in V(G) \cup E(G)$ there is exactly one $\ell \in [r]$ such that $j \in D_{\ell}$. The last equality holds since for each edge $e = (u,v) \in E(G)$ there are exactly two endpoints $u$ and $v$; thus, $e$ belongs to $\Adj_G(v)$ and $\Adj_G(u)$ but does not belong to $\Adj_G(x)$ for any $x \in V(G) \setminus \{u,v\}$. 
	
	For the second property of the lemma, %by the construction in \Cref{def:DKPInstance}, 
	for all $\ell \in [r]$, it holds that 
	\begin{equation}
			\label{eq:profitTwo'}
		\begin{aligned}
		N_{\ell} = \sum_{v \in V(G)} J(\ell,v) 
		=
		 \left|V(G) \cap D_{\ell}\right|+ 2 \cdot \left|E(G) \cap D_{\ell}\right| \leq 2 \cdot \abs{D_{\ell}} \leq 2 \cdot F.
%		 
%		 \abs{D_{\ell} \cap V(G)} +\abs{\bigcup_{(u,v) \in D_{\ell} \cap E(G)} \{u,v\}} \leq \sum_{v \in D_{\ell} \cap V(G)} 1+\sum_{(u,v) \in D_{\ell} \cap E(G)} 2 \leq \sum_{j \in D_{\ell}} 2 = 2 \cdot \abs{D_{\ell}} \leq 2 \cdot F.
		\end{aligned}
	\end{equation}
	
		The third property clearly holds in \Cref{def:DKPInstance}. By  \Cref{def:DKPInstance}, the largest number in the instance $\cI(\Pi,F)$ is bounded by $M \cdot N_{\ell}$. Thus,
		$$W\left(\cI(\Pi,F)\right) \leq M \cdot N_{\ell} \overset{\eqref{eq:profitTwo'}}{\leq} 2 \cdot F \cdot M = 2 \cdot F \cdot \cF^{2 \cdot F} = 2 \cdot F \cdot \left(3 \cdot F^2 \cdot m \cdot \abs{V(G)} \cdot \abs{\Sigma}\right)^{2 \cdot F} \leq \left(3 \cdot F \cdot \abs{\Pi}\right)^{6 \cdot F}.$$
		%The second equality follows from \eqref{eq:profitTwo'}. 
		
%		Finally, we prove the last property of the lemma. By \Cref{def:DKPInstance} the number of items of $\cI(\Pi,F)$ can be expressed as $|I| = \abs{V(G)} \cdot \abs{ \Sigma}$. Moreover, by the above the number of dimensions is bounded by $d \leq 3 \cdot \frac{|V(G)|+|E(G)|}{F}+3$ and the largest number encoded in the instance is bounded by $W\left(\cI(\Pi,F)\right) \leq O \left(\left(3 \cdot F \cdot \abs{\Pi}\right)^{3 \cdot F}\right)$. Thus, encoding each number in the instance can be done in space $O \left(\log \left( \left(3 \cdot F \cdot \abs{\Pi}\right)^{3 \cdot F} \right)    \right) = O \left(F \cdot \log (\abs{\Pi}) \right)$.
		%Overall
	Therefore, the reduced instance $\cI(\Pi,F)$ can be encoded in space 
		\begin{equation*}
			\begin{aligned}
		O\left(|I| \cdot d \cdot \log W\left(\cI(\Pi,F)\right) \right) =	O \left(\abs{V(G)} \cdot \abs{ \Sigma} \cdot  \frac{|V(G)|+|E(G)|}{F} \cdot F \cdot \log \abs{\Pi} \right) \leq{} & O \left( \abs{\Pi}^4\right),
			\end{aligned} \qedhere
		\end{equation*}
		which shows the last property. 
		%
		%\comment{By \Cref{def:DKPInstance} the number of items of $\cI(\Pi,F)$ can be expressed as $|I| = \abs{V(G)} \cdot \abs{ \Sigma}$. Moreover, by \Cref{claim:NF}, the number of dimensions is bounded by $$d \leq 3 \cdot \frac{|V(G)|+|E(G)|}{F}+3.$$ 
		%	Third, by \Cref{claim:NF}, the largest number encoded in the instance is bounded by $$W\left(\cI(\Pi,F)\right) \leq 2 \cdot F \cdot \left(3 \cdot F^2 \cdot m\right)^{2 \cdot F}.$$ Thus, encoding each number in the instance can be done in space $$O \left(\log \left( 2 \cdot F \cdot \left(3 \cdot F^2 \cdot m\right)^{2 \cdot F} \right)    \right) = O \left(F \cdot \log (m \cdot F) \right).$$
		%	Overall, the reduced instance $\cI(\Pi,F)$ can be encoded in space 
		%	$$O \left(       F \cdot \log (m \cdot F) \cdot  \abs{V(G)} \cdot \abs{ \Sigma} \cdot  \frac{|V(G)|+|E(G)|}{F}               \right) = O \left(\log (|\Upsilon| \cdot F) \cdot  \abs{V(G)} \cdot \abs{ \Sigma} \cdot \left(|V(G)|+|E(G)|\right)  \right).$$
		%	By the above the proof follows. }
		%By the above, we prove all properties of the lemma. 
\end{proof}

%
%\begin{proof}
%	Let $S$ be a solution for $\cI(\Pi,F)$ and let $j \in D$ be a dimension of $\cI(\Pi,F)$. Consider the following~cases.
%		\begin{enumerate}
%		\item $j = v$ for some $v \in V(G)$. Then, clearly $\left|\left\{i \in S \mid c_j(i)>0\right\}\right| \leq 1$ since either $c_j(i) = 0$ or $c_j(i) = m$, where $m = B_j$. 
%		\item $j = (e,\ell)$ for some $e = (u, v) \in E(G)$ and $\ell \in \{1,2\}$. Assume towards a contradiction that $\left|\left\{i \in S \mid c_j(i)>0\right\}\right| > 2$. Observe that by \Cref{def:DKPInstance}, for all $i \in S$ such that $c_j(i)>0$ there is $\sigma \in \Sigma$ such that $i = (u,\sigma)$ or $i = (v,\sigma)$. Thus,
%		 there are $i_1,i_2 \in S$, $\sigma_1,\sigma_2$, where $\sigma_1 \neq \sigma_2$, and $w \in \{u,v\}$, such that $i_1 = (w,\sigma_1)$, $i_2 = (w,\sigma_2)$. Therefore,
%		 \begin{equation}
%		 	\label{eq:violateW}
%		 	c_w(S) \geq c_w(i_1)+c_w(i_2) = 2 \cdot m > m = B_w.
%		 \end{equation} The first inequality holds since $i_1,i_2 \in S$ and $i_1 \neq i_2$. The first equality holds since $i_1 = (w,\sigma_1)$, $i_2 = (w,\sigma_2)$. By \eqref{eq:violateW} we reach a contradiction to the fact that $S$ is a solution for $\cI(\Pi,F)$. Thus, $\left|\left\{i \in S \mid c_j(i)>0\right\}\right| \leq 2$.
%	\end{enumerate} By the above, $\left|\left\{i \in S \mid c_j(i)>0\right\}\right| \leq 2$ in all cases. It follows that $\cI(\Pi,F)$ is $2$-flat by \Cref{def:flat}. 
%\end{proof}

We give below the completeness and soundness of the reduction. We start with the former, which is (relatively) straightforward. %Intuitively, given a consistent partial assignment to $\Pi$ we can construct a solution for $\cI$

	\begin{lemma}
	\label{lem:VKdir1}
	{\bf (Completeness)}	For any \textnormal{R-CSP} instance 
	$\Pi = \left(G, \Sigma,\Upsilon, \{\pi_{e, u}, \pi_{e, v}\}_{e = (u,v)\in E(G)}\right)$ 
and $F \in  \left[\abs{V(G)}\right]$, if $\MaxPar(\Pi) = \abs{V(G)}$ then there is a solution for $\cI(\Pi,F)$ of profit $\abs{V(G)}+2 \cdot \abs{E(G)}$.    
\end{lemma}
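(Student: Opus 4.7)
The plan is to take the obvious candidate solution obtained from the full assignment and verify that it is both feasible and has the stated profit. Concretely, since $\MaxPar(\Pi) = |V(G)|$, the optimal consistent partial assignment $\varphi: V(G) \to \Sigma \cup \{\perp\}$ satisfies $\varphi(v) \neq \perp$ for every $v \in V(G)$ (i.e., it is a full consistent assignment). I would then define
\[
S = \{(v, \varphi(v)) \mid v \in V(G)\} \subseteq I.
\]

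For the profit, the third part of \Cref{obs:basic} together with the first part of \Cref{claim:NF} gives
\[
p(S) = \sum_{\ell \in [r]} \sum_{v \in V(G)} J(\ell, v) = \sum_{\ell \in [r]} N_{\ell} = |V(G)| + 2 \cdot |E(G)|,
\]
which matches the claimed profit.

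For feasibility, I would fix $\ell \in [r]$ and establish the key identity $w_j(S) = m$ for every $j \in D_{\ell}$. There are two cases. If $j = v \in V(G)$, then only $(v, \varphi(v)) \in S$ contributes, giving $w_j(S) = m$ directly from \Cref{def:DKPInstance}. If $j = e = (u, v) \in E(G)$ (with $u \prec v$), then the only two items in $S$ contributing to $w_j$ are $(u, \varphi(u))$ and $(v, \varphi(v))$, so
\[
w_j(S) = \pi_{e, u}(\varphi(u)) + m - \pi_{e, v}(\varphi(v)) = m,
\]
where the last equality uses the consistency of $\varphi$. Plugging this into parts (i) and (ii) of \Cref{obs:basic} yields
\[
c_{\ell, 1}(S) = \sum_{j \in D_{\ell}} m \cdot \cF^{\ord_{\ell}(j)} = B_{\ell, 1},
\]
and, using $\sum_{(v,\sigma) \in S} J(\ell, v) = \sum_{v \in V(G)} J(\ell, v) = N_{\ell}$,
\[
c_{\ell, 2}(S) = M \cdot N_{\ell} - \sum_{j \in D_{\ell}} m \cdot \cF^{\ord_{\ell}(j)} = B_{\ell, 2}.
\]
Thus $S$ is feasible and the proof is complete.

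There is no real obstacle here; the only subtle point is recognizing that the consistency condition $\pi_{e,u}(\varphi(u)) = \pi_{e,v}(\varphi(v))$ is precisely what forces the edge-constraint contribution $w_e(S)$ to cancel to exactly $m$, which in turn makes both dimensions $(\ell,1)$ and $(\ell,2)$ simultaneously saturate their budgets rather than exceed them. The harder work is deferred to the soundness direction.
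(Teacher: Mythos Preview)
Your proposal is correct and follows essentially the same approach as the paper: define $S$ from the full consistent assignment, compute the profit via \Cref{obs:basic}(iii) and \Cref{claim:NF}, establish $w_j(S)=m$ for every $j\in D$ using consistency, and conclude feasibility from \Cref{obs:basic}(i)--(ii). The structure and all key steps match the paper's proof.
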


\begin{proof}
	%Let $\Pi = \left(G, \Sigma,\Upsilon, \{\pi_{e, u}, \pi_{e, v}\}_{e = (u,v)\in E(G)},\perp \right)$ be an R-CSP instance, let $F \in  \left[\abs{V(G)}\right]$, and 
	Let $\varphi:V(G) \rightarrow \left(\Sigma \cup \{\perp\}\right)$ be a  consistent partial assignment to $\Pi$ of size $\abs{V(G)}$. %In addition, let $\cI(\Pi,F) = (I,p,c,B)$ be the reduced instance defined in \Cref{def:DKPInstance}. %and let $m = |\Upsilon|$. 
	Let $S = \{(v,\varphi(v))~|~v \in V(G)\}.$ %Observe that for every $v \in V(G)$ it holds that $(v,\varphi(v)) \in I = V(G) \times \Sigma$ since $\varphi$ is a consistent partial assignment of size $\abs{V(G)}$.
	% Thus, it holds that 
%	 \begin{equation}
%		\label{eq:profitOne}
%		\begin{aligned}
%			p(S) ={} & \sum_{i \in S} p(i) 
%			\\={} &  \sum_{(v,\sigma) \in S} J(\ell,v)
%			\\={} &
%			\sum_{(v,\varphi(v)) \in S} J(\ell,v)
%			\\={} &
%			\sum_{v \in V(G)} J(\ell,v)
%			\\={} &
%		\abs{V(G)}+2 \cdot \abs{E(G)}
%		%	\\={} &
%			%\sum_{\ell \in [r]} N_{\ell}. 
%			%  	 \\={} &
%			%	  \sum_{\ell \in [r]} N_{\ell}. 
%		\end{aligned}
%	\end{equation} 
	From the third item of \Cref{obs:basic} and the first item of \Cref{claim:NF}, the profit is equal to $$p(S) = \sum_{\ell \in [r]~} \sum_{v \in V(G)} J(\ell,v) = \sum_{\ell \in [r]~} N_{\ell} = \abs{V(G)}+2 \cdot \abs{E(G)}.$$

	 To conclude, it remains to prove that $S$ is a solution for $\cI(\Pi,F)$, %which requires to % by proving 
	 i.e. that all budget constraints are satisfied. 
	 To see this, first notice that for any $v \in V(G)$, we have $w_v(S) = w_v((v, \varphi(v))) = m$. Moreover, for every $e = (u, v) \in E(G)$, we have $$w_e(S) = w_e(u,\varphi(u))+w_e(v,\varphi(v)) = \pi_{e, u}\left(\varphi(u)\right)+m- \pi_{e, v}\left(\varphi(v)\right) = m,$$ where the last equality is from consistency of $\varphi$. In other words, we have
	 \begin{align} \label{eq:completeness-w-tight}
	 w_j(S) = m & &\forall j \in D.
	 \end{align}

	 Let $(\ell,t) \in [r] \times \{1,2\}$ be a budget constraint. Consider the two cases depending on the value of $t$:
	 \begin{itemize}
	 \item $t = 1$. In this case, \eqref{eq:completeness-w-tight} and the first item of \Cref{obs:basic} immediately yield $c_{(\ell, t)}(S) = B_{\ell, t}$.
	 \item $t = 2$. In this case, by \eqref{eq:completeness-w-tight}, the second item of \Cref{obs:basic} and the definition of $N_\ell$, we have
	 \begin{align*}
	 c_{(\ell, t)}(S) = M \cdot \sum_{v \in V(G)} J(\ell, v) - \sum_{j \in D_{\ell}} m \cdot \cF^{\ord_{\ell}(j)} = M \cdot N_\ell - \sum_{j \in D_{\ell}} m \cdot \cF^{\ord_{\ell}(j)} = B_{\ell, t}.
	 \end{align*}
	 \end{itemize}

	Thus, %\eqref{eq:Case:t=1} and \eqref{eq:Case:t=2} 
	$S$ is a feasible solution for $\cI(\Pi,F)$ and this completes the proof. 
\end{proof}

For the soundness, we prove that given a solution to the reduced instance $\cI(\Pi,F)$, we can construct a consistent partial assignment for $\Pi$ with roughly the same size/profit. Before that, we give a couple of useful properties over the weights $w$ of any solution of $\cI(\Pi, F)$: %For some R-CSP instance $\Pi$ and a number $F \in \left[\abs{V(G)}\right]$,  let $T = \left\{ \ell \in D_{\ell} \mid \abs{S \cap I_{\ell}} = N_{\ell} \right\}$ be the set of {\em tight constraints} of the reduced instance $\cI(\Pi,F)$. The tight constraint play a vital role in the analysis. 

	\begin{lemma}
	\label{lem:VKw}
	 %with rectangular constraint instance 
	%$\Pi = (V, E, \Sigma, \{\pi_{e, u}, \pi_{e, v}\}_{e = (u,v)\in E})$ 
	%Let $\Pi$ be an \textnormal{R-CSP} instance, let $F \in \left[\abs{V(G)}\right]$, and 
	Let $S$ be any solution for $\cI(\Pi,F)$. Then, for any $\ell \in [r]$, the following holds:
	\begin{enumerate}[(i)]
	\item $\sum_{(v,\sigma) \in S} J(\ell,v) \leq  N_{\ell}$. \label{item:count-constraint}
	\item If $\sum_{(v,\sigma) \in S} J(\ell,v) =  N_{\ell}$, then $w_j(S) = m$ for all $j \in D_\ell$.% \pasin{Did we define $w_j(S)$ (as the appropriate sum) before?} 
	\label{item:count-exact-tight}
	\end{enumerate}
	%In addition, let $\ell \in [r]$ such that $\sum_{(v,\sigma) \in S} J(\ell,v) =  N_{\ell}$. Then, for all $j \in D_{\ell}$ it holds that $w_j(S) = m$.  %= \left(G, \{\Sigma_v\}_{v \in V(G)}, \{\Sigma_e\}_{e \in E(G)}, \{\pi_{e, u}, \pi_{e, v}\}_{e = (u,v)\in E}) \right)$
	%and $q \in \mathbb{N}$, if there is a solution for $\cI(\Pi,F)$ of profit $q$, then there is a consistent partial assignment to $\Pi$ of size $q-10 \cdot F$.  
\end{lemma}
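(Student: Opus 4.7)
\textbf{Proof plan for \Cref{lem:VKw}.}

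The plan is to exploit the deliberate symmetry built into the two dimensions $(\ell,1)$ and $(\ell,2)$ by \emph{adding} the two budget constraints. Using parts (i) and (ii) of \Cref{obs:basic}, we have
\[
c_{(\ell,1)}(S) + c_{(\ell,2)}(S) \;=\; M \cdot \sum_{(v,\sigma) \in S} J(\ell,v),
\]
while $B_{(\ell,1)} + B_{(\ell,2)} = M \cdot N_{\ell}$ (the $\sum_{j\in D_{\ell}} m\cdot \cF^{\ord_{\ell}(j)}$ terms cancel). Since $S$ is feasible, each term on the left is at most the corresponding term on the right, which after dividing by $M$ gives $\sum_{(v,\sigma) \in S} J(\ell,v) \leq N_{\ell}$. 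This establishes part \eqref{item:count-constraint}.

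For part \eqref{item:count-exact-tight}, note that when $\sum_{(v,\sigma)\in S} J(\ell,v) = N_{\ell}$, equality must hold in the sum $c_{(\ell,1)}(S) + c_{(\ell,2)}(S) = B_{(\ell,1)}+B_{(\ell,2)}$, and therefore in each of the two constraints individually. In particular, by the first item of \Cref{obs:basic},
\[
\sum_{j \in D_{\ell}} w_j(S) \cdot \cF^{\ord_{\ell}(j)} \;=\; c_{(\ell,1)}(S) \;=\; B_{(\ell,1)} \;=\; \sum_{j \in D_{\ell}} m \cdot \cF^{\ord_{\ell}(j)}.
\]

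To conclude $w_j(S) = m$ for every $j \in D_{\ell}$, I will argue that both sides are \emph{valid base-$\cF$ representations} on the digit positions $\{\ord_{\ell}(j)\}_{j\in D_{\ell}}$. On the right-hand side this is immediate since $m \leq |\Pi| < \cF$. On the left-hand side, each per-item weight satisfies $w_j((v,\sigma)) \leq m$ by \Cref{def:DKPInstance}, so $w_j(S) \leq m \cdot |S| \leq m \cdot |V(G)| \cdot |\Sigma| < 3 F^2 \cdot m \cdot |V(G)| \cdot |\Sigma| = \cF$. Since $\ord_{\ell}$ is injective, uniqueness of base-$\cF$ representations forces $w_j(S) = m$ for every $j \in D_{\ell}$, proving part \eqref{item:count-exact-tight}.

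The only subtle step is the choice of base $\cF$: one must verify it is large enough so no ``carrying'' between digits is possible, which is precisely why \Cref{def:DKPInstance} sets $\cF = 3 F^2 m |V(G)| |\Sigma|$. Everything else reduces to the pairing of the two dimensions and a direct cancellation.
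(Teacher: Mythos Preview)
Your proof is correct and follows essentially the same approach as the paper: add the two budget constraints for dimensions $(\ell,1)$ and $(\ell,2)$ to get part \eqref{item:count-constraint}, then in the tight case deduce $c_{(\ell,1)}(S)=B_{(\ell,1)}$ and invoke uniqueness of base-$\cF$ representations (the paper states this as \Cref{obs:number}) after bounding $w_j(S)\le m\cdot|I|<\cF$. The details you spell out match the paper's almost line for line.
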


To prove this, we use the following fact that an integer can be uniquely represented in base-$N$.
	
	\begin{fact}
		\label{obs:number}
		Let $N \in \mathbb{N}$ and $a_1,\ldots,a_n,A \in \{0, \dots, N-1\}$ be such that $\sum_{i \in [n]} a_i \cdot N^i = \sum_{i \in [n]} A \cdot N^i$. Then, for all $i \in [n]$ it holds that $a_i = A$. 
	\end{fact}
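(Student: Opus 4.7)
The statement is simply the uniqueness of base-$N$ representations, phrased in a form convenient for the preceding soundness argument. My plan is to reduce to the standard ``digit dominance'' argument by subtracting the two sides.

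First I would set $b_i := a_i - A$ for every $i \in [n]$. Since $a_i, A \in \{0, \dots, N-1\}$, we have $b_i \in \{-(N-1), \dots, N-1\}$, and the given equality rewrites as $\sum_{i \in [n]} b_i \cdot N^i = 0$. The goal becomes: show $b_i = 0$ for all $i$.

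I would prove this by contradiction. Suppose some $b_i \neq 0$, and let $k$ be the largest such index. Isolating the leading term gives
\[
b_k \cdot N^k \;=\; -\sum_{i=1}^{k-1} b_i \cdot N^i.
\]
The left-hand side has absolute value at least $N^k$ (since $|b_k| \ge 1$). For the right-hand side I would use the triangle inequality together with the crude bound $|b_i| \le N-1$ to get
\[
\left|\sum_{i=1}^{k-1} b_i \cdot N^i\right| \;\le\; (N-1) \sum_{i=1}^{k-1} N^i \;=\; N^k - N \;<\; N^k,
\]
using the standard geometric sum $\sum_{i=1}^{k-1} N^i = \frac{N^k - N}{N-1}$. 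This contradicts the displayed equality, so no such $k$ exists and all $b_i = 0$, i.e.\ $a_i = A$ for every $i \in [n]$.

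There is no real obstacle here; the only subtlety is making sure the sum is indexed starting from $i=1$ (as written in the statement, with indices in $[n] = \{1, \dots, n\}$), so that the geometric sum bound $N^k - N < N^k$ is valid. One small edge case to mention is $N = 1$, in which the digit set $\{0, \dots, N-1\} = \{0\}$ forces $a_i = A = 0$ trivially; this is worth noting but requires no work. I would keep the write-up to a few lines since the reasoning is purely elementary.
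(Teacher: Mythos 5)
Your argument is correct: the subtraction-and-leading-digit (digit dominance) argument, with the bound $(N-1)\sum_{i=1}^{k-1} N^i = N^k - N < N^k$, is the standard proof of uniqueness of base-$N$ representations, and your handling of the degenerate case $N=1$ is fine. Note that the paper states this as a \emph{Fact} without any proof, treating it as folklore, so there is no paper argument to compare against; your write-up is exactly the elementary justification one would supply.
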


\begin{proof}[Proof of \Cref{lem:VKw}]
	%Let $\Pi = \left(G, \Sigma,\Upsilon, \{\pi_{e, u}, \pi_{e, v}\}_{e = (u,v)\in E(G)} \right)$ be an R-CSP instance, let $F \in \left[\abs{V(G)}\right]$, and let $\cI(\Pi,F) = (I,p,c,B)$ be the reduced instance defined in \Cref{def:DKPInstance}. In addition, let $S$ be a  solution for $\cI(\Pi,F)$. %, and let $m = |\Upsilon|$. 
Since $S$ is a solution for $\cI(\Pi, F)$,
\begin{align}
M \cdot N_{\ell} = B_{\ell, 1} + B_{\ell, 2} \geq c_{\ell, 1}(S) + c_{\ell, 2}(S) = M \cdot \sum_{(v, \sigma) \in S} J(\ell, v), \label{eq:count-constraint-temp}
\end{align}
where the last equality follows from the first two items of \Cref{obs:basic}. Thus, \Cref{item:count-constraint} holds. 
For \Cref{item:count-exact-tight}, note that for \eqref{eq:count-constraint-temp} to be an equality, we must have  $B_{\ell, 1} = c_{\ell, 1}(S)$ otherwise violating the feasibility of $S$. This means that
\begin{align}
\sum_{j \in D_{\ell}} m \cdot \cF^{\ord_{\ell}(j)} = B_{\ell, 1} = c_{\ell, 1}(S) = \sum_{j \in D_{\ell}} w_i(S) \cdot \cF^{\ord_{\ell}(j)}. \label{eq:base-num-equal}
\end{align}
Observe that $w_j(S) \leq |I| \cdot m < Q$ by our setting of parameters.
%\begin{equation}
%		\label{eq:obsS2}
%w_j(S) \leq |S| \cdot \max_{i \in S} w_j(i) \leq |S| \cdot m \leq |I| \cdot m = 	m \cdot \abs{V(G)} \cdot \abs{\Sigma} < 3 \cdot F^2 \cdot m \cdot \abs{V(G)} \cdot \abs{\Sigma} = \cF.
%	\end{equation} Observe that $S \neq \emptyset$ using \Cref{claim:VK-1}; thus, the first inequality holds. By \eqref{eq:obsS} and \eqref{eq:obsS2} it holds that $w_j(S) \in =[\cF-1]$ for all $j \in D_{\ell}$ such that $\sum_{j \in D_{\ell}} w_j(S) \cdot \cF^{\ord_{\ell}(j)} = \sum_{j \in D_{\ell}} m \cdot \cF^{\ord_{\ell}(j)}$. Hence, the proof follows from \Cref{obs:number}.
From this, \eqref{eq:base-num-equal} and \Cref{obs:number}, we can conclude that $w_j(S) = m$ for all $j \in D_\ell$ as desired.
\end{proof}

Using \Cref{lem:VKw}, we can give the second direction of the reduction. %Intuitively,  \Cref{lem:VKw} guarantees that for a solution $S$ of $\cI$ and $\ell \in [r]$ such that $\sum_{(v,\sigma) \in S} J(\ell,v) =  N_{\ell}$, we can construct an assignment $\varphi$

	\begin{lemma}
	\label{lem:VKdir2}
		{\bf (Soundness)} For any \textnormal{R-CSP} instance %with rectangular constraint instance 
	%$\Pi = (V, E, \Sigma, \{\pi_{e, u}, \pi_{e, v}\}_{e = (u,v)\in E})$ 
	$\Pi$, $F \in \left[\abs{V(G)}\right]$, %= \left(G, \{\Sigma_v\}_{v \in V(G)}, \{\Sigma_e\}_{e \in E(G)}, \{\pi_{e, u}, \pi_{e, v}\}_{e = (u,v)\in E}) \right)$
	and $q \in \mathbb{N}$, if there is a solution for $\cI(\Pi,F)$ of profit at least $\abs{V(G)} +2 \cdot \abs{E(G)} - q$, then $\MaxPar(\Pi) \geq \abs{V(G)} - 2 \cdot q \cdot  F$.  
\end{lemma}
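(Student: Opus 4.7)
The approach is to extract from the solution $S \subseteq I$ the set of ``good'' indices $\ell \in [r]$ for which the budget in dimensions $(\ell,1)$ and $(\ell,2)$ is used tightly, and then use \Cref{lem:VKw} to recover a near-unique choice of symbol for each vertex touched only by good indices. First, I would write the profit bound $p(S) \geq \sum_{\ell \in [r]} N_\ell - q$ using \Cref{obs:basic}(iii) together with the first property of \Cref{claim:NF}. For each $\ell \in [r]$ set $\delta_\ell := N_\ell - \sum_{(v,\sigma) \in S} J(\ell,v)$; by \Cref{lem:VKw}(i) each $\delta_\ell$ is a non-negative integer, and the profit bound gives $\sum_{\ell} \delta_\ell \leq q$. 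Let $L := \{\ell : \delta_\ell = 0\}$ and $B := [r]\setminus L$; then $|B| \leq q$.

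Next I would use \Cref{lem:VKw}(ii) to conclude that for every $\ell \in L$ and every constraint $j \in D_\ell$ one has $w_j(S) = m$. Since the $D_\ell$'s partition $D = V(G) \cup E(G)$, each $v \in V(G)$ lies in exactly one $D_{\ell(v)}$ and each edge $e$ lies in exactly one $D_{\ell(e)}$. For $v$ with $\ell(v) \in L$, the tightness $w_v(S) = m$ combined with $w_v((v,\sigma)) = m$ forces $|\Sigma_{v,S}| = 1$, yielding a uniquely determined symbol $\sigma_v$. I would then define the partial assignment
\[
\varphi(v) := \begin{cases} \sigma_v & \text{if } \ell(v) \in L \text{ and } \ell(e) \in L \text{ for every edge } e \in \Adj_G(v), \\ \perp & \text{otherwise.} \end{cases}
\]

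To verify consistency, let $e = (u,v) \in E(G)$ with $\varphi(u), \varphi(v) \ne \perp$. Then $\ell(u), \ell(v), \ell(e) \in L$, so $w_e(S) = m$; moreover, because $\ell(u),\ell(v) \in L$, the only items in $S$ contributing to $w_e$ are $(u,\sigma_u)$ and $(v,\sigma_v)$, giving $\pi_{e,u}(\sigma_u) + m - \pi_{e,v}(\sigma_v) = m$ and hence $\pi_{e,u}(\sigma_u) = \pi_{e,v}(\sigma_v)$. Finally, to bound $|\varphi|$, note that a vertex $v$ can fail to be assigned only if some $\ell \in B$ satisfies $J(\ell,v) \geq 1$ (either $v \in D_\ell$ or $v$ is an endpoint of some edge in $D_\ell$). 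Hence the number of unassigned vertices is at most $\sum_{\ell \in B} \sum_{v \in V(G)} \mathbbm{1}[J(\ell,v) \geq 1] \leq \sum_{\ell \in B} N_\ell \leq 2F \cdot |B| \leq 2qF$, where the middle inequality uses \Cref{claim:NF}(2). Therefore $|\varphi| \geq |V(G)| - 2qF$, as required.

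The only subtle point is handling edges across a ``good'' vertex and a ``bad'' endpoint: that is precisely why $\varphi$ is defined to demand $\ell(e) \in L$ for all incident edges. The counting step then absorbs both the vertex-constraints and the edge-constraints in $D_\ell$ simultaneously, via the crisp bound $N_\ell \leq 2F$, so no factor is lost beyond the claimed $2qF$.
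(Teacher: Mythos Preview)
Your proof is correct and follows essentially the same approach as the paper: both define the set of tight indices (your $L$, the paper's $T$), use \Cref{lem:VKw}(ii) to extract unique symbols and verify consistency, bound the number of non-tight indices by $q$ via the profit identity, and then charge each unassigned vertex to some non-tight $\ell$ using $N_\ell \leq 2F$. Your counting step via $\sum_{\ell \in B}\sum_v \mathbbm{1}[J(\ell,v)\geq 1] \leq \sum_{\ell \in B} N_\ell$ is in fact slightly cleaner than the paper's corresponding inequality.
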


\begin{proof}
		%Let $\Pi = \left(G, \Sigma,\Upsilon, \{\pi_{e, u}, \pi_{e, v}\}_{e = (u,v)\in E(G)},\perp \right)$ be an R-CSP instance, let $F \in \left[\abs{V(G)}\right]$, and let $\cI(\Pi,F) = (I,p,c,B)$ be the reduced instance defined in \Cref{def:DKPInstance}. In addition, let $q \in \mathbb{N}$ and let $S$ be a  solution for $\cI(\Pi,F)$ of profit at least $\abs{V(G)} +2 \cdot \abs{E(G)}-q$. 
		For every $j \in D$, let $\ell(j) \in [r]$ be such that $j$ belongs to $D_{\ell(j)}$.
		%For all $\ell \in [r]$ and $j \in D_{\ell}$, with a slight abuse of notation define $\ell(j) = \ell$.
		In addition, %let $T = \left\{ \ell \in D_{\ell} \mid \abs{S \cap I_{\ell}} = N_{\ell} \right\}$ be the set of {\em tight constraints} of $S$.  
		let $$T = \left\{ \ell \in [r] ~\bigg|~ \sum_{(v,\sigma) \in S} J(\ell,v) = N_{\ell} \right\}$$ be the set of {\em tight constraints} of $S$.  
		Define 
		\begin{equation}
			\label{eq:Vstar}
			V^* = %\left\{   v \in V(G) ~\bigg|~   \abs{S \cap I_{\ell(v)}} = N_{\ell(v)} \right\} \cap 
			\left\{   v \in V(G) ~\bigg|~   \ell(x) \in T ~\forall x \in \left(\Adj_G(v) \cup \{v\} \right) \right\}.
		\end{equation} % words, $V^*$ contains all vertices $v$ such that (i) the set $D_{\ell(v)}$ 
		The set $V^*$ satisfies the following crucial property.
		
			\begin{claim}
			\label{claim:Vstar}
			For every $v \in V^*$ there is exactly one $\sigma_v \in \Sigma$ such that $(v,\sigma_v) \in S$.
		\end{claim}
		\begin{claimproof}
			%Let $v \in V^*$. 
			By \eqref{eq:Vstar}, $\ell(v)$ must be {\em tight}, i.e., $\sum_{(u,\sigma) \in S} J(\ell(v),u) = N_{\ell(v)}$.  Thus, by \Cref{item:count-exact-tight} of \Cref{lem:VKw},
			$$m = w_v(S) = m \cdot \left| \left\{\sigma \in \Sigma \mid (v,\sigma) \in S \right\}\right|,$$
			%$$\left| \left\{ (v,\sigma) \in S \right\}\right| \cdot m = \sum_{(v,\sigma) \in S} w_v((v,\sigma)) = \sum_{i \in S} w_v(i) = w_v(S) = m$$
			%The first and second equalities follow by the definition of $w$. The last equality follows from by \Cref{item:count-exact-tight} of \Cref{lem:VKw} as $\sum_{(u,\sigma) \in S} J(\ell(v),u) = N_{\ell(v)}$. Thus, by the above we conclude that $\left| \left\{ (v,\sigma) \in S \right\}\right| = 1$, 
			which implies the statement of the claim. 
		\end{claimproof}
		
		For all $v \in V^*$ denote by $\sigma_v \in \Sigma$ the unique symbol satisfying that $(v,\sigma_v) \in S$; by \Cref{claim:Vstar} it holds that $\sigma_v$ is well defined for every $v \in V^*$. Define $\varphi: V(G) \rightarrow \Sigma \cup \{\perp\}$ by
		$$\varphi(v) = \begin{cases}
			\sigma_v, & \textnormal{if } v \in V^* \\
			\perp, & \textnormal{else }
		\end{cases}$$
		for all $v \in V(G)$. We first prove the consistency of $\varphi$.
		
			\begin{claim}
			\label{claim:Consistency}
			$\varphi$ is a consistent partial assignment for $\Pi$.
		\end{claim}
		\begin{claimproof}
			Let $e = (u,v) \in E(G)$ such that $\varphi(u) \neq \perp$ and $\varphi(v) \neq \perp$. %; we need to show that $\pi_{e,u}(\varphi(u)) = \pi_{e,v}(\varphi(v))$. 
			Since $\varphi(u) \neq \perp$ and $\varphi(v) \neq \perp$ it follows from the definition of $\varphi$ that $u,v \in V^*$. Therefore, by \eqref{eq:Vstar} it holds that $\ell(e)\in T$ implying that $\sum_{(x,\sigma) \in S} J(\ell(e),x) = N_{\ell(e)}$. 
			Thus, by \Cref{item:count-exact-tight} of \Cref{lem:VKw}, we have
			\begin{align*}
			m = w_e(S) = w_e((u,\sigma_u))+ w_e((v,\sigma_v)) = \pi_{e,u}(\varphi(u))+m-	\pi_{e,v}(\varphi(v)).
			\end{align*}
			It follows that $\pi_{e,u}(\varphi(u)) = \pi_{e,v}(\varphi(v))$. Thus, $\varphi$ is a consistent partial assignment for $\Pi$. 
			%Thus, %by \Cref{lem:VKw} we have that 
			%\begin{equation}
			%	\label{eq:VstarConst}
			%	\begin{aligned}
			 %\pi_{e,u}(\varphi(u))+m-	\pi_{e,v}(\varphi(v)) ={} & \pi_{e,u}(\sigma_u)+m-	\pi_{e,v}(\sigma_v) 
			 %\\={} &
			 % w_e((u,\sigma_u))+ w_e((v,\sigma_v)) 
			 %\\={} &
			 %  \sum_{(u,\sigma) \in S} w_e((u,\sigma))+\sum_{(v,\sigma) \in S} w_e((v,\sigma)) 
			%\\={} &
			%    \sum_{i \in S} w_e(i) 
			%\\={} &
			 %    w_e(S) 
			 %  			\\={} &
			 %     m. 
			%	\end{aligned}
			%\end{equation}
			%The first equality follows from the definition of $\varphi$. The second and fourth equalities follow by the definition of $w$. The third equality holds since $\sigma_v$ and $\sigma_u$ are unique by \Cref{claim:Vstar}. 
			%The last equality follows from by \Cref{item:count-exact-tight} of \Cref{lem:VKw} as $\sum_{(x,\sigma) \in S} J(\ell(e),x) = N_{\ell(e)}$. By \eqref{eq:VstarConst} it follows that $\pi_{e,u}(\varphi(u)) = \pi_{e,v}(\varphi(v))$ as required. We conclude that $\varphi$ is a a consistent partial assignment for $\Pi$. 
		\end{claimproof}
		
		%By \Cref{claim:Consistency} 
		It remains to give a lower bound on the size of $\varphi$. To do so, we will need the following bound on the number of non-tight indices.   
		
			\begin{claim}
			\label{claim:nonTight}
			$\left|[r]\setminus T\right| \leq q$. 
		\end{claim}
		\begin{claimproof}
			First, from the third item of \Cref{obs:basic}, we have
			\begin{align*}
	p(S) =
	\sum_{\ell \in [r]~} \sum_{(v,\sigma) \in S}  J(\ell,v)
		={} 
		\sum_{\ell \in T} \sum_{(v,\sigma) \in S}  J(\ell,v)+\sum_{\ell \in [r]\setminus T} \sum_{(v,\sigma) \in S}  J(\ell,v)
		={} 
		\sum_{\ell \in T} N_{\ell}+\sum_{\ell \in [r]\setminus T} \sum_{(v,\sigma) \in S}  J(\ell,v)
		\end{align*}
		Now, for every $\ell \in [r]\setminus T$, it holds that $\sum_{(v,\sigma) \in S}  J(\ell,v) \neq N_{\ell}$; \Cref{item:count-constraint} of \Cref{lem:VKw} and the fact that $J$  assigns only integral value then implies that $\sum_{(v,\sigma) \in S}  J(\ell,v) \leq N_{\ell}-1$. Plugging this into the above, we get
		\begin{align*}
		p(S) \leq \sum_{\ell \in T} N_{\ell}+\sum_{\ell \in [r]\setminus T} \left(N_{\ell}-1\right) =  \sum_{\ell \in [r]} N_{\ell}-\abs{[r]\setminus T} = \abs{V(G)}+2 \cdot \abs{E(G)}-\abs{[r]\setminus T},
		\end{align*}
		where the last equality follows from \Cref{claim:NF}.
		Finally, the claim follows since we assume that $p(S) \geq \abs{V(G)}+2 \cdot \abs{E(G)}- q$.
		\end{claimproof}
		
		To conclude the soundness proof, observe that
		\begin{equation}
			\label{eq:Vstarcomp}
			\begin{aligned}
			\abs{V(G) \setminus V^*} ={} & \abs{ \left\{ v \in V(G) ~\bigg|~ \exists x \in \left(\Adj_G(v) \cup \{v\}\right) \textnormal{ s.t. } \ell(x) \in [r] \setminus T \right\}}
			\\\leq {} &
			\sum_{\ell \in [r]  \setminus T~} \sum_{(v,\sigma) \in S}  J(\ell,v)%\abs{V_{\ell}} 
					\\\leq {} &
			\sum_{\ell \in [r]  \setminus T} N_{\ell}
%				\\\leq {} &
%			\sum_{\ell \in [r] \textnormal{ s.t. } \abs{S \cap I_{\ell}} < N_{\ell}} \abs{V_{\ell}} 
%				\\= {} &
%			\sum_{\ell \in [r] \textnormal{ s.t. } \abs{S \cap I_{\ell}} < N_{\ell}} N_{\ell} 
					\\\leq {} &
			\sum_{ \ell \in [r]  \setminus T} 2 \cdot F
				\\= {} &
		\abs{[r] \setminus T} \cdot 2 \cdot F
			\\\leq {} &
		2 \cdot q  \cdot F. 
			\end{aligned}
		\end{equation} The first inequality holds since for each $v \in V(G)$ such that  there is $x \in \left(\Adj_G(v) \cup \{v\}\right)$ satisfying $\ell(x) \in [r] \setminus T$, it also holds that $J(\ell(x),v) \geq 1$. The second inequality is due to \Cref{item:count-constraint} of \Cref{lem:VKw}. The third inequality follows from \Cref{claim:NF}.  The last inequality uses \Cref{claim:nonTight}. Hence, 
		\begin{equation}
			\label{eq:sizeVARPHI}
			\abs{\varphi} = \abs{V^*} = \abs{V(G)}-\abs{V(G) \setminus V^*} \overset{\eqref{eq:Vstarcomp}}{\geq} \abs{V(G)}-2 \cdot q  \cdot F.  
		\end{equation} By \Cref{claim:Consistency} and the above inequality, the soundness proof follows. 
\end{proof}

\comment{
\begin{proof}
	Let $\Pi = \left(G, \Sigma,\Upsilon, \{\pi_{e, u}, \pi_{e, v}\}_{e = (u,v)\in E(G)},\perp \right)$ be an R-CSP instance, let $F \in \left[\abs{V(G)}\right]$, and let $\cI(\Pi,F) = (I,p,c,B)$ be the reduced instance defined in \Cref{def:DKPInstance}. %, and let $m = |\Upsilon|$. 
	
	We start with an auxiliary claim 
	
		\begin{claim}
		\label{claim:NF}
		For all $\ell \in [r]$ it holds that $N_{\ell} \leq 2 \cdot F$.
	\end{claim}
	\begin{claimproof}
		By the construction in \Cref{def:DKPInstance}, it holds that 
		$$N_{\ell} = \abs{V_{\ell}} \leq \abs{D_{\ell} \cap V(G)} +\abs{\bigcup_{(u,v) \in D_{\ell} \cap E(G)} \{u,v\}} \leq \sum_{v \in D_{\ell} \cap V(G)} 1+\sum_{(u,v) \in D_{\ell} \cap E(G)} 2 \leq \sum_{j \in D_{\ell}} 2 = 2 \cdot \abs{D_{\ell}} \leq 2 \cdot F.$$
By the above, the proof of the claim follows. 
	\end{claimproof}
	
	In addition, let $S$ be a  solution for $\cI(\Pi,F)$ of profit $q$. %For every $\ell \in [r]$ and every $j \in D_{\ell}$ let $\ell(j) = \ell$. 
	Let $T = \left\{ \ell \in D_{\ell} \mid \abs{S \cap I_{\ell}} = N_{\ell} \right\}$ be the set of {\em tight constraints}.  We use the following equality for tight dimensions. %first show a bound on $L_{\ell}$
	
		\begin{claim}
		\label{claim:VK-1}
			For all $\ell \in T$ it holds that %$w_j(i)(S) = m$ for all $j \in D_{\ell}$.
		$$\sum_{i \in S} \sum_{j \in D_{\ell}} w_j(i) \cdot \cF^{\ord_{\ell}(j)}  = \sum_{j \in D_{\ell}} m \cdot \cF^{\ord_{\ell}(j)}.$$ 
%	For all $\ell \in [r]$ it holds that $\ell \in T$ if and only if %$w_j(i)(S) = m$ for all $j \in D_{\ell}$.
%		$$\sum_{i \in S} \sum_{j \in D_{\ell}} w_j(i) \cdot \cF^{\ord_{\ell}(j)} = \sum_{j \in D_{\ell}} m \cdot \cF^{\ord_{\ell}(j)}.$$
		%$c_{(\ell,1)} (S) =c_{(\ell,2)} (S)= B_{\ell,1}$. 
	\end{claim}
	\begin{claimproof}
		For short, let $L = \sum_{i \in S} \sum_{j \in D_{\ell}} w_j(i) \cdot \cF^{\ord_{\ell}(j)}$.  We show that $$L = \sum_{j \in D_{\ell}} m \cdot \cF^{\ord_{\ell}(j)}.$$ First, by the budget constraint in dimension $(\ell,1)$: 
		\begin{equation}
			\label{eq:VK-1:1}
			\begin{aligned}
			 L = {} & \sum_{i \in S} \sum_{j \in D_{\ell}} w_j(i) \cdot \cF^{\ord_{\ell}(j)} 
			 \\={} &
			  \sum_{i \in S \cap I_{\ell}} \sum_{j \in D_{\ell}} w_j(i) \cdot \cF^{\ord_{\ell}(j)} 
			  \\={} & 
			   \sum_{i \in S \cap I_{\ell}} c_{(\ell,1)}(i) 
				  \\={} & 
			    c_{(\ell,1)} \left(S \cap I_{\ell}\right) 
			   \\\leq{} & c_{(\ell,1)} \left(S\right) 
			    \\\leq{} &
			    B_{(\ell,1)} 
				  \\={} & 
			    \sum_{j \in D_{\ell}} m \cdot \cF^{\ord_{\ell}(j)}. 
			\end{aligned}
		\end{equation} The first equality holds since for every $j \in D_{\ell}$ and $i = (v,\sigma) \in I \setminus I_{\ell}$ it holds that $v \notin V_{\ell}$; thus, $w_j(i) = 0$. The second inequality holds since $S$ is a solution for $\cI(\Pi,F)$ and satisfies the budget constraint in dimension $(\ell,1)$. Second, by the budget constraint in dimension $(\ell,2)$: 
			\begin{equation}
			\label{eq:VK-1:2}
			\begin{aligned}
					M \cdot N_{\ell} - L = {} & 
				M \cdot N_{\ell} - 	 \sum_{i \in S} \sum_{j \in D_{\ell}} w_j(i) \cdot \cF^{\ord_{\ell}(j)}
				\\={} &
				M \cdot \abs{S \cap I_{\ell}} - 	 \sum_{i \in S} \sum_{j \in D_{\ell}} w_j(i) \cdot \cF^{\ord_{\ell}(j)}
				\\={} &
				M \cdot \abs{S \cap I_{\ell}} - 	 \sum_{i \in S \cap I_{\ell}} \sum_{j \in D_{\ell}} w_j(i) \cdot \cF^{\ord_{\ell}(j)}
				\\={} &
			 \sum_{i \in S \cap I_{\ell}} \left(	M-\sum_{j \in D_{\ell}} w_j(i) \cdot \cF^{\ord_{\ell}(j)} \right)
			\\={} &
			%	 \sum_{i \in S \cap I_{\ell}} \left(	M-\sum_{j \in D_{\ell}} w_j(i) \cdot \cF^{\ord_{\ell}(j)} \right)
				% \\={} &
				  \sum_{i \in S \cap I_{\ell}} c_{(\ell,2)}(i)
				 \\={} &
			 c_{(\ell,2)}(S \cap I_{\ell})
				 \\\leq{} &
				c_{(\ell,t)} (S) 
				\\\leq{} & 	
				B_{(\ell,t)} 
				\\={} &
					M \cdot N_{\ell} - 	\sum_{j \in D_{\ell}} m \cdot \cF^{\ord_{\ell}(j)} 
			\end{aligned}
		\end{equation} The first equality holds since $\ell \in T$. The second equality holds since for every $j \in D_{\ell}$ and $i = (v,\sigma) \in I \setminus I_{\ell}$ it holds that $w_j(i) = 0$. The second inequality holds since $S$ is a solution for $\cI(\Pi,F)$ and satisfies the budget constraint in dimension $(\ell,2)$. Hence, by \eqref{eq:VK-1:1} and \eqref{eq:VK-1:2} it follows that  $L \leq \sum_{j \in D_{\ell}} m \cdot \cF^{\ord_{\ell}(j)}$ and $L \geq \sum_{j \in D_{\ell}} m \cdot \cF^{\ord_{\ell}(j)}$, respectively. Thus, it follows that 
			$L = \sum_{j \in D_{\ell}} m \cdot \cF^{\ord_{\ell}(j)}$ as required.  
	\end{claimproof}

	We use the following auxiliary claim.
	
		\begin{claim}
		\label{claim:VK>}
			For all $\ell \in T$ and $j \in D_{\ell}$ it holds that 
		$$\cF^{\ord_{\ell}(j)}> \sum_{z \in D_{\ell} \textnormal{ s.t. } \ord_{\ell}(z)<\ord_{\ell}(j)} m \cdot \cF^{\ord_{\ell}(z)}.$$
	\end{claim}
	\begin{claimproof}
		It holds that 
			\begin{equation*}
				\label{eq:ContVK1}
				\begin{aligned}
				 \cF^{\ord_{\ell}(j)} 
					={} & \cF \cdot \cF^{\ord_{\ell}(j)-1} 
					\\={} &3 \cdot F^2 \cdot m \cdot \cF^{\ord_{\ell}(j)-1} 
				%		\\={} & 3 \cdot F \cdot m \cdot \cF^{\ord_{\ell}(j)-1} 
					\\>{} &  F \cdot m \cdot \cF^{\ord_{\ell}(j)-1}
				%	\\={} & \abs{S \cap I_{\ell}} \cdot F \cdot m \cdot \cF^{\ord_{\ell}(j)-1}
					%	\\={} &  \sum_{i \in S \cap I_{\ell}}  F \cdot m \cdot \cF^{\ord_{\ell}(j)-1}
							\\\geq{} &  \sum_{z \in D_{\ell} \textnormal{ s.t. } \ord_{\ell}(z)<\ord_{\ell}(j)} m \cdot \cF^{\ord_{\ell}(j)-1}
							%	\\\geq{} &  \sum_{i \in S \cap I_{\ell}~}  \sum_{z \in D_{\ell} \textnormal{ s.t. } \ord_{\ell}(z)<\ord_{\ell}(j)} m \cdot \cF^{\ord_{\ell}(z)}
								%	\\\geq{} &  \sum_{i \in S \cap I_{\ell}~}  \sum_{z \in D_{\ell} \textnormal{ s.t. } \ord_{\ell}(z)<\ord_{\ell}(j)} w_{z}(i) \cdot \cF^{\ord_{\ell}(z)} 
									%	\\={} &  \sum_{i \in S~}  \sum_{z \in D_{\ell} \textnormal{ s.t. } \ord_{\ell}(z)<\ord_{\ell}(j)} w_{z}(i) \cdot \cF^{\ord_{\ell}(z)} 
				\end{aligned}
			\end{equation*} %The first inequality follows from \Cref{claim:NF}. 
			The first inequality holds since $m,F \geq 1$. The second inequality holds since $|D_{\ell}| \leq F$; thus, the number of entries $z \in D_{\ell}$ such that  $\ord_{\ell}(z)<\ord_{\ell}(j)$ is bounded from above by $F$. %The last inequality holds since for all $i \in S$ and $z \in D_{\ell}$ it holds that $w_z(i) \leq m$. %The last equality holds since for all $i \in S \setminus I_{\ell}$ and $z \in D_{\ell}$ it holds that $w_z(i) = 0$. 
			The above implies the statement of the claim. 
	\end{claimproof}
	
	Using the above claims, we can give a stronger property of the solution $S$. 
	
		\begin{claim}
		\label{claim:VK-2}
		For all $\ell \in T$ and $j \in D_{\ell}$ it holds that $w_j(S) = m$. 
	\end{claim}
	\begin{claimproof}
		We prove the claim by induction for all $n \in \left[\abs{D_{\ell}}+1\right]$ in a decreasing order that, if $\ord_{\ell}(j) = n$ then $w_j(S) = m$. Clearly, it never holds that $\ord_{\ell}(j) = n+1$; however, it simplifies the base case of the induction. For the base case, note that there is no $z \in D_{\ell}$ such that $\ord_{\ell}(z) = n+1$. Hence, the base case follows as a vacuous  truth. Assume %that $\ord_{\ell}(j) < \abs{D_{\ell}}$ and 
		that for all $z \in D_{\ell}$ such that $\ord_{\ell}(j) < \ord_{\ell}(z)$ it holds that $w_z(S) = m$. We show that $w_j(S) = m$. %; the proof is almost analogous to the base case. %and we give the full details for completeness. 
		Assume towards a contradiction that $w_j(S) \neq m$. Since $w_j(S) \in \mathbb{N}$, it suffices to 
		consider the following cases.
		%
		%on $\ord_{\ell}(j)$, where the base case is $\ord_{\ell}(j) = \abs{D_{\ell}}$ going down to  $\ord_{\ell}(j) = 1$. For the base case, 
%		assume that $\ord_{\ell}(j) = n+1$. Assume towards a contradiction that $w_j(S) \neq m$. Since $w_j(S) \in \mathbb{N}$, it suffices to consider the following cases.
		\begin{itemize}
				\item $w_j(S) \geq m+1$.  Then,	\begin{equation}
					\label{eq:ContVK1}
					\begin{aligned}
						&  \sum_{i \in S} \sum_{z \in D_{\ell} \textnormal{ s.t. } \ord_{\ell}(z) \leq \ord_{\ell}(z)} w_{z}(i) \cdot \cF^{\ord_{\ell}(z)} 
						\\\geq{} & 	\sum_{i \in S}  w_{j}(i) \cdot \cF^{\ord_{\ell}(j)} 
						\\={} & w_{j}(S) \cdot \cF^{\ord_{\ell}(j)} 
							\\\geq{} & (m+1) \cdot \cF^{\ord_{\ell}(j)} 
							\\={} & \cF^{\ord_{\ell}(j)} +m \cdot \cF^{\ord_{\ell}(j)} 
							\\>{} &
						 \sum_{z \in D_{\ell} \textnormal{ s.t. } \ord_{\ell}(z)<\ord_{\ell}(j)} \left(m \cdot \cF^{\ord_{\ell}(z)}\right)+m \cdot \cF^{\ord_{\ell}(j)} 
						 	\\={} &
						 \sum_{z \in D_{\ell} \textnormal{ s.t. } \ord_{\ell}(z) \leq \ord_{\ell}(z)} m \cdot \cF^{\ord_{\ell}(z)}.
					\end{aligned}
				\end{equation} The first inequality holds since $j \in D_{\ell}$. The second inequality follows from the assumption that $w_j(S) \geq m+1$. The third inequality holds by \Cref{claim:VK>}. The last equality holds since $\ord_{\ell}(j) = \abs{D_{\ell}}$.  Thus, by \eqref{eq:ContVK1},
				\begin{equation}
					\label{eq:Contvkf}
					\begin{aligned}
				&  \sum_{i \in S} \sum_{z \in D_{\ell}} w_{z}(i) \cdot \cF^{\ord_{\ell}(z)} 
				\\={} &
					  \sum_{i \in S} \sum_{z \in D_{\ell} \textnormal{ s.t. } \ord_{\ell}(z) \leq \ord_{\ell}(z)} w_{z}(i) \cdot \cF^{\ord_{\ell}(z)} + \sum_{i \in S} \sum_{z \in D_{\ell} \textnormal{ s.t. } \ord_{\ell}(z) > \ord_{\ell}(z)} w_{z}(i) \cdot \cF^{\ord_{\ell}(z)}
					\\>{} &
					  \sum_{i \in S} \sum_{z \in D_{\ell} \textnormal{ s.t. } \ord_{\ell}(z) \leq \ord_{\ell}(z)} m \cdot \cF^{\ord_{\ell}(z)} + \sum_{i \in S} \sum_{z \in D_{\ell} \textnormal{ s.t. } \ord_{\ell}(z) > \ord_{\ell}(z)} m \cdot \cF^{\ord_{\ell}(z)}
					    \\={} &
					  \sum_{i \in S} \sum_{z \in D_{\ell}} m \cdot \cF^{\ord_{\ell}(z)} 
					\end{aligned}
				\end{equation} The inequality follows from \eqref{eq:ContVK1} and the induction hypothesis. 
			
			\item $w_j(S) \leq m-1$. Then, 
	\begin{equation}
	\label{eq:ContVK2}
	\begin{aligned}
		& \sum_{i \in S~} \sum_{z \in D_{\ell} \textnormal{ s.t. } \ord_{\ell}(z) \leq \ord_{\ell}(z)} w_{z}(i) \cdot \cF^{\ord_{\ell}(z)} 
			\\={} & 
		\sum_{i \in S~} \sum_{z \in D_{\ell} \textnormal{ s.t. } \ord_{\ell}(z) < \ord_{\ell}(z)} w_{z}(i) \cdot \cF^{\ord_{\ell}(z)} +\sum_{i \in S} w_{j}(i) \cdot \cF^{\ord_{\ell}(j)}
			\\={} & 
		\sum_{i \in S \cap I_{\ell}~} \sum_{z \in D_{\ell} \textnormal{ s.t. } \ord_{\ell}(z) < \ord_{\ell}(z)} w_{z}(i) \cdot \cF^{\ord_{\ell}(z)} +w_{j}(S) \cdot \cF^{\ord_{\ell}(j)}
			\\\leq{} & 
		\sum_{i \in S \cap I_{\ell}~} \sum_{z \in D_{\ell} \textnormal{ s.t. } \ord_{\ell}(z) < \ord_{\ell}(z)} m \cdot \cF^{\ord_{\ell}(z)} +w_{j}(S) \cdot \cF^{\ord_{\ell}(j)}
			\\\leq{} & 
	\abs{S \cap I_{\ell}} \cdot F \cdot m \cdot \cF^{\ord_{\ell}(z)} +w_{j}(S) \cdot \cF^{\ord_{\ell}(j)}
	%	\\\leq{} & 	\abs{S \cap I_{\ell}} \cdot F \cdot m \cdot \cF^{\ord_{\ell}(j)-1}+\sum_{i \in S} w_{j}(i) \cdot \cF^{\ord_{\ell}(j)}
		\\={} & 	N_{\ell} \cdot F \cdot m \cdot \cF^{\ord_{\ell}(j)-1}+ w_{j}(S) \cdot \cF^{\ord_{\ell}(j)}
		\\\leq{} & 	2 \cdot F^2 \cdot m \cdot \cF^{\ord_{\ell}(j)-1}+w_{j}(S) \cdot \cF^{\ord_{\ell}(j)}
		\\<{} &  \cF^{\ord_{\ell}(j)}+ (m-1) \cdot \cF^{\ord_{\ell}(j)}
		\\={} & m \cdot \cF^{\ord_{\ell}(j)}
		\\\leq{} &
			 \sum_{z \in D_{\ell} \textnormal{ s.t. } \ord_{\ell}(z) \leq \ord_{\ell}(z)} m \cdot \cF^{\ord_{\ell}(z)}.
%		\\\leq{} & \sum_{j' \in D_{\ell}} m \cdot \cF^{\ord_{\ell}(j')}. 
	\end{aligned}
\end{equation}
The second equality holds since  for all $z \in D_{\ell}$ and $i \in S \setminus I_{\ell}$ it holds that $w_{z}(i) = 0$. The first inequality holds %$|D_{\ell}| \leq F$ and 
since for all $z \in D_{\ell}$ and $i \in S \cap I_{\ell}$ it holds that $w_{z}(i) \leq m$. The second inequality holds since $|D_{\ell}| \leq F$. The third equality follows from the fact that $\ell \in T$. The third inequality follows from \Cref{claim:NF}. %holds since $N_{\ell} = \abs{V_{\ell}} \leq 2 \cdot F$. 
			The fourth (strict) inequality holds by the assumption $w_j(S) \leq m-1$ and since $\cF > 2 \cdot F \cdot m$. %The fourth inequality holds since $F,m>0$. 
			Thus, by \eqref{eq:ContVK2},
			\begin{equation}
				\label{eq:Contvkf2}
				\begin{aligned}
					&  \sum_{i \in S} \sum_{z \in D_{\ell}} w_{z}(i) \cdot \cF^{\ord_{\ell}(z)} 
					\\={} &
					\sum_{i \in S} \sum_{z \in D_{\ell} \textnormal{ s.t. } \ord_{\ell}(z) \leq \ord_{\ell}(z)} w_{z}(i) \cdot \cF^{\ord_{\ell}(z)} + \sum_{i \in S} \sum_{z \in D_{\ell} \textnormal{ s.t. } \ord_{\ell}(z) > \ord_{\ell}(z)} w_{z}(i) \cdot \cF^{\ord_{\ell}(z)}
					\\<{} &
					\sum_{i \in S} \sum_{z \in D_{\ell} \textnormal{ s.t. } \ord_{\ell}(z) \leq \ord_{\ell}(z)} m \cdot \cF^{\ord_{\ell}(z)} + \sum_{i \in S} \sum_{z \in D_{\ell} \textnormal{ s.t. } \ord_{\ell}(z) > \ord_{\ell}(z)} m \cdot \cF^{\ord_{\ell}(z)}
					\\={} &
					\sum_{i \in S} \sum_{z \in D_{\ell}} m \cdot \cF^{\ord_{\ell}(z)} 
				\end{aligned}
			\end{equation} The inequality follows from \eqref{eq:ContVK2} and the induction hypothesis. 
		\end{itemize} By \eqref{eq:Contvkf} and \eqref{eq:Contvkf2} in both cases we reach a contradiction to \Cref{claim:VK-1}. It follows that $w_j(S) = m$. 
	\end{claimproof}
%	Define
%	\begin{equation}
%		\label{eq:SvRed}
%		V_S = \left\{  v \in V(G) \mid     \right\} 
%	\end{equation}	

\end{proof}

\begin{proof}
	Let $\Pi = \left(G, \Sigma,\Upsilon, \{\pi_{e, u}, \pi_{e, v}\}_{e = (u,v)\in E(G)},\perp \right)$ be an R-CSP instance, let $F \in \left[\abs{V(G)}\right]$, and let $\cI(\Pi,F) = (I,p,c,B)$ be the reduced instance defined in \Cref{def:DKPInstance}. %, and let $m = |\Upsilon|$. 
	In addition, let $S$ be a  solution for $\cI(\Pi,F)$ of profit $q$. %Let $\Sigma = \bigcup_{v \in V(G)} \Sigma_v$ for brevity. 
	Define $S_V = \{v \in V(G)~|~\exists \sigma \in \Sigma \text{ s.t. } (v,\sigma) \in S\}$ as all vertices appearing in the solution $S$. Observe that for all $v \in S_V$ there is a unique $\sigma_v \in \Sigma$ such that $(v,\sigma_v) \in S$. To see this, note that for all $\sigma \in \Sigma$ it holds that $c_v\left((v,\sigma)\right) = m$ and $B_j = m$; thus, since $S$ is a feasible solution for $\cI(\Pi,F)$ and in particular satisfies the $v$-th budget constraint the uniqueness of $\sigma_v$ follows. Thus, for all $v \in S_V$ let $\sigma_v$ be the unique symbol in $\Sigma$ such that  $(v,\sigma_v) \in S$. %and observe that the above discussion shows that $\sigma_v$ is well defined. 
	Define a partial assignment $\varphi: V(G) \rightarrow \Sigma \cup \{\perp\}$ such that for all $v \in V(G)$ define
	$$\varphi(v) = \begin{cases}
		\sigma_v, & \textnormal{if } v \in S_V \\
		\perp, & \textnormal{else } 
	\end{cases}$$
	Since $\sigma_v$ is unique for every $v \in S_V$ it holds that $\varphi$ is a well defined partial assignment. Observe that %It is not hard to see that 
	\begin{equation}
		\label{eq:qProfit}
		|\{v \in V(G) \mid \varphi(v) \ne \perp\}| = |\{v \in V(G) \mid v \in S_V\}| = |S_V| = |S| =  \sum_{i \in S} p(i) = p(S) = q.
	\end{equation}
	The third equality follows since $\sigma_v$ is unique for every $v \in S_V$. From \eqref{eq:qProfit} we conclude that the size of $\varphi$ is $q$. It remains to prove that $\varphi$ is consistent. Consider some $e \in E(G)$, where $e = (u,v)$, such that $\varphi(u) \neq \perp$ and $\varphi(v) \neq \perp$; we prove that $\pi_{e,u} \left(\varphi(u)\right) = \pi_{e,v} \left(\varphi(v)\right)$. From the feasibility of $S$ in dimension $j_1 = (e,1)$ it holds that 
	\begin{equation}
		\label{eq:j1Dim}
		\begin{aligned}
		  \left(m-\pi_{e,u} \left(\varphi(u)\right)\right)+\pi_{e,v} \left(\varphi(v)\right) ={} & c_{j_1}((u,\varphi(u)))+c_{j_1}((v,\varphi(v))) 
		  \\={} &
		   c_{j_1}((u,\sigma_u))+c_{j_1}((v,\sigma_v)) 
		   \\\leq{} &
		    c_{j_1}(S) 
		    \\\leq{} &
		  B_{j_1}
		     \\={} &
		      m. 
		\end{aligned}
	\end{equation} In addition, using the feasibility of $S$ in dimension $j_2 = (e,2)$ we have
		\begin{equation}
		\label{eq:j2Dim}
		\begin{aligned}
		\pi_{e,u} \left(\varphi(u)\right)+	\left(m-\pi_{e,v} \left(\varphi(v)\right)\right) ={} & c_{j_2}((u,\varphi(u)))+c_{j_2}((v,\varphi(v))) 
			\\={} &
			c_{j_2}((u,\sigma_u))+c_{j_2}((v,\sigma_v)) 
			\\\leq{} &
			c_{j_2}(S) 
			\\\leq{} &
			B_{j_2} 
			\\={} &
			m. 
		\end{aligned}
	\end{equation} From \eqref{eq:j1Dim} and \eqref{eq:j2Dim} it follows that $\pi_{e,v} \left(\varphi(v)\right) \leq	\pi_{e,u} \left(\varphi(u)\right)$ and $	\pi_{e,u} \left(\varphi(u)\right) \leq	\pi_{e,v} \left(\varphi(v)\right)$, respectively. Thus, $\pi_{e,v} \left(\varphi(v)\right) = \pi_{e,u} \left(\varphi(u)\right)$ implying that $\varphi$ satisfies $e$ which in turns shows that $\varphi$ is consistent. This sums up the proof. 
\end{proof}
}

\comment{
Next, we give a lower bound on the encoding size of the reduced instances.  
\begin{lemma}
	\label{lem:VK:I}
	For any \textnormal{R-CSP} instance 
	$\Pi = \left(G, \Sigma,\Upsilon, \{\pi_{e, u}, \pi_{e, v}\}_{e = (u,v)\in E(G)},\perp \right)$ and $F \in \left[\abs{V(G)}\right]$ it holds that 
	$$\abs{\cI(\Pi,F)} = O \left(\log (|\Upsilon| \cdot F) \cdot  \abs{V(G)} \cdot \abs{ \Sigma} \cdot \left(|V(G)|+|E(G)|\right)  \right).$$
\end{lemma}

\begin{proof}
	Let $\cI(\Pi,F) = (I,p,d,c,B)$. By \Cref{def:DKPInstance} the number of items of $\cI(\Pi,F)$ can be expressed as $|I| = \abs{V(G)} \cdot \abs{ \Sigma}$. Moreover, by \Cref{claim:NF}, the number of dimensions is bounded by $$d \leq 3 \cdot \frac{|V(G)|+|E(G)|}{F}+3.$$ 
	Third, by \Cref{claim:NF}, the largest number encoded in the instance is bounded by $$W\left(\cI(\Pi,F)\right) \leq 2 \cdot F \cdot \left(3 \cdot F^2 \cdot m\right)^{2 \cdot F}.$$ Thus, encoding each number in the instance can be done in space $$O \left(\log \left( 2 \cdot F \cdot \left(3 \cdot F^2 \cdot m\right)^{2 \cdot F} \right)    \right) = O \left(F \cdot \log (m \cdot F) \right).$$
	Overall, the reduced instance $\cI(\Pi,F)$ can be encoded in space 
	$$O \left(       F \cdot \log (m \cdot F) \cdot  \abs{V(G)} \cdot \abs{ \Sigma} \cdot  \frac{|V(G)|+|E(G)|}{F}               \right) = O \left(\log (|\Upsilon| \cdot F) \cdot  \abs{V(G)} \cdot \abs{ \Sigma} \cdot \left(|V(G)|+|E(G)|\right)  \right).$$
	By the above the proof follows. 
%	and since the budget $B$ equals to $|\Upsilon|$ in all dimensions, it directly holds that 
%	$$\abs{\cI(\Pi,F)} \leq O \left(|I| \cdot d \cdot |\Upsilon| \right) \leq O \left(\abs{V(G)} \cdot \abs{\Sigma} \cdot \left(2 \cdot |E(G)|+|V(G)|\right) \cdot |\Upsilon| \right)$$
\end{proof}
}

%\begin{proof}
%	Let $\cI(\Pi,F) = (I,p,d,c,B)$. By \Cref{def:DKPInstance} 
%	
%	and since the budget $B$ equals to $|\Upsilon|$ in all dimensions, it directly holds that 
%	$$\abs{\cI(\Pi,F)} \leq O \left(|I| \cdot d \cdot |\Upsilon| \right) \leq O \left(\abs{V(G)} \cdot \abs{\Sigma} \cdot \left(2 \cdot |E(G)|+|V(G)|\right) \cdot |\Upsilon| \right)$$
%\end{proof}

%Another important property is that the reduction described in \Cref{def:DKPInstance} uses polynomial weights for the budget constraints. 

%The next result follows immediately from \Cref{def:DKPInstance}.

%\begin{obs}
%	\label{lem:VK:W}
%	For any \textnormal{R-CSP} instance 
%	$\Pi = \left(G, \Sigma,\Upsilon, \{\pi_{e, u}, \pi_{e, v}\}_{e = (u,v)\in E(G)},\perp \right)$ it holds that 
%	$W\left(\cI(\Pi,F)\right) \leq |\Upsilon|$. 
%\end{obs}

The above lemmas give the statement of the reduction.

\subsubsection*{Proof of \Cref{thm:redA}:} The proof follows from \Cref{claim:NF}, %\Cref{lem:VK:W}, 
\Cref{lem:VKdir1}, and \Cref{lem:VKdir2}. \qed%,  and \Cref{lem:VK:I}. \qed

\comment{

Next, recall the following result from \cite{CCKLM17}. %\pasin{This is stated slightly differently from \cite[Theorem 4.2]{CCKLM17}, but it is not hard to see that these are the same, namely the ``$U$'' becomes the variables in 2-CSP and the constraints are that the two partial assignments agree.}

\begin{theorem}[\cite{CCKLM17}] \label{thm:2csp-apx-lb}
	Assuming Gap-ETH, there is no $f(k) \cdot n^{o(k)}$-time $O(1)$-approximation algorithm for $\MaxPar(\Pi)$ when given 2-CSP with rectangular constraint instance $\Pi = (V, E, \Sigma, \{\pi_{e, u}, \pi_{e, v}\}_{e = (u,v)\in E})$ where $|V| = k$.
\end{theorem}

Since $|E| \leq k^2$, plugging in \Cref{thm:2csp-apx-lb} to \Cref{lem:red}, we immediately have the following:
\begin{theorem}
	Assuming Gap-ETH, there is no $f(d) \cdot n^{o(\sqrt{d})}$-time $O(1)$-approximation algorithm for $d$-KP.
\end{theorem}

This slightly improves that of the original write-up, as it applies to arbitrary large constant factor approximating and has a more concrete running time lower bound.

\paragraph{Discussion.} First, it should be noted that \Cref{thm:2csp-apx-lb} can also be stated for super constant approximation with tight running time-vs-approximation ratio trade-off similar to \cite[Theorem 4.2]{CCKLM17}.

Next, notice that the main parameter in the reduction \Cref{lem:red} is actually the number of constraints $|E|$ instead of the number of variables $|V|$. This parameter $k = |E|$ has also been fairly well-studied in FPT literature but getting the right lower bound remains elusive. Although it is widely conjectured that a lower bound of $n^{\Omega(k)}$ holds, so far only $n^{\Omega\left(\frac{k}{\log k}\right)}$ is only known for the \emph{exact} version of the problem~\cite{Marx10}. (See also simplified proof in ~\cite{KMSS23}.)  If we want to use such a lower bound in this setting, we need to extend this to the approximate version as well; it is unclear if this is possible with current techniques.

}

\section{Proofs of the Remaining Lower Bounds}
\label{sec:mainResults}

In this section, we prove our remaining lower bounds. We start by proving \Cref{thm:main} relying on the reduction presented in \Cref{sec:RtoVK}. Then, we give the hardness results for larger approximation ratio (\Cref{thm:sqrt-runningtime-lb} and \Cref{thm:sqrt-ratio}). These proofs are easier and based on the simpler reduction presented in \Cref{sec:simple}.

\comment{
It is important in the next proofs that VK instances will be of a specific number of dimensions. For this reason, we use a simple process of adding {\em dummy dimensions} to a VK instance of a smaller dimension $\tilde{d} \leq d$ in order to have an instance of a specific larger dimension $d$. Specifically, Let $d \in \N$, $\tilde{d} \in [d]$, and let $\cI = (I, p,c,B)$ be a $\tilde{d}$-dimensional knapsack instance. Define the {\em $d$-dimensional instance} of $\cI$ as  $\cI_{d} = (I,p,c^d,B^d)$ such that $c^d:I \rightarrow \N^{d}$ where for all $j \in [d]$ and $i \in I$ it holds that $c^d_j(i)=c_j(i)$ if  $j \in [\tilde{d}] $ and $c^d_j(i)=0$ otherwise; moreover, $B^d \in \N^{d}$ such that for all $j \in [d]$ it holds that $B^d_j = B_j$ if  $j \in [\tilde{d}] $ and $B^d_j=0$ otherwise. 	
Clearly, the construction can be computed in linear time and adding the extra dimensions does not change the feasibility of any solution. Hence, we have the following simple observation. 
\begin{obs}
	\label{obs:dimension}
	For any $\tilde{d} \in [d]$, a {\em $\tilde{d}$-dimensional instance} $\cI = (I, p,c,B)$, and $S \subseteq I$, it holds that $S$ is a solution for $\cI$ if and only if $S$ is a solution for $\cI_{d}$ of the same profit.  
\end{obs} Therefore, for any $d \in \N$ and $\tilde{d} \in [d]$, using the above reduction we henceforth assume, with a slight abuse of notation, that every $\tilde{d}$-dimensional knapsack instance is also a $d$-dimensional knapsack instance. 
}

%

%\subsection{Proofs based on the reduction from \Cref{sec:RtoVK}}

%We now turn to prove \Cref{thm:main} %and \Cref{thm:VKexact}. %  We can now give our main lower bound. 

\subsubsection*{Proof of \Cref{thm:main}:}

Assume that \textnormal{Gap-ETH} holds. Let $\alpha,\beta \in (0,1)$ and $k_0 \in \N$ be the promised constants by \Cref{thm:RCSP}. 
%Define a constant $b = \frac{\chi}{6}$. 
Let $C\geq1$ be a constant such that for any R-CSP instance $\Pi$ it holds that the reduction \textnormal{\textsf{{R-CSP $\rightarrow$ VK}}} (described in \Cref{thm:redA}) runs in time $O \left(\abs{\Pi}^C\right)$. %; there is such a constant by \Cref{thm:redA}. 
Define constants $\zeta = \frac{\alpha \cdot \beta}{10,000}$ and $\chi = \frac{\beta}{4,000}$.
Let $d_0 \in \N$ such that the following holds $\alpha \cdot \frac{d_0}{\log (d_0)} \geq \max \{k_0,6,C\}$, $\frac{1}{\log (d_0)} \leq 00.1$, and $\frac{\log \left(\log(d_0)\right)}{\log(d_0)} \leq \chi$.

% Clearly, there is such $d_0$ since $\zeta,\chi,\alpha,\beta,C$, and $k_0$ are constants, $\lim_{x \rightarrow \infty} \frac{x}{\log x} = \infty$, and $\lim_{x \rightarrow \infty} \frac{\log \log x}{\log x} = 0$. 
 %a second constant $a = \frac{\zeta \cdot b}{42}+C TBA$.  
%In addition, let $\eps = \frac{\chi}{}$. 
%Assume towards a contradiction that  there is an algorithm $\cA$ that given a \textnormal{VK} instance $\cI$ with $d$ dimensions and $\eps \leq \frac{\chi}{6 \cdot \log(d)}$ returns a $(1-\eps)$-approximate solution for $\cI$ in time $O \left(f(d,\eps) \cdot \big(\left|\cI\right|+W(\cI) \big)^{a \cdot \frac{d}{ \eps \cdot \log^2(d)}} \right)$, %, where $d$ is the number of dimensions in the instance $\cI$, and 
%where $f$ is some computable function. 
Assume towards a contradiction that %for all constants $\zeta, \chi,d_0 > 0$ 
there are an integer $d>d_0 $, an $\eps \in \left(0, \frac{\chi}{\log d}\right)$, and an algorithm $\cA$ that returns a $(1 - \eps)$-approximate solution for every $d$-dimensional knapsack instance in time $O\left(n^{\frac{d}{\eps} \cdot \frac{\zeta}{(\log(d/\eps))^2}}\right)$, where $n$ is the encoding size of the instance.

Define $k =  \floor{\frac{d \cdot \beta}{1000 \cdot \eps \cdot \log \left(\frac{d}{\eps}\right)}}$. 
Observe that 
\begin{equation}
	\label{eq:ChilogD}
	\eps \cdot \log \left(\frac{d}{\eps}\right) = \eps \cdot \log \left(\frac{1}{\eps}\right)+\eps \cdot \log (d) \leq \frac{1}{\log(d)} \cdot \log \left(\frac{1}{\frac{1}{\log(d)}}\right)+ \frac{\chi}{\log (d)} \cdot \log (d)\leq 2 \cdot \chi. 
\end{equation}
%\pasin{I don't think the reason you gave below is correct because the inequality is on the wrong side to get $\log\left(\frac{1}{\eps}\right) \leq \log \log d$.}
The first inequality holds since the function $x \cdot \log \left(\frac{1}{x}\right)$ is increasing in the domain $x \in [0,0.01]$; thus, since $0<\eps \leq \frac{\chi}{\log (d)} \leq  \frac{1}{\log (d)} \leq  \frac{1}{\log (d_0)} \leq 0.01$ , it follows that $	\eps \cdot \log \left(\frac{1}{\eps}\right) \leq \frac{1}{\log(d)} \cdot \log \left(\frac{1}{\frac{1}{\log(d)}}\right)$. Moreover, the second expression follows easily since $\eps \leq \frac{\chi}{\log (d)}$. The second inequality follows since $d \geq d_0$ and since $\frac{\log \left(\log(d_0)\right)}{\log(d_0)} \leq \chi$. 
 %and since the function $\log (x)$ is increasing in the domain $(0,\infty)$. 
Thus, 
%The first inequality holds since $\eps \leq \frac{\chi}{\log (d)} \leq  \frac{1}{\log (d)}$; thus, $\eps \cdot \log \left(\frac{1}{\eps}\right) \leq \log \log (d)$ using the monotonicity of $$. The fourth inequality follows from the fact that $d \geq d_0$ and $\frac{\log \left(\log(d_0)\right)}{\log(d_0)} \leq \chi$ using the monotonicity of $ \frac{\log (x)}{\log \log (x)}$. 
\begin{equation}
	\label{eq:Feps2}
	\begin{aligned}
		k =
		\floor{\frac{d \cdot \beta}{1000 \cdot \eps \cdot \log \left(\frac{d}{\eps}\right)}} 
		\geq
		\frac{d \cdot \beta}{1000 \cdot \eps \cdot \log \left(\frac{d}{\eps}\right)}-1
	%	\\\geq{} &
	%	\frac{d \cdot \beta}{20 \cdot \eps \cdot \log \left(\frac{d}{\eps}\right)} -1
	%	\\\geq{} &
%		\frac{d \cdot \beta}{10 \cdot \left(\eps \cdot \log \left(\frac{1}{\eps}\right)+\eps \cdot \log(d)\right)}-1
	%	\\\geq{} &
%		\frac{d \cdot \beta}{10 \cdot \left(\frac{\log \left(\log(d)\right)}{\log(d)}+\frac{\chi}{\log(d)} \cdot \log(d)\right)} -1
		\geq
		\frac{d \cdot \beta}{1000 \cdot 2 \cdot \chi} -1
		\geq
		2 \cdot d -1
			\geq
		d. 
	\end{aligned}
\end{equation}
%Since $\eps \leq \frac{1}{\log d}$ and $d \geq d_0$ it follows that $k \geq k_0$. 
 The second inequality holds by \eqref{eq:ChilogD}. The third inequality follows from the selection of $\chi$. Therefore, by \eqref{eq:Feps2} it follows that $k \geq d \geq d_0 \geq k_0$. 

We define the following %$\frac{\chi/6}{\log}$-
algorithm $\cB$ for R-CSP on $3$-regular graphs with at most $k$ constraints. Namely, given an R-CSP instance $\Pi$ with at most $k$ variables (vertices) with a $3$-regular constraint graph $H$, Algorithm $\cB$ decides if $\MaxPar(\Pi) = \abs{V(H)}$ or $\MaxPar(\Pi) < \left(1-\frac{\beta}{\log(k)}\right) \cdot \abs{V(H)}$. Let %$\Gamma = (H,\Sigma,X)$ 
$$\Pi = \left(H,\Sigma,\Upsilon, \{\pi_{e, u}, \pi_{e, v}\}_{e = (u,v)\in E(G)}\right)$$
be an R-CSP instance with $\abs{V(H)} \leq k$ vertices such that $H$ is $3$-regular. Define $\cB$ on input $\Pi$ by:  

\begin{enumerate}
%	\item Compute using $\textnormal{2-CSP}  \rightarrow  \textnormal{VK}$ the VK instance 

\item Define $F = \ceil{\frac{24 \cdot k}{d}}$. 

	\item Compute the VK instance $\cI(\Pi,F) = (I, p,c,B)$ by \textnormal{\textsf{2-CSP $\rightarrow$ VK}}. 
	
	\item Execute $\cA$ on instance $\cI(\Pi,F)$. %with parameter $\eps = \frac{b}{\log(d)}$. 
	Let $S$ be the returned solution. 
	
	\item If $p(S) \geq (1-\eps) \cdot \left(\abs{V(H)}+2 \cdot  \abs{E(H)}\right)$: return that  $\MaxPar(\Pi) = \abs{V(H)}$.
	
	\item If $p(S) < (1-\eps) \cdot \left(\abs{V(H)}+2 \cdot  \abs{E(H)}\right)$: return that $\MaxPar(\Pi) < \left(1-\frac{\beta}{\log(k)}\right) \cdot \abs{V(H)}$. 
%	\item Compute $\rightarrow $ R-CSP}} that, given a \textnormal{2-CSP} $$\Pi(\Gamma) = \left(G,\Sigma_{\Pi},\mathcal{X}, \{\pi_{e, u}, \pi_{e, v}\}_{e = (u,v)\in E(G)}\right)$$
\end{enumerate}

For the correctness, we first argue that $\cI(\Pi,F)$ is of a smaller (or equal) dimension than $d$. 
\begin{equation}
	\label{eq:smallerDim}
	\begin{aligned}
	3 \cdot \ceil{    \frac{\abs{V(H)}+\abs{E(H)}}{F} } \leq  3 \cdot \ceil{    \frac{k+3 \cdot k}{F} } \leq 3 \cdot \ceil{    \frac{4 \cdot k}{ \frac{24 \cdot k}{d} } } = 3 \cdot \ceil{\frac{d}{6}} \leq 3 \cdot \left( \frac{d}{6}+1\right) \leq 3 \cdot 2 \cdot \frac{d}{6} = d. 
	\end{aligned}
\end{equation}

The first inequality holds since $\abs{V(H)} \leq k$ and since $H$ is $3$-regular. The second inequality follows from the selection of $F$. The last inequality holds since $d \geq d_0 \geq 6$. Therefore, by \eqref{eq:smallerDim} and \Cref{thm:redA} it holds that the number of dimensions of $\cI(\Pi,F)$ is at most $d$; hence, $\cI(\Pi,F)$ is well defined (recall that instances of a smaller dimension are considered to be also of dimension $d$).  
%Define a function $g: \mathbb{N} \rightarrow \mathbb{N}$ by $g(t) = f(d,\eps)$. 
%We use the following claim on the running tim
	 Consider the following inequality for the running time analysis. 
	 \begin{equation}
	 	\label{eq:dtok}
	 	5 \cdot \zeta \cdot \frac{d}{ \eps \cdot \log^2 \left(\frac{d}{\eps}\right)} = 5 \cdot \zeta \cdot \left(\frac{d \cdot \beta}{\eps \cdot \log \left(\frac{d}{\eps}\right) \cdot 1000}\right)
	 	\cdot \frac{1000}{\beta \cdot \log \left(\frac{d}{\eps}\right) } \leq  5 \cdot \zeta \cdot 2 \cdot k
	 	\cdot \frac{1000}{\beta \cdot \log \left(\frac{d}{\eps}\right) } \leq \frac{\alpha \cdot k}{	 \log \left(k \right)}. 
	 \end{equation}
	% Thus, by \eqref{eq:Iclaim}, \eqref{eq:Wclaim}, and 	\item Compute the VK instance $\cI(\Gamma) = (I,p,c,B)$ by \textnormal{\textsf{2-CSP $\rightarrow$ VK}}. 
	
	The first inequality holds since $	2 \cdot k \geq \frac{d \cdot \beta}{1000 \cdot \eps \cdot \log \left(\frac{d}{\eps}\right)} $ as $k =  \floor{\frac{d \cdot \beta}{1000 \cdot \eps \cdot \log \left(\frac{d}{\eps}\right)}}$ and $k \geq d_0 \geq 6$. The last inequality follows from the selection of $\zeta$ and since $k \leq \frac{d}{\eps}$ (using the monotonicity of the function $\log (x)$). 
	By \Cref{thm:redA}, there is a constant $E>0$ such that 
	$\left|\cI(\Pi,F)\right| \leq E \cdot \abs{\Pi}^4$. If $\abs{\Pi} < E$, then the running time of $\cB$ on the input $\Pi$ is bounded by a constant. Otherwise, assume for the following that $\abs{\Pi} \geq E$ thus $\left|\cI(\Pi,F)\right| \leq \abs{\Pi}^5$. Then, by the running time guarantee of $\cA$ and since the running time of computing $\cI(\Pi,F)$ can be bounded by $O \left(\abs{\Pi}^C\right)$, executing $\cB$ on $\Pi$ can be done in time 

	 \begin{equation*}
	\label{eq:TIMEF}
	\begin{aligned}
		O \left( \left|\cI(\Pi,F)\right|^{\zeta \cdot \frac{d}{ \eps \cdot \log^2 \left(\frac{d}{\eps}\right)}} +\abs{\Pi}^{C}\right) 
		\leq
			O \left( \left|\Pi\right|^{5 \cdot \zeta \cdot \frac{d}{ \eps \cdot \log^2 \left(\frac{d}{\eps}\right)}} +\abs{\Pi}^{C}\right) 
		\leq
			O \left( \abs{\Pi}^{ \frac{\alpha \cdot k}{	 \log \left(k \right)}
			} +\abs{\Pi}^{C}\right)  \leq 
		  	 O \left( \abs{\Pi}^{ \frac{\alpha \cdot k}{	 \log \left(k \right)}
		  	 }
		  	 \right) 
	\end{aligned}
\end{equation*} 

	% The first inequality follows from \Cref{thm:redMain} and since $a \geq C$. The second inequality holds since we assume that $\abs{\Gamma} \geq E$. The third and fourth inequalities hold since %$d = \abs{V(H)}+2 \cdot \abs{E(H)}$ by  \Cref{thm:MainCSPtoG-CSP} and \Cref{thm:redA}; thus, since $H$ is $3$-regular it holds that 
	% $k \leq d \leq 7 \cdot k$ also from \Cref{thm:redMain}. The last equality follows by the selection of $a$. Observe that $\eps$ is determined as a function of $d$, which is a function of $k$. Thus, there is a computable function $g: \mathbb{N} \rightarrow \mathbb{N} $
%such that $f(d,\eps) \leq g(k)$, implying by \eqref{eq:TIMEF} that the running time of computing $\cB$ on $\Gamma$ can be bounded by $ O \left(g(k) \cdot \abs{\Gamma}^{\zeta \cdot \frac{k}{\log(k)}}\right)$. 
%\end{claimproof}

The first inequality holds since $\left|\cI(\Pi,F)\right| \leq \abs{\Pi}^5$. The second inequality follows from \eqref{eq:dtok}. The third inequality holds since $\alpha \cdot \frac{k}{\log(k)} \geq \alpha \cdot \frac{d_0}{\log (d_0)} \geq C$ by the assumption on $d_0$. %$\floor{\frac{d_0 \cdot \beta \cdot \zeta}{10 \cdot \log^2(d_0)}} \geq C$; thus,  
It remains to prove the two directions of the reduction. Observe that
\begin{equation}
	\label{eq:Feps}
	\begin{aligned}
		10 \cdot F \cdot \eps = {} & 
		10 \cdot \ceil{\frac{24 \cdot k}{d}} \cdot \eps 
		%	 \\={} &
		% 2 \cdot \ceil{\frac{k}{d}} \cdot \eps 
	\leq
		10 \cdot 2 \cdot 24 \cdot \frac{k}{d} \cdot \eps 
\leq
		500 \cdot
		\frac{	 \frac{d \cdot \beta}{1000 \cdot \eps \cdot \log \left(\frac{d}{\eps}\right)}}{d}
		\cdot \eps
	\\={} & 
		500 \cdot \frac{\beta}{1000 \cdot \eps \cdot \log \left(\frac{d}{\eps}\right)} \cdot \eps 
	<
		\frac{\beta}{\log \left(\frac{d}{\eps}\right)} 
	\leq
		\frac{\beta}{\log \left(k\right)}
	\end{aligned}
\end{equation}
%\begin{equation}
%	\label{eq:Feps}
%	\begin{aligned}
%		2 \cdot F \cdot \eps ={} &
%		 2 \cdot \ceil{\frac{24 \cdot k}{d}} \cdot \eps 
%	%	 \\={} &
%		 % 2 \cdot \ceil{\frac{k}{d}} \cdot \eps 
%		  \\\leq{} &
%		   2 \cdot 2 \cdot 24 \cdot \frac{k}{d} \cdot \eps 
%		\\\leq{} & 
%		100 \cdot
%	    \frac{	 \frac{d \cdot \beta}{200 \cdot \eps \cdot \log \left(\frac{d}{\eps}\right)}}{d}
%		 \cdot \eps
%		 \\={} &
%		  100 \cdot \frac{\beta}{200 \cdot \eps \cdot \log \left(\frac{d}{\eps}\right)} \cdot \eps 
%		  \\<{} &
%		   \frac{\beta}{\log \left(\frac{d}{\eps}\right)} 
%		   \\\leq{} &
%		 \frac{\beta}{\log \left(k\right)}
%	\end{aligned}
%\end{equation}
%\floor{\frac{d \cdot \beta}{10 \cdot \eps \cdot \log \left(\frac{d}{\eps}\right)}}
The first inequality holds since $k \geq d$ using \eqref{eq:Feps2}. The last inequality holds since $k \leq \frac{d}{\eps}$. 

For the first direction, assume that $\MaxPar(\Pi) = |V(H)|$. Thus, there is a consistent partial assignment for $\Pi$ of size $|V(H)|$. Then, by  %\Cref{thm:MainCSPtoG-CSP} there is a consistent partial assignment to $\Pi(\Gamma)$ of size $k = |E(H)|$. Thus, by 
\Cref{thm:redA} there is a solution for $\cI  \left(\Pi,F\right)$ of profit $\left(\abs{V(H)}+2 \cdot  \abs{E(H)}\right)$. %and by \Cref{obs:dimension} there is also a solution for $\cI_d  \left(\Pi,F\right)$ of profit $\left(\abs{V(H)}+2 \cdot  \abs{E(H)}\right)$. 
Therefore, since $\cA$ returns a $(1-\eps)$-approximate solution for $\cI  \left(\Pi,F\right)$, the returned solution $S$ by $\cA$ has profit at least $(1-\eps) \cdot \left(\abs{V(H)}+2 \cdot  \abs{E(H)}\right)$.
%\begin{equation*}
%	\label{eq:profitK}
%	(1-\eps) \cdot \left(\abs{V(H)}+2 \cdot  \abs{E(H)}\right). %= \left(\abs{V(H)}+2 \cdot  \abs{E(H)}\right)- \eps \cdot \left(\abs{V(H)}+2 \cdot  \abs{E(H)}\right) = \left(\abs{V(H)}+2 \cdot  \abs{E(H)}\right)- \frac{\chi}{6 \cdot \log(k)}
%\end{equation*}
 Thus, $\cB$ correctly decides that $\MaxPar(\Pi) = |V(H)|$. 
 
 Conversely, assume that $\MaxPar(\Pi) < \left(1-\frac{\beta}{\log(k)}\right) \cdot |V(H)|$. Thus, every consistent partial assignment for $\Pi$ is of size strictly less than  $\left(1-\frac{\beta}{\log(k)}\right) \cdot |V(H)|$. 
 Therefore, by the above in conjunction with %\Cref{thm:MainCSPtoG-CSP} there is no consistent partial assignment for $\Pi(\Gamma)$ of size at least $k-\frac{\chi}{6 \cdot \log(k)} \cdot k$. Thus, by 
\Cref{thm:redA}, there is no solution for $\cI  \left(\Pi,F\right)$ of profit at least 
\begin{equation}
	\label{eq:Profit<t}
	\begin{aligned}
		& \left(\abs{V(H)}+2 \cdot  \abs{E(H)}\right) - \frac{\beta}{\log(k) \cdot 2 \cdot F} \cdot \abs{V(H)}
		\\={} & 
		 \left(\abs{V(H)}+2 \cdot  \abs{E(H)}\right) - \frac{\beta}{\log(k) \cdot 2 \cdot F \cdot 5} \cdot 5 \cdot \abs{V(H)}
		 	\\\leq{} & 
		 \left(\abs{V(H)}+2 \cdot  \abs{E(H)}\right) - \frac{\beta}{\log(k) \cdot 2 \cdot F \cdot 5} \cdot 	\left(\abs{V(H)}+2 \cdot  \abs{E(H)}\right) 
		 	\\<{} & 
		 \left(\abs{V(H)}+2 \cdot  \abs{E(H)}\right) \cdot (1-\eps). 
	\end{aligned}
\end{equation} 
The first inequality holds since $H$ is $3$-regular; thus, it holds that $\abs{E(H)} \leq 2 \cdot \abs{V(H)}$ implying $\abs{V(H)}+2 \cdot  \abs{E(H)} \leq 5 \cdot \abs{V(H)}$. The second inequality follows from \eqref{eq:Feps}.
 Thus, by \eqref{eq:Profit<t} there is no solution for $\cI  \left(\Pi,F\right)$ of profit at least
$\left(\abs{V(H)}+2 \cdot  \abs{E(H)}\right) \cdot \left(1-\eps \right)$. %Finally, by \Cref{obs:dimension} there is no solution for $\cI_d  \left(\Pi,F\right)$ of profit at least
%$\left(\abs{V(H)}+2 \cdot  \abs{E(H)}\right) \cdot \left(1-\eps \right)$. 
Hence, the returned solution $S$ by $\cA$ has profit strictly less than $\left(\abs{V(H)}+2 \cdot  \abs{E(H)}\right) \cdot \left(1-\eps \right)$. It follows that $\cB$ returns that $\MaxPar(\Pi) < \left(1-\frac{\beta}{\log(k)}\right) \cdot \abs{V(H)}$ as required. By the above, $\cB$ correctly decides if $\MaxPar(\Pi) = \abs{V(H)}$ or $\MaxPar(\Pi) < \left(1-\frac{\beta}{\log(k)}\right) \cdot \abs{V(H)}$ in time $	 O \left( \abs{\Pi}^{ \frac{\alpha \cdot k}{	 \log \left(k \right)}
}
\right) $. 
This is a contradiction to \Cref{thm:RCSP} and the proof follows. \qed %\qed contradicts \Cref{thm:RCSP}. \qed %is a $\frac{\chi/6}{\log}$-algorithm for 2-CSP that runs in time $O \left(g(k) \cdot \abs{\Gamma}^{\zeta \cdot \frac{k}{\log(k)}}\right)$. This is a contradiction to \Cref{thm:CSP} and the proof follows.  \qed

\comment{

\subsubsection*{Proof of \Cref{thm:main}:}

Assume that \textnormal{Gap-ETH} holds. Let $\zeta,\beta > 0$ and $k_0 \in \N$ be the promised constants by \Cref{thm:RCSP}. 
Define a constant $b = \frac{\chi}{6}$. Let $C\geq1$ be a constant such that for any 2-CSP instance $\Gamma'$ it holds that the reduction \textnormal{\textsf{{2-CSP $\rightarrow$ VK}}} (described in \Cref{thm:redA}) runs in time $O \left(\abs{\Gamma'}^C\right)$; there is such a constant by \Cref{thm:redA}. Finally, define a second constant $a = \frac{\zeta \cdot b}{42}+C TBA$.  
%In addition, let $\eps = \frac{\chi}{}$. 
Assume towards a contradiction that  there is an algorithm $\cA$ that given a \textnormal{VK} instance $\cI$ with $d$ dimensions and $\eps \leq \frac{\chi}{6 \cdot \log(d)}$ returns a $(1-\eps)$-approximate solution for $\cI$ in time $O \left(f(d,\eps) \cdot \big(\left|\cI\right|+W(\cI) \big)^{a \cdot \frac{d}{ \eps \cdot \log^2(d)}} \right)$, %, where $d$ is the number of dimensions in the instance $\cI$, and 
where $f$ is some computable function. We define the following $\frac{\chi/6}{\log}$-algorithm $\cB$ for 2-CSP on $3$-regular graphs; namely, given a 2-CSP instance $\Gamma$ with $k$ constraints with a $3$-regular constraint graph, Algorithm $\cB$ decides if $\CSP(\Gamma) = k$ or $\CSP(\Gamma) < \left(1-\frac{\chi}{\log(k)}\right)$. Let $\Gamma = (H,\Sigma,X)$ be a 2-CSP instance with $k =  \abs{E(H)}$ constraints such that $H$ is $3$-regular. Define $\cB$ on input $\Gamma$ as follows. 

\begin{enumerate}
	%	\item Compute using $\textnormal{2-CSP}  \rightarrow  \textnormal{VK}$ the VK instance 
	\item Compute the VK instance $\cI(\Gamma) = (I,p,c,B)$ by \textnormal{\textsf{2-CSP $\rightarrow$ VK}}. 
	
	\item Execute $\cA$ on $\cI(\Gamma)$ with parameter $\eps = \frac{b}{\log(d)}$. Let $S$ be the returned solution. 
	
	\item If $p(S) \geq (1-\eps) \cdot k$: return that  $\CSP(\Gamma) = k$.
	
	\item If $p(S) < (1-\eps) \cdot k$: return that $\CSP(\Gamma) < \left(1-\frac{\chi}{\log(k)}\right) \cdot k$. 
	%	\item Compute $\rightarrow $ R-CSP}} that, given a \textnormal{2-CSP} $$\Pi(\Gamma) = \left(G,\Sigma_{\Pi},\mathcal{X}, \{\pi_{e, u}, \pi_{e, v}\}_{e = (u,v)\in E(G)}\right)$$
\end{enumerate}

%Define a function $g: \mathbb{N} \rightarrow \mathbb{N}$ by $g(t) = f(d,\eps)$. 
%We use the following claim on the running time

% Thus, by \eqref{eq:Iclaim}, \eqref{eq:Wclaim}, and 	\item Compute the VK instance $\cI(\Gamma) = (I,p,c,B)$ by \textnormal{\textsf{2-CSP $\rightarrow$ VK}}. 
By \Cref{thm:redMain}, there is a constant $E>0$ such that 
$\left|\cI(\Gamma)\right|+\left|W(\cI(\Gamma))\right| \leq E \cdot \abs{\Gamma}^5$. If $\abs{\Gamma} < E$, then the running time of $\cB$ on the input $\Gamma$ is bounded by a constant. Otherwise, assume for the following that $\abs{\Gamma} \geq E$. Then, by the running time guarantee of $\cA$ and since te running time of computing $\cI(\Gamma)$ is bounded by $O \left(\abs{\Gamma}^C\right)$, executing $\cB$ on $\cI\left(\Gamma\right)$ can be done in time 
\begin{equation*}
\label{eq:TIMEF}
\begin{aligned}
O \left(f(d,\eps) \cdot \big(\left|\cI(\Gamma)\right|+W(\cI(\Gamma)) \big)^{a \cdot \frac{d}{ \eps \cdot \log^2(d)}} +\abs{\Gamma}^{C}\right) \leq{} & O \left(f(d,\eps) \cdot \left(E \cdot \abs{\Gamma}^{5}\right)^{a \cdot \frac{d}{ \eps \cdot \log^2(d)}} \right) 
\\\leq{} &
O \left(f(d,\eps) \cdot \left(\abs{\Gamma}^{6}\right)^{a \cdot \frac{d}{ \eps \cdot \log^2(d)}} \right) 
\\\leq{} & O \left(f(d,\eps) \cdot \abs{\Gamma}^{6 \cdot a \cdot \frac{7 \cdot k}{ \eps \cdot \log^2(d)}} \right) 
\\={} & O \left(f(d,\eps) \cdot \abs{\Gamma}^{42 \cdot a \cdot \frac{k}{ \frac{b}{ \log(d)} \cdot \log^2(d)}} \right) 
\\={} & O \left(f(d,\eps) \cdot \abs{\Gamma}^{\frac{42 \cdot a \cdot k}{ b \cdot \log(d)}} \right) 
\\\leq{} & O \left(f(d,\eps) \cdot \abs{\Gamma}^{\frac{42 \cdot a \cdot k}{ b \cdot \log(k)}} \right) 
\\={} &
O \left(f(d,\eps) \cdot \abs{\Gamma}^{\zeta \cdot \frac{k}{\log(k)}}\right)
\end{aligned}
\end{equation*} 
The first inequality follows from \Cref{thm:redMain} and since $a \geq C$. The second inequality holds since we assume that $\abs{\Gamma} \geq E$. The third and fourth inequalities hold since %$d = \abs{V(H)}+2 \cdot \abs{E(H)}$ by  \Cref{thm:MainCSPtoG-CSP} and \Cref{thm:redA}; thus, since $H$ is $3$-regular it holds that 
$k \leq d \leq 7 \cdot k$ also from \Cref{thm:redMain}. The last equality follows by the selection of $a$. Observe that $\eps$ is determined as a function of $d$, which is a function of $k$. Thus, there is a computable function $g: \mathbb{N} \rightarrow \mathbb{N} $
such that $f(d,\eps) \leq g(k)$, implying by \eqref{eq:TIMEF} that the running time of computing $\cB$ on $\Gamma$ can be bounded by $ O \left(g(k) \cdot \abs{\Gamma}^{\zeta \cdot \frac{k}{\log(k)}}\right)$. 
%\end{claimproof}

It remains to prove the two directions of the reduction.
First, assume that $\CSP(\Gamma) = k$. Thus, there is an assignment for $\Gamma$ that satisfies $|E(H)|$ edges. Then, by  %\Cref{thm:MainCSPtoG-CSP} there is a consistent partial assignment to $\Pi(\Gamma)$ of size $k = |E(H)|$. Thus, by 
\Cref{thm:redMain} there is a solution for $\cI  \left(\Gamma\right)$ of profit $k$. Therefore, since $\cA$ returns a $(1-\eps)$-approximate solution for $\cI  \left(\Gamma\right)$, the returned solution $S$ by $\cA$ has profit at least 
\begin{equation*}
\label{eq:profitK}
(1-\eps) \cdot k = k- \eps \cdot k = k- \frac{\chi}{6 \cdot \log(k)}
\end{equation*}
Thus, $\cB$ correctly decides that $\CSP(\Gamma) = k$. Conversely, assume that $\CSP(\Gamma) < \left(1-\frac{\chi}{\log(k)}\right) \cdot k$. Thus, every assignment for $\Gamma$ satisfies strictly less than  $\left(1-\frac{\chi}{\log(k)}\right) \cdot k$ edges. Therefore, by %\Cref{thm:MainCSPtoG-CSP} there is no consistent partial assignment for $\Pi(\Gamma)$ of size at least $k-\frac{\chi}{6 \cdot \log(k)} \cdot k$. Thus, by 
\Cref{thm:redMain} there is no solution for $\cI  \left(\Gamma\right)$ of profit at least $k-\frac{\chi}{6 \cdot  \log(k)} \cdot k$. Hence, the returned solution $S$ by $\cA$ has profit strictly less than $k-\frac{ \chi}{6 \cdot \log(k)} \cdot k = (1-\eps) \cdot k$. It follows that $\cB$ returns that $\CSP(\Gamma) < \left(1-\frac{\chi}{\log(k)}\right) \cdot k$ as required. By the above, $\cB$ is a $\frac{\chi/6}{\log}$-algorithm for 2-CSP that runs in time $O \left(g(k) \cdot \abs{\Gamma}^{\zeta \cdot \frac{k}{\log(k)}}\right)$. This is a contradiction to \Cref{thm:CSP} and the proof follows.  \qed

}

\hspace{1pt}

\comment{  For the proof of \Cref{thm:VKexact}, we use the following result, which is analogous to \Cref{thm:RCSP} but uses the weaker assumption of ETH rather than Gap-ETH. 
%thm:RCSP
The proofs is analogous to \Cref{thm:RCSP}, which follows from the work of \cite{karthik2023conditional} combined with our reduction from Gap-ETH to R-CSP (see \Cref{sec:SATtoRCSP}). 

%\ilan{Pasin: Can we really say that the following result indeed follows from \cite{karthik2023conditional} combined with our reduction? or do we need something else?}

\begin{theorem}
	\label{thm:RCSPeth}
	Assuming \textnormal{ETH},  there exist constants $\alpha \in (0,1)$ and $k_0 \in \N$, such that, for any constant $k \geq k_0$, there is no algorithm that takes in an \textnormal{R-CSP} instance $\Pi$ with a 3-regular constraint graph $H$ such that $|V(H)| \leq k$ variables, runs in time $O\left(|\Pi|^{\alpha \cdot \frac{k}{\log(k)}}\right)$ and decides if $\MaxPar(\Pi) = |V(H)|$. %distinguish between:
	%\begin{itemize}
		%\item {\bf (Completeness)} $\MaxPar(\Pi) = |V(H)|$.
		%\item {\bf (Soundness)} $\MaxPar(\Pi) < \left(1 - \frac{\beta}{\log(k)}\right) \cdot |V(H)|$.
	%\end{itemize}
\end{theorem}  
}

\comment{

We now prove  \Cref{thm:VKexact} using the reduction of \Cref{thm:redA}. We remark that it suffices to use the simpler reduction from \Cref{thm:red-large-gap}, however, we use the more general reduction to demonstrate its versatility. 

\comment{
\begin{theorem}
	\label{thm:VKexact}
	Assuming \textnormal{ETH},  there exist a constant  $q \in (0,1)$, such that 
	there is no algorithm that 
	optimally solves \textnormal{VK} in time $O \left(f(d) \cdot \big(\left|\cI\right|+W(\cI) \big)^{q \cdot \frac{d}{ \log(d)}} \right)$%, 
	where $d$ is the number of dimensions in the instance $\cI$, and 
	where $f$ is any computable function.   
\end{theorem} 
Assuming \textnormal{ETH}, there is a constant $\nu \in (0,1)$ such that there is no algorithm that given a \textnormal{2-CSP} instance $\Gamma = (H,\Sigma,X)$, where $H$ is $3$-regular and $k = \abs{E(H)}$, and decides if $\CSP(\Gamma) = k$ in time $O \left(f(k) \cdot \abs{\Gamma}^{ \frac{\nu \cdot k}{\log(k)}}\right)$, where $f$ is any computable function. 
}

\subsubsection*{Proof of \Cref{thm:VKexact}}
Assume that \textnormal{ETH} holds. 
Let $\alpha \in (0,1)$ and $k_0 \in \N$ be the promised constants by \Cref{thm:RCSPeth}. 
%Define a constant $b = \frac{\chi}{6}$. 
Let $C\geq1$ be a constant such that for any R-CSP instance $\Pi$ it holds that the reduction \textnormal{\textsf{{R-CSP $\rightarrow$ VK}}} (described in \Cref{thm:redA}) runs in time $O \left(\abs{\Pi}^C\right)$. 
%Let $\nu \in (0,1)$ be the promised constant by \Cref{thm:marx}. Let $C\geq1$ be a constant such that for any 2-CSP instance $\Gamma'$ it holds that the reduction \textnormal{2-CSP $\rightarrow$ VK} (described in \Cref{thm:redMain}) runs in time $O \left(\abs{\Gamma'}^C\right)$; there is such a constant by \Cref{thm:redMain}. 
Define a parameter $\zeta =   \frac{\alpha}{1000}$. 
%Let $d_0 = $. 
Let $d_0 \in \N$ such that the following holds.
\begin{enumerate}
	\item $\alpha \cdot \frac{\floor{\frac{d_0}{10}}}{\log \left(\floor{\frac{d_0}{10}}\right)} \geq C$.
	
	\item $d_0 \geq \max \{k_0,40\}$.

%	\item $\frac{d_0}{15} \in \N$. 
	%	\item $\floor{\frac{d_0 \cdot \beta \cdot \zeta}{10 \cdot \log(d_0)}} \geq \max \{k_0,C,6\}$.
	
	%\item $d_0 \geq \max \{k_0,6\}$.
	
	%\item $ \frac{\log \left(\log(d_0)\right)}{\log(d_0)} \leq \chi$.
	%\frac{\beta}{40}$.  
\end{enumerate}
%Clearly, there is such $d_0$ since $\zeta,\chi,\alpha,\beta,C$, and $k_0$ are constants, $\lim_{x \rightarrow \infty} \frac{x}{\log x} = \infty$, and $\lim_{x \rightarrow \infty} \frac{\log \log x}{\log x} = 0$. 

\comment{

	Assuming \textnormal{ETH},  there exist constants  $\chi,d_0 >0$, such that  for every integer $d>d_0$ there is  no algorithm that 
solves $d$-dimensional knapsack exactly in   time $O \left( \big(n+W \big)^{\zeta \cdot \frac{d}{ \log(d)}} \right)$.   

}

%In addition, let $\eps = \frac{\chi}{}$. 
Assume towards a contradiction that  there is $d > d_0$ and an algorithm $\cA$ that given a $d$-dimensional knapsack instance returns an optimal solution in time $O \left( \big(n+W \big)^{\zeta \cdot \frac{d}{ \log(d)}} \right)$, where $n$ and $W$ are the encoding size and maximum number in the instance, respectively. %, where $d$ is the number of dimensions in the instance $\cI$, and 
 Let $k = \floor{\frac{d}{10}}$. 
  We define the following algorithm $\cB$ that decides if an R-CSP instance $\Pi$ on $3$-regular constraint graph $H$ where $\abs{V(H)} \leq k$ satisfies $\MaxPar(\Pi) = \abs{V(H)}$ or $\MaxPar(\Pi)<\abs{V(H)}$. %Let $\Gamma = (H,\Sigma,X)$ be a 2-CSP instance with $k =  \abs{E(H)}$ constraints such that $H$ is $3$-regular. Define $\cB$ on input $\Gamma$ as follows. 
 Let %$\Gamma = (H,\Sigma,X)$ 
 $$\Pi = \left(H,\Sigma,\Upsilon, \{\pi_{e, u}, \pi_{e, v}\}_{e = (u,v)\in E(G)}\right)$$
 be an R-CSP instance with $\abs{V(H)} \leq k$ vertices such that $H$ is $3$-regular. Define $\cB$ on input $\Pi$ by:  

\begin{enumerate}
	%\item Compute using $\textnormal{2-CSP}  \rightarrow  \textnormal{R-CSP}$ the R-CSP instance $$\Pi(\Gamma) = \left(G,\Sigma_{\Pi},\mathcal{X}, \{\pi_{e, u}, \pi_{e, v}\}_{e = (u,v)\in E(G)}\right).$$
	%\item Compute the VK instance $\cI(\Pi \left(\Gamma\right)) = (I,p,c,B)$ using $\textnormal{R-CSP}  \rightarrow  \textnormal{VK}$.  
	
	\item Compute the VK instance $\cI(\Pi,F) = (I,p,c,B)$ by \textnormal{\textsf{2-CSP $\rightarrow$ VK}} with parameter $F = 1$.  
	
	\item Execute $\cA$ on $\cI \left(\Pi,1\right)$. Let $S$ be the returned solution. 
	
	\item If $p(S) = \abs{V(H)}$: return that  $\MaxPar(\Pi) = \abs{V(H)}$.
	
	\item If $p(S) <  \abs{V(H)}$: return that $\MaxPar(\Pi) < \abs{V(H)}$. 
	%	\item Compute $\rightarrow $ R-CSP}} that, given a \textnormal{2-CSP} $$\Pi(\Gamma) = \left(G,\Sigma_{\Pi},\mathcal{X}, \{\pi_{e, u}, \pi_{e, v}\}_{e = (u,v)\in E(G)}\right)$$
\end{enumerate}   

First, observe that the number of dimensions of  $\cI \left(\Pi,1\right)$ is 
\begin{equation}
	\label{eq:XfeasR}
	3 \cdot \ceil{ \frac{\abs{V(H)}+\abs{E(H)}}{F}} = 3 \cdot \abs{V(H)}+3 \cdot \abs{E(H)} \leq 3 \cdot k+ 3 \cdot 2 \cdot k  = 9 \cdot k \leq d. 
\end{equation}
	The inequality holds since $\abs{V(H)} \leq k$ and since $H$ is $3$-regular. The last inequality follows from the selection of $k$. By \eqref{eq:XfeasR} it holds that the number of dimensions of $\cI \left(\Pi,1\right)$ is at most $d$. 
	For simplicity, let $n = \abs{\cI(\Pi,1)}$ be the encoding size and let $W = W \left(\cI(\Pi,1)\right)$ be the maximum weight of the instance. By \Cref{thm:redA}, as we select $F = 1$, it holds that 
	\begin{equation}
		\label{eq:Wval}
		W \leq \left(3 \cdot F \cdot \abs{\Pi}\right)^{6 \cdot F} = \left(3 \cdot \abs{\Pi}\right)^{6} 
		%O \left(\left(3 \cdot F \cdot \abs{\Pi}\right)^{6 \cdot F}\right).
		%2 \cdot F \left(3 \cdot F^2 \cdot \abs{\Upsilon}\right)^{2 \cdot F} = 2 \cdot \left(3 \cdot \abs{\Upsilon}\right)^{2} \leq 18 \cdot \abs{\Pi}^2.  
	\end{equation}

	By \Cref{thm:redA}, there is a constant $E \geq 3$ such that 
$n \leq E \cdot \abs{\Pi}^4$. If $\abs{\Pi} < E$, then the running time of $\cB$ on the input $\Pi$ is bounded by a constant. Otherwise, assume for the following that $\abs{\Pi} \geq E$ thus $n \leq \abs{\Pi}^5$. Therefore,
\begin{equation}
	\label{eq:NW}
	n+W \leq \abs{\Pi}^5+\left(3 \cdot \abs{\Pi}\right)^{6}  \leq 3^7 \cdot\abs{\Pi}^6 %\abs{\Pi}^5 \cdot \abs{\Pi}^5 
	\leq \abs{\Pi}^{13}. 
\end{equation} The first inequality follows from \eqref{eq:Wval} and since $n \leq \abs{\Pi}^5$. The second inequality holds since we assume $\abs{\Pi} \geq E \geq 3$. In addition, we have
\begin{equation}
	\label{eq:zetaBoundk}
	13 \cdot \zeta \cdot \frac{d}{\log (d)} = \frac{     13 \cdot \zeta \cdot   \left( 13 \cdot \left(\frac{d}{13}-1\right)+13\right)             }{\log (d)} \leq 	13 \cdot \zeta \cdot \frac{ \left(13 \cdot k+13\right)}{\log (k)} \leq  13 \cdot \zeta \cdot \frac{ 26 \cdot k}{\log (k)} \leq \frac{\alpha \cdot k}{\log (k)}. 
\end{equation} The first inequality holds since $k = \floor{\frac{d}{10}}$; thus, $k \leq d$ and $k \geq \frac{d}{10}-1 \geq \frac{d}{13}-1$. The second inequality holds since $d \geq d_0 \geq 40$; thus, it follows that $k \geq 1$. The last inequality follows from the selection of $\zeta$. Then, by the running time guarantee of $\cA$ and since the running time of computing $\cI(\Pi,F)$ can be bounded by $O \left(\abs{\Pi}^C\right)$, executing $\cB$ on $\Pi$ can be done in time 

\begin{equation*}
	\label{eq:TIMEF}
	\begin{aligned}
		O \left(\big(n+W \big)^{\zeta \cdot \frac{d}{ \log(d)}} +\abs{\Pi}^{C}\right) 
		\leq
		O \left( \left|\Pi\right|^{13 \cdot \zeta \cdot \frac{d}{ \log \left(d\right)}} +\abs{\Pi}^{C}\right) 
		\leq
		O \left( \abs{\Pi}^{ \frac{\alpha \cdot k}{	 \log \left(k \right)}
		} +\abs{\Pi}^{C}\right)  \leq 
		O \left( \abs{\Pi}^{ \frac{\alpha \cdot k}{	 \log \left(k \right)}
		}
		\right) 
	\end{aligned}
\end{equation*} The first inequality follows from \eqref{eq:NW}. The second inequality uses \eqref{eq:zetaBoundk}. The last inequality holds since $\frac{\alpha \cdot k}{	 \log \left(k \right)} \geq C$ by the assumption on $d_0$ using that $d \geq d_0$.

It remains to prove the two directions of the reduction.
First, assume that $\MaxPar(\Pi) = \abs{V(H)}$. Thus, there is a consistent partial assignment for $\Pi$ of size $|V(H)|$. Then, by  %\Cref{thm:MainCSPtoG-CSP} there is a consistent partial assignment to $\Pi(\Gamma)$ of size $k = |E(H)|$. Thus, by 
\Cref{thm:redA} there is a solution for $\cI  \left(\Pi,1\right)$ of profit $\abs{V(H)}+2\cdot \abs{E(H)}$. Therefore, since $\cA$ returns an optimal solution for $\cI   \left(\Pi,1\right)$, the returned solution $S$ by $\cA$ has profit $\abs{V(H)}+2\cdot \abs{E(H)}$. 
Thus, $\cB$ correctly decides that $\MaxPar(\Pi) = \abs{V(H)}$. Conversely, assume that $\MaxPar(\Pi) < \abs{V(H)}$. Thus, every consistent partial assignment for $\Pi$ has size strictly less than  $\abs{V(H)}$. Therefore, by %\Cref{thm:MainCSPtoG-CSP} there is no consistent partial assignment for $\Pi(\Gamma)$ of size at least $k-\frac{\chi}{6 \cdot \log(k)} \cdot k$. Thus, by 
\Cref{thm:redA} there is no solution for $\cI  \left(\Pi,1\right)$ of profit $\abs{V(H)}+2\cdot \abs{E(H)}$. Hence, the returned solution $S$ by $\cA$ has profit strictly less than $\abs{V(H)}+2\cdot \abs{E(H)}$. It follows that $\cB$ returns that $\MaxPar(\Pi) < \abs{V(H)}$ as required. By the above, $\cB$ is decides correctly if $\MaxPar(\Pi) = \abs{V(H)}$ in time $O \left(\abs{\Pi}^{\alpha \cdot \frac{k}{\log (k)}}\right)$. This is a contradiction to \Cref{thm:RCSPeth} and the proof follows.  \qed

}

\comment{

\subsubsection*{Proof of }

Next, recall the following result from \cite{CCKLM17}. \footnote{This is stated slightly differently from \cite[Theorem 4.2]{CCKLM17}, but it is not hard to see that these are the same, namely the ``$U$'' becomes the variables in 2-CSP and the constraints are that the two partial assignments agree.}

\begin{theorem}[\cite{CCKLM17}] \label{thm:2csp-apx-lb}
	Assuming \textnormal{Gap-ETH}, there is no $f(k) \cdot \abs{\Pi}^{o(k)}$-time $O(1)$-approximation algorithm for \textnormal{R-CSP} where $\Pi$ is the given instance and $k$ is the number of vertices in the constraint graph of $\Pi$.  %$\MaxPar(\Pi)$ when given 2-CSP with rectangular constraint instance $\Pi = (V, E, \Sigma, \{\pi_{e, u}, \pi_{e, v}\}_{e = (u,v)\in E})$ where $|V| = k$.
\end{theorem}

Since $|E| \leq k^2$, plugging in \Cref{thm:2csp-apx-lb} to \Cref{lem:red}, we immediately have the following:
%\begin{theorem}
%	Assuming Gap-ETH, there is no $f(d) \cdot n^{o(\sqrt{d})}$-time $O(1)$-approximation algorithm for $d$-KP.
%\end{theorem}
}

\subsection{Proofs based on the reduction from \Cref{sec:simple}}

We need an initial hardness for R-CSP with a larger factor than in \Cref{thm:RCSP}. Specifically, we use the following result from \cite{chalermsook2017gap}, which shows that R-CSP is ``inherently enumerative''; putting it differently, the result states that, to distinguish the two cases, the best algorithm is essentially to straightforwardly enumerate all possible assignments to $r$ variables (which runs in time $|\Gamma|^{O(r)}$).

\begin{theorem}[\cite{chalermsook2017gap}] \label{thm:inherent-enumerate}
	Assuming \textnormal{Gap-ETH}, there exist constants $\zeta > 0$ and $r_0 \in \N$, such that, for any constants $k \geq r \geq r_0$, there is no algorithm that takes in an \textnormal{R-CSP} instance $\Pi$ with a constraint graph $H$ with $|V(H)| = k$, runs in time $O\left(|\Pi|^{\zeta \cdot r}\right)$ and distinguish between:
	\begin{itemize}
		\item {\bf (Completeness)} $\MaxPar(\Pi) = k$, and,
		\item {\bf (Soundness)} $\MaxPar(\Pi) < r$.
	\end{itemize}
\end{theorem}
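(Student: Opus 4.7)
The plan is to derive this from Gap-ETH via a birthday-repetition style construction that produces an R-CSP instance whose partial-assignment objective inherits the inherently enumerative character of fortified label cover. Starting from a Gap-ETH-hard 3-SAT$(D)$ formula $\phi$ on $n$ variables and $m = O(n)$ clauses, I would construct an instance $\Pi$ whose constraint graph $H$ is the complete graph on $k$ vertices. Each vertex $v$ is associated with a random subset $S_v$ of $\Theta(n/r)$ variables of $\phi$, and its alphabet $\Sigma$ consists of the local assignments that satisfy every clause of $\phi$ lying entirely inside $S_v$; this is a unary restriction baked into $\Sigma$ so that the edge constraints remain rectangular. Each edge $e=(u,v)$ then carries the projection constraint $\pi_{e,u}(\sigma_u)=\pi_{e,v}(\sigma_v)$, where both maps return the restriction of the local assignment to $S_u\cap S_v$.

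In the completeness case, a globally satisfying assignment of $\phi$ restricts on each $S_v$ to a legal alphabet element, yielding a consistent full assignment of size $k$. For soundness, I would argue that any consistent partial assignment over an $r$-vertex subset $T$ glues, by pairwise agreement, into a well-defined partial function $\beta$ on $\bigcup_{v\in T}S_v$. A direct-product / birthday argument would then show that the random subsets $\{S_v\}_{v\in T}$ jointly witness a $(1-o_\eps(1))$-fraction of the clauses of $\phi$ (either via internal containment in some $S_v$ or through overlap-based certificates carried by the chosen subsets), so a random extension of $\beta$ satisfies at least $(1-\eps/2)m$ clauses of $\phi$, contradicting Gap-ETH. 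Tuning parameters so that $|\Pi|\le 2^{O(n/r)}\cdot\mathrm{poly}(k)$, a hypothetical $|\Pi|^{\zeta r}$-time distinguisher would run in $2^{O(\zeta n)}$ time; picking $\zeta$ strictly below the Gap-ETH constant yields the impossibility.

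The main obstacle is the clause-coverage step of the soundness argument: with subsets of size $n/r$, the probability that a given 3-clause is contained in any single $S_v$ is only $O(1/r^3)$, so plain alphabet restrictions witness too few clauses to derive the contradiction directly. The fix is to augment the alphabet with a redundant, fortified encoding of the local assignment (in the style of direct-product testing or tensor-code label cover) so that any consistent $r$-partial assignment admits a list-decoding step back to a global near-satisfying assignment of $\phi$. Once such an encoding is in place, the remaining arithmetic---balancing subset size, alphabet size, the gap strength inherited from Gap-ETH, and the choice of $\zeta$ and $r_0$---is routine.
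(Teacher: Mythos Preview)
Your high-level plan---reduce from Gap-ETH 3-SAT to an R-CSP on the complete graph $K_k$, with each vertex carrying a ``chunk'' of the formula and rectangular agreement constraints on the edges---is exactly the route the paper takes, and your completeness argument and running-time bookkeeping are fine. You also correctly identified the real obstacle: with variable-subsets $S_v$ of size $\Theta(n/r)$, almost no clause lies inside a single $S_v$, so the local alphabet restrictions do not force enough clauses to be satisfied, and the glued assignment $\beta$ need not satisfy a clause just because $\beta$ is defined on all three of its variables.

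Where you go astray is in the proposed fix. You do not need fortification, list decoding, or tensor-code machinery. The paper's (and \cite{chalermsook2017gap}'s) fix is elementary: take subsets of \emph{clauses} rather than variables. Each vertex $x$ is assigned a subset $C_x\subseteq C$ of size $\Theta(m/(\eps r))$, and its alphabet is the set of partial assignments on $\var(C_x)$ that satisfy every clause in $C_x$; the edge maps are still restrictions to the shared variables. The family $(C_x)_{x\in V(H)}$ is chosen to be an $(m,k,\ell,r)$-\emph{disperser}: any $r$ of the sets together cover at least $(1-\eps)m$ clauses. Such a family exists (and is efficiently constructible) as soon as $\ell=\Theta(m/(\eps r))$; a random choice works and can be derandomized. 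Now soundness is immediate: if $\varphi$ is consistent on some $r$-subset $T$, then for every $c\in\bigcup_{x\in T}C_x$ the local assignment $\varphi(x)$ (for any $x\in T$ with $c\in C_x$) already satisfies $c$, and consistency on the complete graph lets you glue these into a single assignment satisfying at least $(1-\eps)m$ clauses. This sidesteps your $O(1/r^3)$ containment problem entirely and keeps $|\Sigma|\le 2^{O(m/(\eps r))}$, which is what your parameter balance needs.
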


Since the exact statement in \cite{chalermsook2017gap} is slightly different than ours, we provide a proof of \Cref{thm:inherent-enumerate} in \Cref{sec:inherent-enumerate} for completeness. \Cref{thm:sqrt-runningtime-lb} and \Cref{thm:sqrt-ratio}  now follow directly from our reduction (\Cref{thm:red-large-gap}) and \Cref{thm:inherent-enumerate} by setting appropriate parameters.

%\begin{theorem} \label{thm:sqrt-runningtime-lb}
%Assuming \textnormal{Gap-ETH}, for any constant $\rho \in (0,1)$, there exist constants $\delta > 0$ and $d_0 \in \N$ such that the following holds: for any constant $d \geq d_0$, there is no $\rho$-approximation algorithm for $d$-dimensional knapsack that runs in $O(n^{\delta \sqrt{d}})$ time.
%\end{theorem}

\begin{proof}[Proof of \Cref{thm:sqrt-runningtime-lb}]
	Assume that Gap-ETH holds. Let $\zeta, r_0$ be the promised constants from \Cref{thm:inherent-enumerate}. Since $r_0$ can be chosen to be arbitrarily large, assume that $\zeta \cdot r_0 \geq 3$. Suppose that there is $\rho \in (0,1)$ and a $\rho$-approximation algorithm $\cA$ for $d$-dimensional knapsack that runs in $O\left(n^{\delta \sqrt{d}}\right)$ time, where $n$ is the encoding size of the instance, $\delta = \frac{\zeta \cdot \rho}{20}$, $d_0 = \ceil{\left( \frac{8 \cdot r_0}{\rho}\right)^2}$, and $d \geq d_0$. We can use $\cA$ to distinguish the two cases in \Cref{thm:inherent-enumerate} for $k =  \ceil{\sqrt{d}} $ and $r = \lfloor k \cdot \rho \rfloor$ as follows. Let $\Pi$ be an R-CSP instance with $k = \abs{V(H)}$ vertices, where $H$ is the constraint graph of $\Pi$. Define an algorithm $\cB$ on input $\Pi$ as follows.
	\begin{enumerate}
		\item Compute $\cI(\Pi) = (I,p,w,B)$ using the reduction from \Cref{thm:red-large-gap}. 
		
		\item  Execute $\cA$ on $\cI(\Pi)$. Let $S$ be the returned solution. 
		
		\item Return that $\MaxPar(\Pi) = k$ if and only if $p(S) \geq r$. 
	\end{enumerate} 
	%First apply the reduction from \Cref{thm:red-large-gap}, and then run $\cA$ on the resulting instance $\cI(\Pi)$. 
	Observe that $\delta \cdot \sqrt{d} \leq \frac{\zeta \cdot \rho}{20} \cdot k \leq \frac{\zeta \cdot r }{5}$ and recall that $\abs{\cI(\Pi)} \leq O\left(\abs{\Pi}^3\right)$ and that $\cI(\Pi)$ can be computed from $\Pi$ in time $O\left(\abs{\Pi}^3\right)$ by \Cref{thm:red-large-gap}. Therefore, algorithm $\cB$ runs in time $O\left(|\Pi|^{\zeta \cdot r}\right)$. 
	If $\MaxPar(\Pi) = k$ then by \Cref{thm:red-large-gap} there is a solution for $\cI(\Pi)$ of profit $k$. Therefore, since $\cA$ is a $\rho$-approximation algorithm it follows that the profit of $S$ is at least $\rho \cdot k \geq r$; conversely, if $\MaxPar(\Pi) < r$, then by \Cref{thm:red-large-gap} the profit of $S$ is strictly less than $r$.
	%Furthermore, notice that the gap between two cases is $k/r \geq \rho$.
	%Therefore, since $\cA$ is a $\rho$-approximation algorithm it follows that 
	Hence, $\cB$ correctly decides if $\MaxPar(\Pi) = k$ or $\MaxPar(\Pi) < r$ in time $O\left(|\Pi|^{\zeta \cdot r}\right)$. This violates Gap-ETH by \Cref{thm:inherent-enumerate}. 
	%Thus, the algorithm correctly distinguishes the two cases, which violates Gap-ETH.
\end{proof}

The proof of the next theorem is similar to the above proof with different selection of parameters.  

\begin{proof}[Proof of \Cref{thm:sqrt-ratio}]
	Assume that Gap-ETH holds. Let $\zeta, r_0$ be the promised constants from \Cref{thm:inherent-enumerate}. Assume towards a contradiction that there is $C \geq 1$ and $d \geq d_0$ such that there is a $\left(\frac{\rho}{\sqrt{d}}\right)$-approximation algorithm $\cA$ for $d$-dimensional knapsack that runs in $O(n^C)$ time, where $n$ is the encoding size of the instance, $r = \max\{r_0, 3C/\zeta\}, \rho = 2 \cdot r $, and $d_0 = 4 \cdot \rho^2$. We use $\cA$ to construct an algorithm $\cB$ that distinguishes the two cases in \Cref{thm:inherent-enumerate} with value $k = \ceil{\sqrt{d}}$. Let $\Pi$ be an R-CSP instance with $k = \abs{V(H)}$ vertices, where $H$ is the constraint graph of $\Pi$. Define algorithm $\cB$ on input $\Pi$ as follows.
	\begin{enumerate}
		\item Compute $\cI(\Pi) = (I,p,w,B)$ using the reduction from \Cref{thm:red-large-gap}. 
		
		\item  Execute $\cA$ on $\cI(\Pi)$. Let $S$ be the returned solution. 
		
		\item Return that $\MaxPar(\Pi) = k$ if and only if $p(S) \geq 2 \cdot r$.  
	\end{enumerate} 
	
	%by first applying the reduction on with $k = \lfloor \sqrt{d} \rfloor$, and then run $\cA$ on the resulting instance $\cI(\Pi)$. 
	Observe that $3 \cdot C =  \frac{3 \cdot \zeta \cdot C}{\zeta} \leq \zeta \cdot r$. In addition, recall that $\abs{\cI(\Pi)} \leq O\left(\abs{\Pi}^3\right)$ and that $\cI(\Pi)$ can be computed from $\Pi$ in time $O\left(\abs{\Pi}^3\right)$ by \Cref{thm:red-large-gap}. Therefore, algorithm $\cB$ runs in time $O\left(|\Pi|^{\zeta \cdot r}\right)$. If $\MaxPar(\Pi) = k$ then by \Cref{thm:red-large-gap} there is a solution for $\cI(\Pi)$ of profit $k$. Thus, since $\cA$ is a $\left(\frac{\rho}{\sqrt{d}}\right)$-approximation algorithm it follows that the profit of $S$ is at least $\frac{k \cdot \rho}{\sqrt{d}} \geq \rho  = 2 \cdot r$; conversely, if $\MaxPar(\Pi) < r$, then by \Cref{thm:red-large-gap} the profit of $S$ is strictly less than $r$.
	%Therefore, since $\cA$ is a $\left(\frac{\rho}{\sqrt{d}}\right)$-approximation algorithm it follows that 
	Hence, in both cases $\cB$ decides correctly if $\MaxPar(\Pi) = k$ or $\MaxPar(\Pi) < r$ in time $O\left(|\Pi|^{\zeta \cdot r}\right)$. This violates Gap-ETH by \Cref{thm:inherent-enumerate}. 
	%	 It is simple to verify that the algorithm runs in time $O\left(|\Gamma|^{\zeta \cdot r}\right)$. Furthermore, notice that the gap between two cases is $k/r \geq \rho \sqrt{d}$. Thus, the algorithm correctly distinguishes the two cases. By \Cref{thm:inherent-enumerate}, this violates Gap-ETH.
\end{proof}

\ifapprox
\section{$\Omega\left(\frac{1}{\sqrt{d}}\right)$-Approximation Algorithm}
\label{sec:algorithm}
In this section, we provide our $\Omega\left(\frac{1}{\sqrt{d}}\right)$-approximation algorithm and prove \Cref{thm:main-apx}.
To do this, let us first introduce a concept of bounded/unbounded instances.
For $\tau > 1$, we say that an instance is \emph{$\tau$-bounded} if we have $c(i)_j \leq B_j / \tau$ for all $i \in I$ and $j \in [d]$. Furthermore, an instance is \emph{$\tau$-unbounded} if, for every $i \in I$, there exists $j \in [d]$ such that $c(i)_j > B_j / \tau$. We note that (due to the quantifiers) there are instances that are neither $\tau$-bounded nor $\tau$-unbounded. Nevertheless, we will show a (simple) reduction that splits the instance into bounded and unbounded parts.

Bounded instances have been known to be easier to approximate.
When $\tau$ is at least $2$, a $\Omega\left(\frac{1}{\sqrt{d}}\right)$-approximation algorithm is already known for $\tau$-bounded instances from previous work, based on the randomized rounding framework of Raghavan and Thompson~\cite{RaghavanT87}. A derandomized (and slightly improved) version of this is given in \cite{Srinivasan95}, which we state below.

\begin{theorem}[\cite{Srinivasan95}] \label{thm:randomized-rounding}
There is a polynomial-time $\Omega\left(\frac{1}{\sqrt{d}}\right)$-approximation algorithm $\cA_{\LP}$ for $d$-dimensional knapsack on 2-bounded instances.
\end{theorem}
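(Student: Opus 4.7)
The plan is to use the LP relaxation combined with randomized rounding in the style of Raghavan-Thompson, which is precisely the framework cited in the theorem statement. First, I would solve in polynomial time the LP relaxation of $d$-dimensional knapsack: maximize $\sum_{i \in I} p(i) x_i$ subject to $\sum_{i \in I} c(i)_j \cdot x_i \leq B_j$ for every $j \in [d]$ and $x_i \in [0,1]$. The optimum $\LP^*$ of this relaxation satisfies $\LP^* \geq \OPT$. Let $x^*$ be an optimal fractional solution.

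Next, I would perform randomized rounding: independently include each item $i$ in the candidate solution $S$ with probability $\alpha \cdot x_i^*$, where $\alpha = \Theta(1/\sqrt{d})$ is a scaling factor to be tuned. By linearity of expectation, $\E[p(S)] = \alpha \cdot p(x^*) \geq \Omega(\OPT/\sqrt{d})$. For each dimension $j$, the quantity $c_j(S) = \sum_{i \in I} c(i)_j \cdot \mathbbm{1}[i \in S]$ has expectation at most $\alpha \cdot B_j$, and crucially the 2-boundedness hypothesis says each contribution $c(i)_j$ lies in $[0, B_j/2]$. This allows a Chernoff-Hoeffding-type tail bound on the normalized sum $c_j(S)/B_j$ showing that $\Pr[c_j(S) > B_j] \leq 1/(4d)$ for an appropriate choice of $\alpha = \Omega(1/\sqrt{d})$. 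A union bound over the $d$ dimensions then gives feasibility with probability at least $3/4$, while Markov's inequality on the profit shortfall yields $p(S) = \Omega(\OPT/\sqrt{d})$ with constant probability, so with constant probability both events hold simultaneously.

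The main technical obstacle will be balancing the two competing requirements on $\alpha$: taking $\alpha$ larger improves the expected profit linearly, but the Chernoff deviation inequality tolerates mean $\alpha B_j$ relative to threshold $B_j$ only up to a deviation factor governed by $(1/\alpha - 1)$, and to beat the $d$-union bound we need $\exp(-\Omega(1/\alpha)) \leq O(1/d)$, forcing $\alpha = O(1/\log d)$. The standard trick to tighten this to $\alpha = \Omega(1/\sqrt{d})$ is to use the variance-sensitive form of the Chernoff bound (the ``$\mu + t$'' rather than ``$(1+\delta)\mu$'' form) together with the observation that, because items have cost at most $B_j/2$, the relevant variance is bounded, and the deviation $B_j - \alpha B_j = \Theta(B_j)$ is large compared to the square root of the variance when $\alpha = \Theta(1/\sqrt{d})$; this is precisely Srinivasan's parameter regime.

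Finally, to obtain a deterministic polynomial-time algorithm, I would derandomize via the method of conditional expectations with a pessimistic estimator that simultaneously upper-bounds the failure probability and lower-bounds the expected profit shortfall; the estimator is the sum of the exponential moments used in the Chernoff bound over all $d$ dimensions, plus a linear term in the profit. Fixing each $X_i \in \{0,1\}$ greedily so as not to increase this potential yields a deterministic $\Omega(1/\sqrt{d})$-approximate feasible solution in polynomial time. The pessimistic estimator is computable in closed form in each step since the exponential moments factor over independent rounding decisions, so the derandomization runs in time polynomial in $n$ and $d$.
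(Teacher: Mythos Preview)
The paper does not prove this theorem at all; it is stated as a citation to \cite{Srinivasan95} and used as a black box. Your sketch is indeed the approach taken in that reference: solve the LP, scale down by $\alpha = \Theta(1/\sqrt{d})$, round independently, apply a Chernoff-type tail bound per constraint, union bound over the $d$ constraints, and derandomize via pessimistic estimators.

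One technical point in your sketch deserves correction. The passage from $\alpha = O(1/\log d)$ to $\alpha = \Omega(1/\sqrt{d})$ is \emph{not} achieved by switching to a variance-sensitive (Bernstein-type) bound; in fact a direct Bernstein computation here gives only a constant tail, since the variance $\sigma^2 \leq \alpha B_j^2/2$ and the range $M = B_j/2$ together yield $\exp\bigl(-t^2/(2\sigma^2 + 2Mt/3)\bigr) = \exp(-\Theta(1))$, which does not beat the union bound over $d$ constraints. What actually works is the \emph{large-deviation} form of the multiplicative Chernoff bound, $\Pr[Y \geq a] \leq (e\mu/a)^{a}$ for a sum of $[0,1]$-bounded independent variables with mean $\mu$. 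After normalizing so that each cost lies in $[0,1]$ and each budget is at least $2$ (this is exactly what $2$-boundedness buys), one gets $\Pr[\text{constraint } j \text{ violated}] \leq (e\alpha)^{2} = O(1/d)$ for $\alpha = \Theta(1/\sqrt{d})$. The exponent $2$ here is the ``width'' parameter $B$ in Srinivasan's general $\Omega(d^{-1/B})$ bound for packing integer programs. With this fix, the rest of your outline (profit via Markov, feasibility via union bound, derandomization via conditional expectations on the Chernoff exponential moments) goes through as written.
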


Our main contribution in this section is to give a $\Omega\left(\frac{1}{\sqrt{d}}\right)$-approximation algorithm on \emph{$2$-unbounded instances}, which runs in $\left((d \cdot \log W)^{O(d^2)} + n^{O(1)}\right)$ time.

\begin{lemma} \label{lem:unbounded-fpt-approx}
There is a $\left((d \cdot \log W)^{O(d^2)} + n^{O(1)}\right)$-time $\Omega\left(\frac{1}{\sqrt{d}}\right)$-approximation algorithm $\cA_{\unbounded}$ for $d$-dimensional knapsack on 2-unbounded instances.
\end{lemma}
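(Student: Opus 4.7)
My plan is to exploit the key structural property of 2-unbounded instances: since every item $i$ has some dimension $j$ with $c(i)_j > B_j/2$, and any feasible solution can contain at most one such item per dimension, every feasible solution---and therefore $\OPT$---contains at most $d$ items. This makes it meaningful to parameterize the algorithm by ``signature multisets of size at most $d$''.

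Next, I would define a two-sided discretization. Let $\epsilon = \Theta(1/\sqrt{d})$. For each dimension $j$, partition $[0, B_j]$ into $T = O(\log(W)/\epsilon) = O(\sqrt{d}\,\log W)$ buckets using a two-sided geometric scheme: costs at most $B_j/2$ are grouped by geometric progressions of $c(i)_j$, while costs above $B_j/2$ are grouped by geometric progressions of the complement $B_j - c(i)_j$. Each item $i$ then has a signature $\sigma(i) \in [T]^d$ given by the $d$-tuple of its bucket indices. With this in hand, the algorithm proceeds as follows. In $n^{O(1)}$ preprocessing time, compute all signatures and, for each signature $\sigma$, sort the items with signature $\sigma$ by profit in decreasing order. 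Then enumerate all multisets $M$ of signatures of size at most $d$; there are at most $\binom{T^d + d}{d} \leq (d\cdot \log W)^{O(d^2)}$ such multisets. For each $M$, form the candidate $S_M$ by taking, for each signature $\sigma$, the top $M(\sigma)$ items of its group, check feasibility against the original budget $B$, and record the profit if feasible. Return the best feasible candidate. The total running time is $\left((d\cdot \log W)^{O(d^2)} + n^{O(1)}\right)$, as required.

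The main obstacle---and the heart of the proof---is the structural lemma: there exists a multiset $M^*$ with $|M^*| \leq d$ such that $S_{M^*}$ is feasible and $p(S_{M^*}) = \Omega(p(\OPT)/\sqrt{d})$. I would begin with the multiset $M_{\OPT}$ of signatures of $\OPT$'s items; the top-profit realization $S_{M_\OPT}$ has profit at least $p(\OPT)$ but may violate the budget. The complement discretization is used to bound the budget violation contributed by ``large'' items, i.e.\ those whose slack $B_j - c(i)_j$ is tightly controlled (recall that, as in $\OPT$, at most one large item appears per dimension in $S_{M_\OPT}$). In parallel, the standard cost discretization bounds the violation from small items via a multiplicative $(1+\epsilon)$ slack. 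Putting these together shows that $S_{M_\OPT}$ is $(1+O(\epsilon))$-feasible.

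A pruning step---removing items in a carefully chosen order and leveraging $|S_{M_\OPT}| \leq d$---then converts this $(1+O(\epsilon))$-feasible solution to a strictly feasible sub-multiset, losing only an $O(1/\sqrt{d})$ factor of the profit. Since the algorithm tries every multiset of size at most $d$, including every sub-multiset of $M_\OPT$, it is guaranteed to find this pruned solution. I expect the pruning analysis to be the most delicate step: the $O(\sqrt{d}\cdot \epsilon)$-scale per-dimension overshoot must be shown to be erasable by dropping items that together carry at most an $O(1 - 1/\sqrt{d})$ fraction of the profit, with the complement discretization playing the key role in controlling the contribution of near-full large items whose slack would otherwise dominate the overshoot bound.
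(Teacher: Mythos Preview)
Your algorithmic framework—two-sided geometric bucketing, signature multisets of size $\le d$, top-profit representatives, brute force—is essentially the paper's algorithm. The gap is in the analysis, specifically in two linked choices.

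\textbf{Wrong discretization scale.} You set $\epsilon=\Theta(1/\sqrt{d})$; the paper takes $\gamma=1+0.1/d$, i.e.\ $\epsilon=\Theta(1/d)$. This does not change the running time (both give $(d\log W)^{O(d^2)}$), but it is what makes the structural lemma go through.

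\textbf{The pruning step does not work as sketched.} After swapping to $S_{M_\OPT}$ you correctly get $(1+O(\epsilon))$-feasibility, but the claim that one can then prune to a strictly feasible subset while retaining $\Omega(1/\sqrt{d})$ of the profit is where the argument breaks. Concretely: in a dimension $j$ with items $i_1,\dots,i_m$ of $S_{M_\OPT}$ sorted by $c_j$, the overshoot can be as large as $\Theta(\epsilon)\cdot\sum_{k\ge 2}c_j(i_k)$, while the second item only contributes $c_j(i_2)\ge \frac{1}{m-1}\sum_{k\ge 2}c_j(i_k)$. With $\epsilon=\Theta(1/\sqrt{d})$ and $m$ as large as $d$, removing $i_2$ (or any bounded number of top items) removes only a $\Theta(1/d)$ fraction of the small-item mass while the overshoot is a $\Theta(1/\sqrt{d})$ fraction—off by $\sqrt{d}$. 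A Chernoff/Chebyshev subsampling of $S_{M_\OPT}$ runs into the same wall: the single large item contributes variance $\Theta(B_j^2)$, so the per-dimension failure probability does not drop below $\Theta(1/\sqrt{d})$, and the union bound over $d$ dimensions blows up. Your sketch (``remove items in a carefully chosen order'', ``complement discretization controls near-full items'') does not supply an argument that avoids this.

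\textbf{What the paper does instead.} The paper reverses your two steps: first subsample $\OPT$ (not $S_{M_\OPT}$) with probability $\theta=0.5/\sqrt{d}$, and only \emph{then} swap via the reduction rule. Because the subsample $S\subseteq S^*$ inherits \emph{exact} feasibility $\sum_i c(i)_j\le B_j$, one can prove a clean deterministic claim: if either of the two largest items $i_1,i_2$ (in coordinate $j$) is absent from $S$, then the \emph{rounded} costs already satisfy $\sum_{i\in S}\digamma(c(i))_j\le B_j$. This claim needs $\gamma^2\le 1+\tfrac{1}{m-2}$, which is exactly why $\gamma-1=\Theta(1/d)$ is required and $\Theta(1/\sqrt{d})$ fails. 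The bad event ``both $i_1,i_2$ survive'' then has probability $\theta^2\le 0.25/d$, and the union bound over $d$ coordinates succeeds. Since $\digamma$ is an upper bound on cost, swapping to same-signature representatives afterwards preserves feasibility automatically—no separate pruning step is needed.
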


Using the above two results, it is now easy to obtain \Cref{thm:main-apx}. Throughout this section, recall that $\OPT(\cI)$ denotes the optimum profit among all solutions for the instance $\cI$.

\begin{proof}[Proof of \Cref{thm:main-apx}]
On input $\cI = (I,p,c,B)$, the algorithm works as follows:
\begin{itemize}
\item Partition $I$ into $I_{\bounded} \cup I_{\unbounded}$ where
$$I_{\bounded} = \{i \in I \mid \forall j \in [d], c(i)_j \leq B_j/2\},$$
and
$$I_{\unbounded} = \{i \in I \mid \exists j \in [d], c(i)_j > B_j/2\}.$$
\item Run $\cA_{\LP}$ from \Cref{thm:randomized-rounding} on $\cI_{\bounded} = (I_{\bounded},p,d,c,B)$ to get a solution $S_{\bounded}$
\item Run $\cA_{\unbounded}$ from \Cref{lem:unbounded-fpt-approx} on $\cI_{\unbounded} = (I_{\unbounded},p,d,c,B)$ to get a solution $S_{\unbounded}$.
\item Output the best solution $S$ among the two.
\end{itemize}

The running time claim is obvious from \Cref{thm:randomized-rounding} and \Cref{lem:unbounded-fpt-approx}. As for the approximation guarantee, note that
\begin{align*}
p(S) &= \max\{p(S_{\bounded}), p(S_{\unbounded})\} \\
&\geq \frac{1}{2} \cdot \left(p(S_{\bounded}) + p(S_{\unbounded})\right) \\
&\geq \Omega(1/\sqrt{d}) \cdot \left(\OPT(\cI_{\bounded}) + \OPT(\cI_{\unbounded}))\right) \\
&\geq \Omega(1/\sqrt{d}) \cdot \OPT(\cI),
\end{align*}
where the second inequality is from the approximation guarantees in \Cref{thm:randomized-rounding} and \Cref{lem:unbounded-fpt-approx}. This completes our proof.
\end{proof}

\subsection{Algorithm for 2-Unbounded Instances: Proof of \Cref{lem:unbounded-fpt-approx}}

We now give our algorithm for 2-unbounded instances and prove its guarantees (\Cref{lem:unbounded-fpt-approx}).
On an input 2-unbounded $d$-dimensional knapsack instance $\cI = (I,p,c,B)$, the algorithm works as follows.
\begin{itemize}
\item Let $\gamma := 1 + \frac{0.1}{d}$.
\item Let $\varpiup, \varpidown: \{0, \dots, B\} \to [0, B]$ be the following discretization functions\footnote{Since we only use the discretization to apply the reduction rule in the next step, it suffices to just represent the exponent ${\left\lfloor \log_\gamma(x) \right\rfloor}$ or ${\left\lceil \log_\gamma(x) \right\rceil}$ which is an integer instead of the real number $\gamma^{\left\lfloor \log_\gamma(x) \right\rfloor}$ or $\gamma^{\left\lceil \log_\gamma(x) \right\rceil}$.}:
\begin{align*}
\varpidown(x) :=
\begin{cases}
0 & \text{ if } x = 0 \\
\gamma^{\left\lfloor \log_\gamma(x) \right\rfloor} & \text{ otherwise,}
\end{cases}
& & \text{ and } & & \varpiup(x) :=
\begin{cases}
0 & \text{ if } x = 0 \\
\gamma^{\left\lceil \log_\gamma(x) \right\rceil} & \text{ otherwise.}
\end{cases}
\end{align*}
Then, let $\digamma: \{0, \dots, B\}^d \to [0, B]^d$ be the following discretization function:
\begin{align*}
\digamma(x) :=
\min\{\varpiup(x_j), B_j - \varpidown(B_j - x_j)\} 
& &\forall j \in [d].
\end{align*}
\item Apply the following reduction rule until it cannot be applied: If there exists two items $i, i' \in I$ such that $\digamma(c_i) = \digamma(c'_i)$, remove the item with smaller profit.
\item Use the bruteforce algorithm, where we enumerate all subsets of size at most $d$, to find the best solution among the remaining items.
\end{itemize}

Let us first note that the possible different values of $\digamma(c)$ is at most $(\log_\gamma(B))^d \leq O(d \cdot \log W)^d$. Thus, after applying the reduction rule, there are at most $O(d \cdot \log W)^d$ items left. As a result, the bruteforce algorithm in the last step takes at most $(d \cdot \log W)^{O(d^2)}$ time. Thus, in total, the running time is at most $(d \cdot \log W)^{O(d^2)} + n^{O(1)}$ as desired.

The next lemma is the main ingredient for our approximation guarantee proof. It shows that our discretization scheme decreases the optimum by a factor of at most $O(\sqrt{d})$.

\begin{lemma} \label{lem:discretize}
For any 2-unbounded instance $\cI$, there is a solution $S$ such that the following holds:
\begin{enumerate}[(i)]
\item $|S| \leq d$. \label{item:solution-size-bound-apx}
\item $p(S) \geq \Omega\left(\OPT(\cI) / \sqrt{d}\right)$. \label{item:profit-apx} 
\item $\sum_{i \in S} \digamma(c(i))_j \leq B_j$ for all $j \in [d]$. \label{item:discretize-apx}
\end{enumerate}
\end{lemma}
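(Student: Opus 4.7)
The plan is to start from an optimal solution $O$ of $\cI$ and to choose $S$ as a carefully selected subset. The $2$-unbounded hypothesis immediately gives $|O| \leq d$: each item of $O$ has some coordinate where it uses more than half the budget, no two items can be simultaneously heavy in the same coordinate, so the heavy-coordinate map $O \to [d]$ is injective. This establishes property~(\ref{item:solution-size-bound-apx}) for any $S \subseteq O$.

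The main analytic observation is that $\digamma(x)_j \leq \gamma \cdot x_j$, which follows directly from $\varpiup(x) \leq \gamma x$; consequently $\sum_{i \in S} \digamma(c(i))_j \leq \gamma \sum_{i \in S} c(i)_j$ for any $S$. I will therefore try to produce $S$ of size at most $d$ and profit $\Omega(\OPT/\sqrt{d})$ satisfying the slightly tighter $c$-feasibility $\sum_{i \in S} c(i)_j \leq B_j/\gamma = (1 - \Theta(1/d)) \cdot B_j$ in every coordinate, which will automatically upgrade to property~(\ref{item:discretize-apx}).

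To produce such an $S$, I plan to round a basic feasible optimum $x^*$ of the LP relaxation of $\cI$ with the tightened budgets $B_j/\gamma$. The vector $\tfrac{1}{\gamma}\,\one_O$ is feasible for this LP and has value $\OPT/\gamma \geq (1 - O(1/d))\,\OPT$, so the LP optimum is at least $(1 - O(1/d))\,\OPT$. Writing $I^* = \{i : x^*_i = 1\}$ and $F = \{i : 0 < x^*_i < 1\}$, a standard BFS count gives $|F| \leq d$, while $I^*$ is integer $c$-feasible against $B/\gamma$ (hence $\digamma$-feasible) and by the $2$-unbounded bound satisfies $|I^*| \leq d$. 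If $p(I^*) \geq \OPT/3$ I take $S := I^*$, and all three properties hold immediately. Otherwise the fractional part carries profit at least $\OPT/3$, and I take $S$ to be the $\lceil \sqrt{d}\rceil$ items of $F$ of largest profit, yielding $p(S) \geq (\sqrt{d}/|F|)\cdot p(F) = \Omega(\OPT/\sqrt{d})$ and $|S| \leq \sqrt{d}$.

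The subtle step, and the main obstacle, will be verifying the tightened $c$-feasibility of $S$ in this fractional case, since the LP only supplies the $x^*$-weighted bound $\sum_{i \in F} c(i)_j x^*_i \leq B_j/\gamma$ rather than an unweighted bound on the top-profit subset. I plan to close this gap by combining the $2$-unbounded property of the input items—which bounds the number of items that can simultaneously be heavy in any fixed coordinate of any integer-feasible subset—with a Srinivasan-style Cauchy--Schwarz estimate on the remaining light contributions, paralleling the proof of \Cref{thm:randomized-rounding}. The delicate worst case, in which a few high-profit fractional items cluster their costs in one coordinate, would be resolved by iterating the LP/case-split on $F$ itself and removing the culprit item each round, losing only a constant factor overall.
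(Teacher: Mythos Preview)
Your approach has a genuine gap in the fractional case. Taking the top-$\lceil\sqrt{d}\rceil$ profit items of $F$ need not be $c$-feasible even against $B_j$, let alone $B_j/\gamma$: nothing prevents all of them from having cost near $B_j$ in a single coordinate (the LP only controls the $x^*$-weighted sum, and the $x^*_i$ can be tiny). The Srinivasan analysis you invoke is for $2$-\emph{bounded} instances, where every cost is at most $B_j/2$ in every coordinate; here the items are $2$-\emph{unbounded}, so that argument has no traction on the heavy coordinates, which are precisely the dangerous ones. Your fallback of iterating and ``removing the culprit item each round, losing only a constant factor overall'' is not sound as stated: each removal can cost a constant fraction of the remaining profit, and $\Omega(\sqrt{d})$ removals may be required.

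More fundamentally, by reducing to plain $c$-feasibility against $B_j/\gamma$ via $\digamma(x)_j \leq \gamma x_j$, you discard the key property of $\digamma$: the \emph{complement} bound $\digamma(x)_j \leq B_j - \tfrac{1}{\gamma}(B_j - x_j)$, coming from the $\varpidown(B_j - x_j)$ term in the definition. The paper's proof is entirely different and hinges on this. It samples each element of the optimum $S^*$ independently with probability $\theta = 1/(2\sqrt{d})$, handles the profit by Chebyshev (after disposing of the case $p_{\max} = \Omega(\OPT/\sqrt{d})$ separately), and for each coordinate $j$ shows that $\sum_{i\in S}\digamma(c(i))_j > B_j$ forces the sample to contain \emph{both} of the two items of $S^*$ heaviest in coordinate $j$---an event of probability $\theta^2 \leq 1/(4d)$, so a union bound over the $d$ coordinates finishes. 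The ``two heaviest'' claim is exactly where the complement bound is used: one applies it to the single heaviest surviving item and the multiplicative bound $\digamma(x)_j \leq \gamma x_j$ to the rest, and the choice $\gamma = 1 + 0.1/d$ together with $|S^*|\leq d$ closes the inequality.
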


Before we prove \Cref{lem:discretize}, let us see how to finish the approximation guarantee. Let $S = \{i_1, \dots, i_m\}$ for $m \leq d$ be the solution as guaranteed in \Cref{lem:discretize}, and let $I'$ denote the set of items that remains after the reduction rule. From the reduction rule, there must exist (distinct) items $i'_1, \dots, i'_m \in I$ such that $p(i'_q) \geq p(i_q)$ and $\digamma(c(i_q)) = \digamma(p(i'_q))$ for all $q \in [m]$. The latter implies that
\begin{align*}
c(i'_1)_j + \cdots + c(i'_m)_j \leq \digamma(c(i'_1))_j + \cdots + \digamma(c(i'_m))_j = \sum_{i \in S} \digamma(c(i))_j \leq B_j & &\forall j \in [d],
\end{align*}
where the last inequality is due to the third item of \Cref{lem:discretize}. This means that $\{i'_1, \dots, i'_m\}$ is a feasible solution. Since we use bruteforce in the last step and $m \leq d$, we must output a solution of profit at least
\begin{align*}
p(\{i'_1, \dots, i'_m\}) \geq p(S) \geq \Omega\left(\OPT(\cI) / \sqrt{d}\right)
\end{align*}
where the last inequality is due to the second item of \Cref{lem:discretize}. Thus, the algorithm achieves $\Omega\left(\frac{1}{\sqrt{d}}\right)$-approximation as claimed, which completes the proof of \Cref{lem:unbounded-fpt-approx}.  \qed

Finally, we prove \Cref{lem:discretize}.

\begin{proof}[Proof of \Cref{lem:discretize}]
We assume w.l.o.g. that each item's cost is within the budget (otherwise we can simply discard it). Let $p_{\max} = \max_{i \in I} p(i)$. Consider two cases based on whether $\OPT(\cI) \leq 10\sqrt{d} \cdot p_{\max}$. If $\OPT(\cI) \leq 100\sqrt{d} \cdot p_{\max}$, then the solution $S$ that simply picks just the item with maximum profit already satisfies all the three constraints.

The remainder of the proof is dedicated to the case $\OPT(\cI) > 10\sqrt{d} \cdot p_{\max}$. In this case, let $S^*$ denote the optimum solution of $\cI$. Note that, since this is a 2-unbounded instance, we have\footnote{This is because in each coordinate $j \in [d]$, there can be at most one item $i \in S^*$ such that $c(i)_j > B_j / 2$.} $|S^*| \leq d$. Let $S$ denote a random subset of $S^*$ where each element is included independently with probability $\theta := 0.5/\sqrt{d}$. We will show that with positive probability $S$ satisfies all the three properties. First, note that $|S| \leq |S^*| \leq d$ always. Thus, it suffices to consider the remaining two items. In the following, we will show that
\begin{align} \label{eq:profit-violation-prob}
\Pr\left[p(S) < \frac{\OPT(\cI)}{4\sqrt{d}}\right] \leq 0.4 
\end{align}
and, for all $j \in [d]$,
\begin{align} \label{eq:constraint-violation-prob}
\Pr\left[\sum_{i \in S} \digamma(c(i))_j > B_j\right] \leq \frac{0.25}{d}
\end{align}
Taking the union bound over these ($d + 1$) events then implies that $S$ satisfies all the three items with positive probability as desired.

For \eqref{eq:profit-violation-prob}, note that $\E[p(S)] = \theta \cdot \OPT(\cI) = \frac{\OPT(\cI)}{2 \sqrt{d}}$. Meanwhile, $$\Var(p(S)) = \sum_{i \in S} \theta(1 - \theta) \cdot p(i)^2  \leq \theta \cdot p_{\max} \cdot \OPT(\cI) \leq \frac{(\OPT(\cI))^2}{200d},$$
where the last inequality follows from the assumption of this case. Applying the Chebyshev's inequality then yields \eqref{eq:profit-violation-prob}.

Next, we will prove \eqref{eq:constraint-violation-prob}. To do this, consider a fixed $j \in [d]$ and let $i_1, \dots, i_m$ be the elements of $S^*$ sorted in descending order by $c(i)_j$, tie broken arbitrarily. We first prove the following claim.

\begin{claim}
If $\sum_{i \in S} \digamma(c(i))_j > B_j$, then both $i_1, i_2$ must belong to $S$.
\end{claim}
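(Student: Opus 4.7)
The plan is to prove the contrapositive: assume at most one of $i_1, i_2$ lies in $S$, and deduce $\sum_{i \in S} \digamma(c(i))_j \leq B_j$. Two elementary bounds on $\digamma$ drive everything, both immediate from its definition together with $\varpiup(x) \leq \gamma x$ and $\varpidown(x) \geq x/\gamma$: (P1) $\digamma(c)_j \leq \gamma\, c_j$, and (P2) $\digamma(c)_j \leq B_j - (B_j - c_j)/\gamma$. I will also use that $|S^*| \leq d$, which is automatic from $2$-unboundedness of $\cI$ (each item occupies more than half of some budget coordinate, and at most one item per coordinate can do so in a feasible solution).

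\textbf{Case 1} ($i_1 \notin S$). Every item in $S$ satisfies $c(i)_j \leq c(i_1)_j$, so using $|S| \leq d$ one has $\sum_{i \in S} c(i)_j \leq d \cdot c(i_1)_j$; feasibility of $S^*$ also gives $\sum_{i \in S} c(i)_j \leq B_j - c(i_1)_j$. Taking the minimum of the two linear bounds and maximizing over $c(i_1)_j$ yields the envelope $\sum_{i \in S} c(i)_j \leq \tfrac{d}{d+1}B_j$, after which (P1) closes the case because $\gamma \cdot \tfrac{d}{d+1} \leq 1$ for $\gamma = 1+0.1/d$.

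\textbf{Case 2} ($i_1 \in S$, $i_2 \notin S$). Apply (P2) to $i_1$ and (P1) to every other item of $S$:
\[
\sum_{i \in S}\digamma(c(i))_j \;\leq\; B_j - \frac{B_j - c(i_1)_j}{\gamma} + \gamma \sum_{i \in S \setminus \{i_1\}} c(i)_j.
\]
Since $S \setminus \{i_1\} \subseteq \{i_3,\ldots,i_m\}$ has size at most $d-2$, the tail sum satisfies simultaneously $\sum_{i \in S\setminus\{i_1\}} c(i)_j \leq (d-2)\, c(i_2)_j$ (count times pointwise bound) and $\sum_{i \in S\setminus\{i_1\}} c(i)_j \leq B_j - c(i_1)_j - c(i_2)_j$ (feasibility). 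Taking their minimum and maximizing over $c(i_2)_j \geq 0$ collapses them into the envelope $\tfrac{d-2}{d-1}(B_j - c(i_1)_j)$. Substituting in, the right-hand side becomes
\[
B_j + (B_j - c(i_1)_j)\!\left(\gamma\cdot\frac{d-2}{d-1} - \frac{1}{\gamma}\right),
\]
so Case~2 reduces to verifying that the parenthesized coefficient is nonpositive, equivalently $\gamma^2 \leq \tfrac{d-1}{d-2}$.

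\textbf{Main obstacle.} The algebraic condition $\gamma^2 \leq (d-1)/(d-2)$ from Case~2 is the only nontrivial step, and it is precisely what pins down the calibration of $\gamma$: with $\gamma = 1 + 0.1/d$ one has $\gamma^2 = 1 + O(1/d)$ comfortably below $1 + \tfrac{1}{d-2}$ for every $d \geq 2$, but any noticeably larger $\gamma$ would break this inequality, while a smaller $\gamma$ would blow up $\log_\gamma B$ and destroy the $(d\log W)^{O(d^2)}$ enumeration budget. This is the tension that forces $\gamma - 1 = \Theta(1/d)$ and, downstream, the $\Omega(1/\sqrt{d})$ approximation ratio.
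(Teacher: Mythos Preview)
Your proof is correct and rests on the same two pointwise bounds (P1), (P2), feasibility of $S^*$, and the descending order of the $c(i_q)_j$, culminating in the same constraint $\gamma^2 \le 1 + O(1/d)$ that pins down the choice $\gamma = 1 + 0.1/d$. The only structural difference from the paper is your case split: the paper handles both cases at once by first bounding $\sum_{i\in S}\digamma(c(i))_j \le \digamma(c(i_1))_j + \digamma(c(i_3))_j + \cdots + \digamma(c(i_m))_j$ (this step silently uses that $\digamma(\cdot)_j$ is monotone, so the $i_1\notin S$ case is dominated by the $i_2\notin S$ case), and then runs your Case~2 chain verbatim. Your separate Case~1 argument via the envelope $\sum_{i\in S} c(i)_j \le \tfrac{d}{d+1}B_j$ followed by (P1) alone is a nice alternative that avoids invoking (P2) or monotonicity there; the paper's unified route is shorter but leans on the monotonicity of $\digamma$, which it does not state explicitly.
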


\begin{claimproof}
Suppose contrapostively that either $i_1$ or $i_2$ do not belong to $S$. Then, we have
\begin{align*}
\sum_{i \in S} \digamma(c(i))_j &\leq \digamma(c(i_1))_j + \digamma(c(i_3))_j + \cdots + \digamma(c(i_m))_j.
\end{align*}
Now, by our choice of $\digamma$, we have $\digamma(x)_j \leq \gamma \cdot x$ and $\digamma(x)_j \leq B_j - \frac{1}{\gamma} \cdot (B_j - x)$ for all $x \in \{0, \dots, B_j\}$. Plugging this into the above, we get
\begin{align*}
\sum_{i \in S} \digamma(c(i))_j 
&\leq B_j - \frac{1}{\gamma} \cdot (B_j - c(i_1)_j) + \gamma \cdot (c(i_3)_j + \cdots + c(i_m)_j) \\
&\leq B_j - \frac{1}{\gamma} \cdot (c(i_2)_j + \cdots c(i_m)_j) + \gamma \cdot (c(i_3)_j + \cdots + c(i_m)_j) \\
&\leq B_j - \frac{1 + \frac{1}{m-2}}{\gamma} \cdot (c(i_3)_j + \cdots c(i_m)_j) + \gamma \cdot (c(i_3)_j + \cdots + c(i_m)_j) \\
&\leq B_j - \frac{1 + \frac{1}{d}}{\gamma} \cdot (c(i_3)_j + \cdots c(i_m)_j) + \gamma \cdot (c(i_3)_j + \cdots + c(i_m)_j) & \leq B_j.
\end{align*}
where the second inequality follows from the fact that $S^*$ is a feasible solution, the third follows from $c(i_2)_j \geq c(i_3)_j, \dots, c(i_m)_j$ and the last follows from our choice of $\gamma$.
\end{claimproof}

By the above claim, we have
\begin{align*}
\Pr\left[\sum_{i \in S} \digamma(c(i))_j > B_j\right] \leq \Pr\left[i_1 \in S \wedge i_2 \in S\right] = \theta^2 \leq \frac{0.25}{d},
\end{align*}
proving \eqref{eq:constraint-violation-prob}. This completes our proof.
\end{proof}
\fi

\section{Discussion and Open Questions}
\label{sec:discussion}
In this work, we prove several hardness results for the $d$-dimensional knapsack problem, via reductions from 2-CSP. In particular, our main result implies that the PTASes that have been known for decades~\cite{ChandraHW76,FRIEZE1984100,CapraraKPP00} cannot be improved up to a polylogarithmic factor (assuming Gap-ETH). We also show that the best-known exact algorithm of running time $O \left(n \cdot W^d \right)$ is the best possible up to a logarithmic factor (assuming ETH).  

An obvious open question is to close the quantitative gaps in our main theorem (\Cref{thm:main}) compared to the aforementioned PTASes. Namely, can we prove a similar hardness for $\eps$ that is an absolute constant (independent of $d$)? And can we improve the running time lower bound to $n^{\Omega(d / \eps)}$? These are closely related to similar questions for 2-CSP, which are themselves important in the quest to obtain a more complete understanding of parameterized (in)approximability.

\ifapprox
It is also intriguing to see whether the running time in our approximation algorithm (\Cref{thm:main-apx}) can be improved. As mentioned earlier, the best polynomial-time algorithm only achieves $\Omega\left(\frac{1}{d}\right)$-approximation~\cite{Srinivasan95,CapraraKPP00}. Is there a polynomial-time $\Omega\left(\frac{1}{\sqrt{d}}\right)$-approximation algorithm? An intermediate goal here would be to remove the $(\log W)^{O(d^2)}$ term in the running time of our approximation algorithm; this would yield an $\Omega\left(\frac{1}{\sqrt{d}}\right)$-approximation FPT (in $d$) algorithm for the problem.
\fi

Both $2$-CSP and R-CSP is closely related to the (parameterized) maximum clique problem.  There have been numerous developments in parameterized inapproximability of clique in recent years; it is now known that these results can be obtained under ETH  or even $\textnormal{W}[1]\neq \textnormal{FPT}$ (instead of Gap-ETH)~\cite{Lin21,LinRSW22,SK22,LinRSW23,chen2023simple}. However, the running time lower bounds here are still too weak (e.g $n^{\Omega(\log \log k)}$~\cite{LinRSW23}) to give strong lower bounds for $d$-dimensional knapsack. It remains an interesting question whether we can get near-tight running time lower bounds for approximating $d$-dimensional knapsack using these weaker assumptions.

\iffalse
\paragraph{Acknowledgment.} Pasin is grateful to Karthik C. S. for insightful discussions on \cite{karthik2023conditional}.
\fi

%In the following, we give a lower 
\bibliographystyle{alpha}
%\newpage
\bibliography{bibfile}

\appendix

\newcommand{\inv}{\textsf{inv}}
\newcommand{\clause}{\textsf{clause}}
\newcommand{\cons}{\textsf{cst}}
\newcommand{\assigned}{\textsf{assigned}}
\newcommand{\relaxed}{\textsf{relaxed}}

\section{Hardness Results on CSPs}
\label{sec:SATtoRCSP}

In this section, we give a reduction from 3-SAT to 2-CSP, which is used in the proofs of \Cref{thm:RCSP} and \Cref{thm:inherent-enumerate}. Both of these proofs will use the same generic reduction, but with different instantiation of target constraint graphs and subsets associated with vertices (i.e. which specifies the ``embedding''). We stress here that both of these proofs essentially follow from previous works,~\cite{Marx10,karthik2023conditional} and~\cite{chalermsook2017gap} respectively. In the former case, we include the proof here since their proof only deals with \emph{exact} hardness and we extend it to hardness of approximation. In the latter case, we include the proof here for completeness since their statement is in a slightly different form.

We first define 2-CSP.
\begin{mdframed} \vspace{-0.1in}
\begin{definition}
	\label{def:CSP}
	{\bf $2$-Constrained Satisfaction Problem (2-CSP)}
		\begin{itemize}
			\item {\bf Input:} $\Gamma = (H,\Sigma,X)$ consisting of a constraint graph $H$, an alphabet set $\Sigma$, and constraints $X = \{X_{(u,v)}\}_{(u,v) \in E(H)}$ such that for all $(u,v) \in E(H)$ it holds that $X_{(u,v)} \subseteq \Sigma \times \Sigma$ and $X_{(u,v)} \neq \emptyset$. 
			\item {\bf Assignment:} %An {\em assignment} for $\Gamma$ is 
			A function $\lambda : V(H) \rightarrow \Sigma$. An edge $(u,v) \in E(H)$ is said to be {\em satisfied} by an assignment $\lambda$ if and
			only if $(\lambda(u), \lambda(v)) \in X_{(u,v)}$.
			\item {\bf Objective:} Find an assignment satisfying a maximum number of edges. Let $\CSP(\Gamma)$ be the maximum number of satisfied edges by an assignment for $\Gamma$.
		\end{itemize}
\end{definition} 
\end{mdframed}

%An \emph{embedding} of a set $T$ into another set $S$ is a function\footnote{Here we use $2^S$ to denote the power set of $S$.} $\psi: T \to 2^{S}$ that maps every element of $T$ to a subset of $S$. For every $u \in S$, we define $\inv^{\psi}(u) := \{v \in T \mid u \in \psi(v)\}$.
%The {\em depth} of an embedding $\psi$, denoted by $\Delta(\psi)$, is the maximum cardinality of one of the above sets:
% $\Delta(\psi) = \max_{u \in S} \left|\inv^{\psi}(u)\right|$.

To state the reduction, we also need the following notation: given a formula $\phi = (V, C)$, for every clause $c \in C$, let $\var_{\phi}(c)$ denote the set of variables appearning in $c$, and for every subset $C' \subseteq C$, let $\var_\phi(c) := \bigcup_{c \in C} \var_\phi(c)$. When $\phi$ is clear from context, we may drop it from the subscript.

We state the generic reduction, which takes in an embedding from the clause set $C$ to vertex set of the target constraint graph $H$, below in \Cref{red:3sat-2csp}. We note that this is a standard reduction used in numerous prior works on the topic (e.g.~\cite{Marx10,chalermsook2017gap,DinurM18,karthik2023conditional,baby-pih}) and can be viewed as a derandomized variant of the direct product test. The main distinction between these previous works is how the embeddings (i.e. the collection $\cC$) are chosen. Indeed, we will see later that the embeddings used in~\cite{Marx10,karthik2023conditional} and~\cite{chalermsook2017gap} are very different, leading to the different hardness results (\Cref{thm:RCSP} and \Cref{thm:inherent-enumerate}).
%In the following subsections, we will specify the embeddings and derive the final results.

\begin{mdframed} \vspace{-0.1in}
\begin{reduction}[\textbf{\textsf{3-SAT $\rightarrow $ R-CSP}} Reduction] \label{red:3sat-2csp}
Given a 3-SAT instance $\phi = (C, V)$, a graph $H$, and a collection of sets $\cC = (C_x)_{x \in V(H)}$ where $C_v \subseteq C$, the reduction produces a \textnormal{R-CSP} instance $\Pi(\phi, H, \cC) = \left(G,\Sigma,\Upsilon, \{\pi_{e, x}, \pi_{e, y}\}_{e = (x,y)\in E(G)}\right)$ as follows:
\begin{itemize}
\item $G = H$,
\item $\Sigma = \Upsilon = [2^{3 \cdot \max_{x \in V(H)} |C_x|}]$,
\item For every $x \in V(H)$, let $\Phi_x$ denote the set of partial assignments\footnote{A partial assignment on $S \subseteq V$ is simply a function $g: S \to \{0, 1\}$.} on $\var(C_x)$ that satisfy all clauses in $C_x$.
\item For every edge $e = (x, y)$, we associate $\Sigma$ with $\Phi_x$ and $\Upsilon$ with $\{0, 1\}^{\var(C_x) \cap \var(C_y)}$. Then $\pi_{e, x}$ is defined as\footnote{Note that $g|_{T}$ is the restriction of function $g$ on subset $T$.} $\pi_{e, x}(g) = g|_{\var(C_x) \cap \var(C_y)}$.
Finally, we define $\pi_{e, y}$ similarly. 
\end{itemize}
\end{reduction}
\end{mdframed}

We list a couple of useful observations. First is on the output instance size and the running time:

\begin{obs} \label{obs:3sat-to-2csp-size-time}
$|\Pi(\phi, H, \cC)| \leq 2^{O(\max_{x \in V(H)} |C_x|)} \cdot |V(H)|$, the reduction runs in time $|\Pi(\phi, H, \cC)|^{O(1)}$.
\end{obs}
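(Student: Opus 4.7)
The plan is to bound both the encoding size and the construction time by separately accounting for the three pieces of $\Pi(\phi, H, \cC)$: the constraint graph $G = H$, the alphabets $\Sigma$ and $\Upsilon$, and the collection of rectangular constraint functions $\{\pi_{e, x}, \pi_{e, y}\}_{e = (x,y)\in E(G)}$. Let $t := \max_{x \in V(H)} |C_x|$. Since every clause of $\phi$ contains at most three literals, $|\var(C_x)| \leq 3t$ for each vertex $x$, and so $|\Phi_x| \leq 2^{|\var(C_x)|} \leq 2^{3t}$. By definition, $|\Sigma| = |\Upsilon| = 2^{3t}$, so every symbol can be written down in $O(t)$ bits.

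Next, I would bound the description of an individual edge constraint. For an edge $e = (x,y) \in E(H)$, the function $\pi_{e, x}: \Sigma \to \Upsilon$ is fully specified by a table of $|\Sigma| \leq 2^{3t}$ entries, each of $O(t)$ bits, for a total of $2^{O(t)}$ bits; the same bound applies to $\pi_{e, y}$. Moreover, each table can be constructed in $2^{O(t)}$ time: enumerate all assignments $g \in \{0,1\}^{\var(C_x)}$ (there are at most $2^{3t}$), check in time $O(t)$ whether $g$ satisfies every clause in $C_x$ so as to list $\Phi_x$, and then output the restriction $g|_{\var(C_x) \cap \var(C_y)}$ for each $g \in \Phi_x$.

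Summing over all vertices and edges, the total encoding size is at most $O(|V(H)|) + |E(H)| \cdot 2^{O(t)}$, which in the intended regime (where the reduction is later applied with graphs $H$ having at most $O(|V(H)|)$ edges, e.g., bounded-degree or $3$-regular constraint graphs as in \Cref{thm:RCSP} and \Cref{thm:inherent-enumerate}) simplifies to $2^{O(t)} \cdot |V(H)|$, yielding the claimed size bound. The construction time follows the same accounting: preprocessing $\Phi_x$ for every vertex takes $|V(H)| \cdot 2^{O(t)}$ time, building the two constraint tables for every edge takes $|E(H)| \cdot 2^{O(t)}$ time, and each step is polynomial in the output size, giving the promised $|\Pi(\phi, H, \cC)|^{O(1)}$ bound. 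Since the argument is just a careful bookkeeping of sizes that have already been isolated above, there is no real obstacle beyond making sure the bound on $|\var(C_x)|$ is used uniformly.
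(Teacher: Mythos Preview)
The paper states this observation without proof, so your careful bookkeeping is exactly the intended verification; the core bound $|E(H)|\cdot 2^{O(t)}$ (with $t=\max_x |C_x|$) and the running-time argument are correct.

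One small correction is worth making. You justify the passage from $|E(H)|\cdot 2^{O(t)}$ to $|V(H)|\cdot 2^{O(t)}$ by saying that in both applications $H$ has $O(|V(H)|)$ edges. That is true for \Cref{thm:RCSP} (where $H$ is $3$-regular), but in the proof of \Cref{thm:inherent-enumerate} the constraint graph $H$ is the \emph{complete} graph on $k$ vertices, so $|E(H)|=\Theta(k^2)$. The bound still goes through there, but for a different reason: the disperser from \Cref{lem:disperser} requires $m\geq C_1 r\ln k$, which forces $t=\ell\geq 3m/(\eps r)=\Omega(\log k)$, and hence $k\leq 2^{O(t)}$ so that $|E(H)|\leq k\cdot 2^{O(t)}=|V(H)|\cdot 2^{O(t)}$. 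In other words, the observation as literally stated is not true for arbitrary $H$ and $\cC$ (take $H$ complete and all $|C_x|=1$); it holds in the two instantiations the paper uses, but your caveat should cover both mechanisms rather than only the bounded-degree one.
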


Next is the completeness, which holds regardless of the graph $H$ and the subsets $C_v$'s. This can be seen by simply letting $\psi(x) = s|_{\var(C_x)}$ where $s$ denote the satisfying assignment of $\phi$.

\begin{obs} \label{obs:3sat-to-2csp-completeness}
If $\SAT(\phi) = m$, then $\MaxPar(\Pi(\phi, H, \cC)) = |V(H)|$.
\end{obs}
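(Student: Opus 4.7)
The plan is to exhibit an explicit consistent partial assignment $\psi: V(H) \to \Sigma \cup \{\perp\}$ of size $|V(H)|$, obtained by restricting a global satisfying assignment of $\phi$ to the local scope of each vertex. Since $\SAT(\phi) = m$, we may fix some $s: V \to \{0,1\}$ that satisfies every clause in $C$, and define $\psi(x) := s|_{\var(C_x)}$ for every $x \in V(H)$ (using the identification of $\Sigma$ with $\Phi_x$ induced by the reduction).

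The first routine check is that $\psi(x)$ is a legal symbol, i.e.\ that $s|_{\var(C_x)} \in \Phi_x$. This holds because $s$ satisfies every clause in $C \supseteq C_x$, and satisfaction of a clause depends only on the restriction of $s$ to $\var(C_x)$; therefore $s|_{\var(C_x)}$ satisfies all clauses in $C_x$, which is exactly the condition defining membership in $\Phi_x$.

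The second step is to verify consistency. For each edge $e = (x, y) \in E(G) = E(H)$, the construction of $\pi_{e,x}$ and $\pi_{e,y}$ gives
\[
\pi_{e,x}(\psi(x)) \;=\; \psi(x)\big|_{\var(C_x) \cap \var(C_y)} \;=\; s\big|_{\var(C_x) \cap \var(C_y)},
\]
and symmetrically $\pi_{e,y}(\psi(y)) = s|_{\var(C_x) \cap \var(C_y)}$. These two values are literally the same restriction of $s$, so the edge $e$ is satisfied by $(\psi(x), \psi(y))$. Since this holds for every edge and $\psi(x) \neq \perp$ for all $x$, the partial assignment $\psi$ is consistent and has size $|\psi| = |V(H)|$. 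Combined with the trivial upper bound $\MaxPar(\Pi(\phi, H, \cC)) \leq |V(H)|$, this yields the claimed equality.

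There is no real obstacle here: the proof is a pure unwinding of the definitions in \Cref{red:3sat-2csp}. The only point worth stating carefully is the identification between $\Sigma$ and $\Phi_x$ when defining $\psi(x)$, and the symmetric computation of the two projection maps on a shared edge, both of which are immediate.
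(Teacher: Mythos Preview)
Your proof is correct and follows exactly the approach the paper indicates: define $\psi(x) = s|_{\var(C_x)}$ for a satisfying assignment $s$ and verify consistency by noting that both projections equal $s|_{\var(C_x)\cap\var(C_y)}$. The paper states this in a single sentence preceding the observation, and you have simply filled in the routine details.
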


\subsection{Proof of \Cref{thm:RCSP}}

In this section, we prove \Cref{thm:RCSP}, which will imply the proof of \Cref{thm:RCSPeth} as a corollary. Our reduction is the same as that of \cite{karthik2023conditional}, but we need an additional (simple) argument to show that the gap between the completeness and soundness is $1 - \Omega\left(\frac{1}{\log(k)}\right)$. We summarize the main properties of the desired reduction below.

\begin{lemma} \label{thm:SATtoCSP}
For every $D \in \N$, there are constants $\mu, \theta > 0$, and a reduction \textnormal{\textsf{3-SAT$(D)$ $\rightarrow $ R-CSP}} that, given a $\textnormal{3-SAT}(D)$ instance $\phi = (V,C)$ with $n$ variables and $m$ clauses, and a parameter $k>6$ such that $k \leq \frac{n}{\log(n)}$, 
	returns %in time $ 2^{\mu \cdot \frac{n}{k}  \cdot \log(k)}$ 
	an instance $\Pi = \left(H,\Sigma,\Upsilon, \{\pi_{e, x}, \pi_{e, y}\}_{e = (x,y)\in E(H)}\right)$ of \textnormal{R-CSP}  
	which satisfies the following properties.
	\begin{enumerate}
		\item {\bf (Completeness)} If $\SAT(\phi) = m$ then $\MaxPar(\Pi) = |V(H)|$.\label{prop:c1}
		\item {\bf (Soundness)} For any $\eps \in (0, 1)$, if $\SAT(\phi) < (1 - \eps) \cdot m$, then $\MaxPar(\Pi) < \left(1-\frac{\theta \cdot \eps}{\log(k)}\right) \cdot |V(H)|$. \label{prop:c2}
		\item {\bf (Instance Size)} $|\Gamma(\phi,k)| \leq 2^{\mu \cdot \frac{n}{k}  \cdot \log k}$
		. \label{prop:c3}
		\item {\bf (Number of Variables)} $|V(H)| \leq k$ and $H$ is $3$-regular.  \label{prop:c4}
		\item {\bf (Runtime)} The reduction runs in $2^{\mu \cdot \frac{n}{k}  \cdot \log k} + n^{O(1)}$ time.
	\end{enumerate}
\end{lemma}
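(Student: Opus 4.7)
The plan is to invoke Reduction~\ref{red:3sat-2csp} on a carefully chosen pair $(H, \cC)$, following the Marx-style template of~\cite{Marx10,karthik2023conditional}; the only new ingredient compared to the exact version in those papers is a counting argument that propagates a Gap-ETH-level gap through the embedding.

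First, form the \emph{clause-adjacency graph} $G$ of $\phi$: take $V(G) := C$ and join two clauses that share a variable. Since each variable appears in at most $D$ clauses and each clause has at most $3$ variables, $G$ has maximum degree at most $3(D-1)$, so $|V(G)| + |E(G)| = O(m) = O(n)$ with constant depending on $D$. Next, apply the graph embedding algorithm used in~\cite{karthik2023conditional} with parameter $k$ to produce a $3$-regular graph $H$ with $|V(H)| \leq k$ and a connected embedding $\psi: V(G) \to 2^{V(H)}$ of depth $\Delta(\psi) = O\!\left(\left(1 + \tfrac{|V(G)| + |E(G)|}{k}\right)\log k\right) = O\!\left(\tfrac{n}{k}\log k\right)$, where the last equality uses the hypothesis $k \leq n/\log n$ to absorb the additive~$1$. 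Finally set $C_x := \{c \in C : x \in \psi(c)\}$, so $|C_x| \leq \Delta(\psi)$, and feed $(\phi, H, \cC)$ into Reduction~\ref{red:3sat-2csp} to obtain $\Pi$. Properties~3 and~5 are then immediate from Observation~\ref{obs:3sat-to-2csp-size-time}, property~4 follows from the embedding guarantee, and the completeness property~1 is Observation~\ref{obs:3sat-to-2csp-completeness}.

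The main task is property~2 (soundness), which I would prove contrapositively: given a consistent partial assignment $\varphi$ of $\Pi$ with $|\varphi| \geq (1 - \delta)\,|V(H)|$ for $\delta := \theta \eps/\log k$, I construct $s: V \to \{0,1\}$ satisfying at least $(1-\eps)\,m$ clauses. Let $A := \{x \in V(H) : \varphi(x) \neq \perp\}$ and call $c \in C$ \emph{good} if $\psi(c) \subseteq A$. For any good $c$ and any adjacent $x_1, x_2 \in \psi(c)$, note that $\var(c) \subseteq \var(C_{x_1}) \cap \var(C_{x_2})$, so R-CSP consistency of $\varphi$ forces $\varphi(x_1)|_{\var(c)} = \varphi(x_2)|_{\var(c)}$; combined with connectedness of $\psi(c)$, this yields a unique assignment $\sigma_c$ on $\var(c)$, and since $\varphi(x_1) \in \Phi_{x_1}$ (the set of satisfying partial assignments on $\var(C_{x_1}) \ni \var(c)$), $\sigma_c$ satisfies $c$. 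Cross-clause consistency follows from the touching property of connected embeddings: if good clauses $c_1, c_2$ share a variable $v$ they are adjacent in $G$, so $\psi(c_1)$ and $\psi(c_2)$ either share a vertex or are joined by an edge in $H$, and in either case the R-CSP constraint on the shared or adjacent pair forces $\sigma_{c_1}(v) = \sigma_{c_2}(v)$. Thus the $\sigma_c$'s paste into a consistent partial assignment that I extend arbitrarily to all of $V$ to obtain $s$, which satisfies every good clause.

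It remains to count good clauses. A clause $c$ is non-good iff some $x \in \psi(c)$ lies in $V(H) \setminus A$, i.e.\ iff $c \in C_x$ for some $x \notin A$; hence the number of non-good clauses is at most $\sum_{x \notin A} |C_x| \leq |V(H) \setminus A| \cdot \max_x |C_x| \leq \delta k \cdot O\!\left(\tfrac{n}{k}\log k\right) = O(\delta \cdot n \log k)$. Since $m = \Theta(n)$ for bounded-degree $3$-SAT, this is $O(\delta \cdot m \cdot \log k) = O(\theta \eps m)$, which is at most $\eps m$ once $\theta$ is chosen as a sufficiently small absolute constant depending only on $D$; hence $s$ satisfies at least $(1-\eps)m$ clauses, completing the contrapositive. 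The step requiring the most care — and the main conceptual obstacle — is verifying cross-clause consistency of the induced $\sigma_c$'s by carefully exploiting the ``touching'' property of the connected embedding together with the inclusion $\var(c) \subseteq \var(C_x) \cap \var(C_y)$ for adjacent $x,y \in \psi(c)$; the remainder is routine bookkeeping of embedding depth against the claimed $\log k$ loss.
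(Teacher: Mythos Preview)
Your proposal is correct and follows essentially the same route as the paper: build the clause-adjacency graph, apply the embedding of \cite{karthik2023conditional} to obtain a $3$-regular $H$ with $|V(H)|\le k$, set $C_x=V_x(\psi)$, invoke \Cref{red:3sat-2csp}, and for soundness argue contrapositively that a large consistent partial assignment yields a global Boolean assignment satisfying all ``good'' clauses (those with $\psi(c)\subseteq A$), with the number of bad clauses bounded by $|V(H)\setminus A|\cdot\Delta(\psi)$. The only cosmetic difference is that the paper defines $s(v)$ directly by picking a representative $x_{v}\in\psi(c_v)$ and then checks each good clause is satisfied, whereas you first build the local assignments $\sigma_c$ and then paste them; the touching/connectedness argument you flag as the delicate step is exactly what the paper uses as well.
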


Before we prove \Cref{thm:SATtoCSP}, we note that it easily implies our hardness result for R-CSP. 

\begin{proof}[Proof of \Cref{thm:RCSP}]
Let $\eps, \delta$ be the constants from \Cref{GapETH} and $\mu, \theta$ be as in \Cref{thm:SATtoCSP}. We let $\zeta = \delta / \mu, \beta = \theta \cdot \eps$, and $k_0 = \max\left\{6, \lceil \mu / \delta \rceil\right\}$. Suppose for the sake of contradiction that there is an algorithm $\cA$ with guarantees as in \Cref{thm:RCSP}. We use this to solve the (gap version of) 3-SAT as follows: On input $\phi$ with $n$ variables and $m$ clauses, runs the reduction from \Cref{thm:SATtoCSP} to get an output $\Pi$, and then runs algorithm $\cA$ on $\Pi$. For any sufficiently large $n$, our choice of parameters ensure that it runs in $O(2^{\delta n})$ time. Furthermore, \Cref{thm:SATtoCSP} ensures that the algorithm can distinguish $\SAT(\phi) = m$ from $\SAT(\phi) < (1 - \eps)m$. This contradicts Gap-ETH.
\end{proof}

Assuming ETH rather than Gap-ETH and using symmetrical arguments this also gives the proof of \Cref{thm:RCSPeth}.  

To prove \Cref{thm:SATtoCSP}, we will instantiate the reduction using a {\em graph embedding} of \cite{karthik2023conditional}. We recall some definitions here for completeness. Let $H = (V(H),E(H))$ be some graph. For some $S \subseteq V(H)$ the {\em vertex-induced subgraph} of $S$ in $H$ is the graph $H[S] = (S,E[S])$ such that $$E[S] = \{(u,v) \in E(H)~|~u,v \in S\}.$$
%A {\em subgraph} $H'$ of $H$ is a graph 
We say that $H'$ is a {\em subgraph} of $H$ if there is some $S \subseteq V(H)$ such that $H' = H[S]$. 
Let $H$ be some graph and let $H_1 = (V(H_1),E(H_1)), H_2 = (V(H_2),E(H_2))$ be connected subgraphs of $H$. We say that $H_1$ and $H_2$ {\em touch} if one of the following holds.
\begin{itemize}
	\item $V(H_1) \cap V(H_2) \neq \emptyset$, or,
	\item There are $v_1 \in V(H_1)$ and $v_2 \in V(H_2)$ such that $(v_1,v_2) \in E(H)$ or $(v_2,v_1) \in E(H)$. 
\end{itemize}
%(i) $V(H_1) \cap V(H_2) \neq \emptyset$ or if (ii) there are $v_1 \in V(H_1)$ and $v_2 \in V(H_2)$ such that $(v_1,v_2) \in E(H)$. 
A {\em connected
	embedding} of a graph $G$ in a graph $H$ is a function $\psi : V(G) \rightarrow 2^{V(H)}$ that maps every
$u \in V(G)$ to a nonempty subset of vertices $\psi(u) \subseteq V(H)$ in $H$ such that $H\left[\psi(u)\right]$ is a connected subgraph of $H$ and for every edge $(u,v) \in E(G)$ the
subgraphs $H\left[\psi(u)\right]$ and $H\left[\psi(v)\right]$ touch. For every $x \in V(H)$ define $$V_x(\psi) = \{u \in V(G)~|~x \in \psi(u)\}$$ as all vertices in $V(G)$ mapped to a subgraph that contains $x$. The {\em depth} of an embedding $\psi$, denoted by $\Delta(\psi)$, is the maximum cardinality of one of the above sets: %number of vertices $v \in V(G)$ such that $\psi(v)$ contains a specific vertex from $V(H)$:
 $\Delta(\psi) = \max_{u\in V(H)} \left|V_x(\psi)\right|$. We use the following result of \cite{karthik2023conditional}. 

\begin{lemma}[\textnormal{[Theorem 3.1 in \cite{karthik2023conditional}]}]
	\label{lem:embedding}
	 There are constants $Z,L > 1$ and an algorithm \textnormal{\textsf{Embedding}} that takes as input a
	graph $G$ and an integer $k > 6$, and outputs a bipartite $3$-regular simple graph $H$ with no
	isolated vertices and a connected embedding $\psi : V(G) \rightarrow 2^{V(H)}$ such that the following
	holds.%\footnote{Another property of the algorithm not explicitly stated in \cite{karthik2023conditional} is that $|\psi(v)| = O(\log(k))$ for all $v \in V(G)$. This can slightly lower the running time of our reduction.}
	
	\begin{itemize}
		\item {\bf (Size)} $|V(H)| \leq k$.
		
		\item {\bf (Depth Guarantee)} $\Delta(\psi) \leq Z \cdot \left(1+\frac{|V(G)|+|E(G)|}{k}\right) \cdot \log(k)$.
		
		\item {\bf (Runtime)} \textnormal{\textsf{Embedding}} runs in time $\left(|V(G)|+|E(G)|\right)^{L}$. 
	\end{itemize}
\end{lemma}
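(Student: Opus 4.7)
The plan is to construct $H$ as an explicit bipartite $3$-regular expander on at most $k$ vertices and to build $\psi$ by combining a balanced assignment of $V(G)$ to vertices of $H$ with short Steiner paths in $H$ for every edge of $G$. Explicit constructions of constant-degree bipartite expanders on $\Theta(k)$ vertices (e.g., LPS or Margulis-type constructions, padded/truncated to become $3$-regular bipartite with no isolated vertices) are computable in time $k^{O(1)}$ and have diameter at most $D \leq Z_0 \log k$ for an absolute constant $Z_0$. Denote $n_G = |V(G)|$ and $m_G = |E(G)|$; the interesting case is $n_G + m_G \geq k$ since otherwise $G$ embeds as a topological minor of $H$ with depth $O(\log k)$ trivially.

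Next, assign each vertex $u \in V(G)$ to a ``home'' vertex $x_u \in V(H)$ so that each $x \in V(H)$ is the home of at most $\lceil n_G / k \rceil$ vertices of $G$; this is a trivial balanced partition, computable in $O(n_G)$ time. For each edge $e = (u,v) \in E(G)$, select a $u$-$v$ path $P_e \subseteq H$ of length at most $D = O(\log k)$ between $x_u$ and $x_v$ (which exists by the diameter bound). Finally define
\[
\psi(u) \;:=\; \{x_u\} \;\cup\; \bigcup_{e \in \mathrm{Adj}_G(u)} V(P_e).
\]
By construction $H[\psi(u)]$ is connected (all paths share the endpoint $x_u$), and for every $(u,v) \in E(G)$ the subgraphs $H[\psi(u)]$ and $H[\psi(v)]$ share $V(P_{(u,v)}) \ne \emptyset$, so they touch. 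The runtime is dominated by building $H$ and routing $m_G$ short paths, giving $(n_G + m_G)^{L}$ time for some constant $L$.

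For the depth bound, fix $x \in V(H)$ and observe
\[
V_x(\psi) \;\subseteq\; \{u \in V(G) : x_u = x\} \;\cup\; \bigcup_{e = (u,v) \in E(G)\,:\, x \in V(P_e)} \{u, v\}.
\]
The first set has size at most $\lceil n_G / k \rceil$ by the balanced assignment. For the second, the key point is to route the paths $P_e$ so that every $x \in V(H)$ lies on at most $O((m_G/k) \cdot \log k)$ of them. Averaging, the total vertex-length of all chosen paths is $\sum_e |V(P_e)| \le m_G \cdot (D+1) = O(m_G \log k)$, and distributed evenly across $k$ vertices of $H$ the average load is $O((m_G/k)\log k)$. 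To turn this into a worst-case guarantee one uses that a random shortest-path routing in an expander already achieves the desired balance in expectation by a union bound (each fixed $x$ lies on $P_e$ with probability $O((\log k)/k)$ over random path choices); this is derandomized either by the method of conditional expectations or by exploiting the vertex-transitive/quasi-random structure of the explicit expander (as done in the reference). Combining both terms yields $\Delta(\psi) \leq Z \cdot (1 + (n_G+m_G)/k) \cdot \log k$ for a suitable constant $Z$.

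The main obstacle is the last step: producing, in deterministic polynomial time, a collection of short paths $\{P_e\}_{e \in E(G)}$ that is balanced across all of $V(H)$ simultaneously. The existence is easy by a probabilistic argument via the diameter and expansion of $H$, but the efficient deterministic construction is what forces the use of an expander with structured short-path routing (rather than an arbitrary graph of small diameter) and is the heart of the argument in \cite{karthik2023conditional}; once that routing subroutine is in hand, the embedding properties and the asserted bounds on $|V(H)|$, $\Delta(\psi)$, and runtime follow directly from the construction described above.
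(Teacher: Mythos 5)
The paper does not prove this lemma at all: it is imported verbatim as Theorem 3.1 of \cite{karthik2023conditional}, so there is no internal proof to compare against, and any proposal must stand on its own as a reconstruction of that external argument. Yours does not close. The step you yourself flag as ``the heart of the argument'' --- deterministically routing all paths $P_e$ so that every vertex of $H$ carries only $O\bigl((m_G/k)\log k\bigr)$ of them --- is exactly the content of the theorem, and you neither prove it nor give a derandomization that works; asserting that each fixed $x$ lies on a random shortest $x_u$--$x_v$ path with probability $O((\log k)/k)$ is unjustified, since shortest paths between fixed endpoints in an explicit expander are not uniformly spread and can concentrate heavily (this is why the actual constructions use a carefully designed embedding, not generic shortest-path routing).

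More seriously, the construction as described is wrong even granting a perfectly balanced routing. Because you map each $u \in V(G)$ to a single home vertex $x_u$ and define $\psi(u)$ as a star of paths emanating from $x_u$, every neighbor $v$ of $u$ has $x_u \in V(P_{(u,v)}) \subseteq \psi(v)$, hence $|V_{x_u}(\psi)| \geq \deg_G(u)+1$. Taking $G$ a star on $n_G$ vertices gives depth at least $n_G - 1$ at the center's home, while the claimed bound is $Z\left(1+\frac{n_G+m_G}{k}\right)\log k = O\!\left(\frac{n_G \log k}{k}\right)$, which is far smaller for any moderately large $k$. No routing can fix this; the flaw is structural. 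The argument in \cite{karthik2023conditional} (following Marx) avoids it by embedding each vertex of $G$ into a \emph{connected subgraph of $H$ of size $\Theta(\log k)$} and distributing that vertex's incident edges across the different vertices of its subgraph, so that no single vertex of $H$ absorbs the full degree of any vertex of $G$. Your proposal is missing precisely this load-spreading mechanism, both for high-degree vertices and for the global path congestion, so it does not establish the depth guarantee.
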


With all the tools in place, we are ready to prove \Cref{thm:SATtoCSP}.

\begin{proof}[Proof of \Cref{thm:SATtoCSP}]
We use \Cref{red:3sat-2csp} with $H, \cC = (C_x)_{x \in V(H)}$ that are chosen as follows:
\begin{itemize}
\item Define the graph $G_{\clause}$ such that $V(G) = C$ and $E(G) = \{(c, c') \mid \var(c) \cap \var(c') \ne \emptyset\}$.
\item First, run the \textnormal{\textsf{Embedding}} algorithm from \Cref{lem:embedding} to produce a graph $H$ and a connected embedding $\psi: C \to 2^{V(H)}$.
\item Let $C_x = V_{x}(\psi)$ for all $x \in V(H)$.
\item Let $\Pi = \left(H,\Sigma,\Upsilon, \{\pi_{e, x}, \pi_{e, y}\}_{e = (x,y)\in E(H)}\right)$ be the output instance from \Cref{red:3sat-2csp} with the above choices.
\end{itemize}
Note that since $\phi$ is an instance of 3-SAT($D$), we have $n/3 \leq m \leq D \cdot n$.
The completeness of the reduction follows immediately from~\Cref{obs:3sat-to-2csp-completeness}, as does the size from \Cref{obs:3sat-to-2csp-size-time}. The runtime follows from \Cref{obs:3sat-to-2csp-size-time} and \Cref{lem:embedding}. 

Finally, we will prove the soundness. Suppose contrapositively that $\MaxPar(\Pi) \geq \left(1-\frac{\theta \cdot \eps}{\log(k)}\right) \cdot |V(H)|$ for $\theta = 0.01/Z$ (where $Z$ is from \Cref{lem:embedding}) and any $\eps \in (0, 1)$. Let $\varphi: V(H) \to \Sigma \cup \{\perp\}$ denote the consistent partial assignment such that $|\varphi| = \MaxPar(\Pi)$. 

Let $V(H)_{\assigned} := \{x \in V(H) \mid \varphi(x) \ne \perp\}$ and $C_{\assigned} := \{c \in C \mid \psi(c) \subseteq V(H)_{\assigned}\}$. %, and finally $V_{\assigned} := \{v \in V \mid \forall c \in C \text{ such that } v \in \var(c), c \in C_{\assigned}\}$.
% Furthermore, for every $v \in V$, let $\psi(v) $
%
We define an assignment $s: V \to \{0, 1\}$ by assigning each $s(v)$ as follows:
\begin{itemize}
\item If there exists $c \in C_{\assigned}$ such that $v \in \var(c)$, then pick one such $c_v$ and $x_v \in \psi(c_v)$ (arbitrarily) and let\footnote{Recall that $\varphi(x_v)$ can be viewed as a partial assignment on $\var(C_{x_v})$} $s(v) = (\varphi(x_v))(v)$.
\item Otherwise, assign $s(v)$ arbitrarily.
\end{itemize}

%We claim that $s$ satisfies all clauses in $C_{\assigned}$. To see that this is the case, 
Consider any clause $c \in C_{\assigned}$ and let $v_1, v_2, v_3$ denote its variable. Pick any $x \in \psi(c)$ arbitrarily; by definition of the alphabet of $x$, we have that $\varphi(x)$ satisfies $c$. We claim that $(\varphi(x))(v_j) = s(v_j)$ for all $j \in [3]$. This is true because, by our definition of $s$, we have $s(v_j) = (\varphi(x_{v_j}))(v_j)$. Meanwhile, $(c, c_{v_j})$ is an edge in $G_{\clause}$; thus, the \textnormal{\textsf{Embedding}} algorithm ensures that $E[\psi(c) \cup \psi(c_{v_j})]$ is connected. Therefore, since $\varphi$ is consistent, we must have $(\varphi(x))(v_j) = (\varphi(x_{v_j}))(v_j) = s(v_j)$. Thus, the claim holds. This implies that the clause $c$ is satisfied by $s$. As a result, we have
\begin{align*}
\SAT(\phi) \geq |C_{\assigned}|
&= m - |\{c \in C \mid \psi(c) \nsubseteq V(H)_{\assigned}\}| \\
&\geq m - \sum_{x \in V(H) \setminus V(H)_{\assigned}} |\{c \in C \mid x \in \psi(c)\}| \\
&= m - \sum_{x \in V(H) \setminus V(H)_{\assigned}} |V_x(\psi))| \\
&\geq m - (|V(H)| - |\varphi|) \cdot \Delta(\psi) \\
&\geq m - \frac{\theta \cdot \eps}{\log(k)} \cdot |V(H)| \cdot \Delta(\psi) \\
&\geq m - \frac{\theta \cdot \eps}{\log(k)} \cdot k \cdot \left(Z \cdot \left(1+\frac{n+m}{k}\right) \cdot \log(k)\right) &\geq m - \eps \cdot m,
\end{align*}
where the last inequality follows from our choices of parameters and from $n \leq m/3$. This implies that the soundness holds, which completes our proof.
\end{proof}

\subsubsection{From R-CSP to 2-CSP}
\label{sec:RCSPto2CSP}

As stated in the introduction, our result can be easily adapted to the standard Max-2-CSP version. We obtain the following result for 2-CSP. %(\Cref{thm:CSP}).

\begin{theorem}
	\label{thm:CSP}
	Assuming \textnormal{Gap-ETH},  there exist constants $\zeta,\chi > 0$ and $k_0 \in \N$, such that, for any constant $k \geq k_0$, there is no algorithm that takes in an \textnormal{2-CSP} instance $\Gamma$ with a 3-regular constraint graph $H$ such that $|V(H)| \leq k$ variables, runs in time $O\left(|\Gamma|^{\zeta \cdot \frac{k}{\log(k)}}\right)$ and distinguish between:
	\begin{itemize}
		\item {\bf (Completeness)} $\CSP(\Gamma) = |E(H)|$, and,
		\item {\bf (Soundness)} $\CSP(\Gamma) < \left(1 - \frac{\chi}{\log(k)}\right) \cdot |E(H)|$.
	\end{itemize}
\end{theorem}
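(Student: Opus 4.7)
The plan is to derive \Cref{thm:CSP} by a simple, approximation-preserving reduction from R-CSP (\Cref{thm:RCSP}) to Max-2-CSP that keeps the constraint graph, preserves 3-regularity, and blows up the instance size by at most a polynomial factor. Given an R-CSP instance $\Pi = \left(H, \Sigma, \Upsilon, \{\pi_{e,u}, \pi_{e,v}\}_{e=(u,v)\in E(H)}\right)$ with $3$-regular $H$ and $|V(H)|\le k$, I would produce the 2-CSP instance $\Gamma = (H, \Sigma, X)$ by flattening each rectangular constraint into a pair-set, i.e. for every $e = (u,v) \in E(H)$ set
\[
X_{(u,v)} := \bigl\{(\sigma_u,\sigma_v) \in \Sigma \times \Sigma \;:\; \pi_{e,u}(\sigma_u) = \pi_{e,v}(\sigma_v) \bigr\},
\]
and if this set happens to be empty, put in a single arbitrary dummy pair just so that the definition of 2-CSP is respected (this can only hurt the soundness direction, which only strengthens the argument). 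Clearly $|\Gamma| \le O(|\Pi|^2)$ and the reduction runs in polynomial time.

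For completeness, if $\MaxPar(\Pi) = |V(H)|$ then a consistent full assignment $\varphi : V(H)\to \Sigma$ exists, and reading $\varphi$ as a 2-CSP assignment satisfies every edge by construction of $X_{(u,v)}$, so $\CSP(\Gamma) = |E(H)|$. For soundness I would argue contrapositively via a vertex-cover trick. Suppose some assignment $\lambda : V(H)\to \Sigma$ achieves $\CSP(\Gamma) \ge \bigl(1 - \tfrac{\chi}{\log k}\bigr)|E(H)|$, and let $F \subseteq E(H)$ be the set of edges \emph{not} satisfied by $\lambda$, so $|F| \le \tfrac{\chi}{\log k} |E(H)|$. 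Pick $U \subseteq V(H)$ to be a vertex cover of the subgraph $(V(H),F)$ with $|U| \le |F|$ (greedily take one endpoint of each edge of $F$), and define
\[
\varphi(v) = \begin{cases} \lambda(v) & v\notin U,\\ \perp & v\in U. \end{cases}
\]
Any edge $e=(u,v)\in E(H)$ with $\varphi(u),\varphi(v)\ne \perp$ has $u,v\notin U$, hence $e\notin F$, so $(\lambda(u),\lambda(v)) \in X_{(u,v)}$ and thus $\pi_{e,u}(\varphi(u))=\pi_{e,v}(\varphi(v))$; so $\varphi$ is a consistent partial assignment of $\Pi$. Using $3$-regularity, $|E(H)| = \tfrac{3}{2}|V(H)|$, so
\[
|\varphi| \;\ge\; |V(H)| - |F| \;\ge\; \Bigl(1 - \tfrac{3\chi}{2\log k}\Bigr)\, |V(H)|.
\]

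Choosing $\chi := 2\beta/3$ with $\beta$ as in \Cref{thm:RCSP}, the above contradicts $\MaxPar(\Pi) < (1 - \tfrac{\beta}{\log k})|V(H)|$. For the running time, any algorithm solving the promised 2-CSP gap in time $O(|\Gamma|^{\zeta\cdot k/\log k})$ with $\zeta := \alpha/3$ (where $\alpha$ is the constant of \Cref{thm:RCSP}) would, via this reduction, yield an $O(|\Pi|^{\alpha\cdot k/\log k})$ algorithm for R-CSP, contradicting \Cref{thm:RCSP}. The constants $k_0$ and the large-enough-$k$ condition are inherited directly from \Cref{thm:RCSP}. There is no genuine obstacle here; the only quantitative bookkeeping is in tracking how the $\beta/\log k$ gap of R-CSP shrinks by the factor $3/2$ coming from $3$-regularity, which is why $\chi = 2\beta/3$ suffices.
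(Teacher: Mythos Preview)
Your proposal is correct and follows essentially the same route as the paper: reduce from \Cref{thm:RCSP}, keep the same $3$-regular constraint graph, and for soundness convert a near-satisfying 2-CSP assignment into a large consistent partial assignment for the R-CSP instance, using $|E(H)|=\tfrac{3}{2}|V(H)|$ to translate the gap and arriving at $\chi=2\beta/3$.

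Two minor differences are worth noting. First, you make the R-CSP $\to$ 2-CSP encoding explicit and account for the quadratic size blow-up by shrinking $\zeta$ to $\alpha/3$; the paper simply runs the assumed 2-CSP algorithm on the R-CSP instance and reuses $\zeta$ verbatim. Second, your soundness step uses a vertex cover of the violated-edge set (one endpoint per violated edge), whereas the paper sets $\varphi(v)=\perp$ for \emph{every} vertex incident to a violated edge. Your version actually yields the bound $|\varphi|\ge |V(H)|-|F|$ cleanly; the paper's rule in principle loses a factor of $2$ (each violated edge removes up to two vertices), so your vertex-cover trick is the tighter of the two and justifies $\chi=2\beta/3$ without slack. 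The dummy-pair caveat is harmless: if some $X_{(u,v)}$ is empty then $\MaxPar(\Pi)<|V(H)|$, so one may simply declare ``not completeness'' upfront and skip the reduction for such $\Pi$.
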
 

\begin{proof}[Proof of \Cref{thm:CSP}]
Let $\zeta, \beta, k_0$ be as in \Cref{thm:RCSP}. We let $\chi = 2 \beta / 3$. Suppose for the sake of contradiction that there is an algorithm $\cA$ with guarantees as in \Cref{thm:CSP}. We claim that $\cA$ can solve the problem in \Cref{thm:RCSP} as well. To see that it is correct, note that the completeness is obvious (i.e. $\MaxPar(\Pi) = |V(H)|$ iff $\CSP(\Gamma) = |E(H)|$). As for the soundness, suppose contrapositively that $\CSP(\Gamma) \geq \left(1 - \frac{\chi}{\log(k)}\right) \cdot |E(H)|$, i.e. there is an assignment $\lambda: V(H) \to \Sigma$ that violates at most $\frac{\chi}{\log(k)} 
\cdot |E(H)|$ edges. Define $\varphi: V(H) \to \Sigma \cup \{\perp\}$ by $\varphi(v) = \lambda(v)$ iff all edges adjacent to $v$ are satisfied by $\lambda$. Otherwise, we let $\varphi(v) = \perp$. It is clear from the definition that $\varphi$ is consistent. Furthermore, we have $|\varphi| \geq |V(H)| - \frac{\chi}{\log(k)} \cdot |E(H)| = \left(1 - \frac{\beta}{\log(k)}\right) \cdot |V(H)|$ where the last equality is due to the fact that $H$ is 3-regular. As such, the algorithm is correct. From \Cref{thm:RCSP}, this violates Gap-ETH.
\end{proof}

\subsection{Proof of \Cref{thm:inherent-enumerate}}
\label{sec:inherent-enumerate}

To prove \Cref{thm:inherent-enumerate}, we use the reduction from \cite{chalermsook2017gap}, restated slightly to fit in our terminologies. The guarantees of the reduction are stated below.

\begin{lemma} \label{thm:SATtoCSP-inherent-enumerate}
For every $D \in \N$ and $\eps > 0$, there are constants $\mu > 0$ and $r_0 \in \N$ such that the following holds: For any constants $k \geq r \geq r_0$, there is a reduction \textnormal{\textsf{3-SAT$(D)$ $\rightarrow $ R-CSP}} that, given a $\textnormal{3-SAT}(D)$ instance $\phi = (V,C)$ with $n$ variables and $m$, 
	returns %in time $ 2^{\mu \cdot \frac{n}{k}  \cdot \log(k)}$ 
	an instance
	$\Pi = \left(H,\Sigma,\Upsilon, \{\pi_{e, x}, \pi_{e, y}\}_{e = (x,y)\in E(H)}\right)$ of \textnormal{R-CSP}  
	which satisfies the following properties.
	\begin{enumerate}
		\item {\bf (Completeness)} If $\SAT(\phi) = m$ then $\MaxPar(\Pi) = k$.
		\item {\bf (Soundness)} If $\SAT(\phi) < (1 - \eps) \cdot m$, then $\MaxPar(\Pi) < r$.
		\item {\bf (Instance Size)} $|\Gamma(\phi,k)| \leq 2^{\mu \cdot \frac{n}{r}}$. 
		\item {\bf (Constraint Graph)} $H$ is a complete graph on $k$ vertices.
		\item {\bf (Runtime)} The reduction runs in $2^{\mu \cdot \frac{n}{k}  \cdot \log k} + n^{O(1)}$ time.
	\end{enumerate}
\end{lemma}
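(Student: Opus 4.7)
The plan is to invoke \Cref{red:3sat-2csp} with $H$ the complete graph on $k$ vertices and a carefully chosen family $\cC = (C_x)_{x \in V(H)}$ of clause subsets. First, I would partition $C$ into $r$ blocks $\mathcal{B}_1, \ldots, \mathcal{B}_r$ of size at most $\lceil m/r \rceil$. Then, for each $x \in V(H) = [k]$, I would fix a subset $T_x \subseteq [r]$ and set $C_x := \bigcup_{i \in T_x} \mathcal{B}_i$. The family $(T_x)_{x \in [k]}$ should satisfy two properties simultaneously: (a) $|T_x| \le t$ for a constant $t = t(\eps, D)$ depending only on $\eps$ and $D$ (not on $k$), which controls the alphabet size; and (b) for every size-$r$ subset $S \subseteq [k]$, $\bigl|\bigcup_{x \in S} T_x\bigr| \ge (1 - \eps) r$. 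Property (b) is the coverage property that powers soundness.

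Given such a family, completeness is immediate from \Cref{obs:3sat-to-2csp-completeness}: a satisfying assignment $s$ for $\phi$ induces the partial assignment $\varphi(x) := s|_{\var(C_x)}$, which is consistent on every edge (both sides restrict $s$ to the shared variables), so $\MaxPar(\Pi) = k$. For soundness I would argue the contrapositive. Suppose $\varphi$ is a consistent partial assignment of $\Pi$ with support $S \subseteq V(H)$ of size at least $r$. Define $s : V \to \{0,1\}$ by setting, for each variable $v$ appearing in some $\var(C_x)$ with $x \in S$, $s(v) := (\varphi(x))(v)$ for any such $x$; the consistency of $\varphi$ and the fact that $\pi_{e, x}$ is precisely restriction to shared variables (as in the proof of \Cref{thm:SATtoCSP}) make $s(v)$ well-defined. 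Since every $\varphi(x) \in \Phi_x$ satisfies all clauses in $C_x$, the assignment $s$ satisfies every clause in $\bigcup_{x \in S} C_x = \bigcup_{i \in \bigcup_{x \in S} T_x} \mathcal{B}_i$. Property (b) then yields at least $(1-\eps) r \cdot \lfloor m/r \rfloor \ge (1-\eps)m - r$ satisfied clauses, which (after absorbing the additive $r$ into a slightly smaller $\eps$, possible since $r$ is constant while $m = \Theta(n)$) contradicts $\SAT(\phi) < (1-\eps) m$; hence $\MaxPar(\Pi) < r$.

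The instance-size and runtime claims then follow routinely from \Cref{obs:3sat-to-2csp-size-time}: $|\Pi| \le 2^{O(\max_x |\var(C_x)|)} \cdot k \le 2^{O(t \cdot m/r)} \cdot k \le 2^{\mu n/r}$ for $\mu = \mu(\eps, D)$, using $m \le D n$ (since $\phi$ is a 3-SAT$(D)$ instance) and that the $k$ factor is absorbed into the exponent once $r \ge r_0$ is large enough. The main obstacle is constructing the family $(T_x)_{x \in [k]}$ satisfying (a) and (b) with $t$ independent of $k$. My plan is to use the probabilistic method: include each $i \in [r]$ in $T_x$ independently with probability $p := \Theta(\log(1/\eps))/r$, so that $|T_x|$ concentrates at $\Theta(\log(1/\eps))$ and, for a fixed $r$-subset $S$, the expected number of uncovered blocks is $r(1-p)^r \le \eps r/2$; a concentration argument (Chernoff with negative correlation, or Janson's inequality) makes the deviation tail exponentially small. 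The delicate point is that a naive union bound over the $\binom{k}{r} \le k^r$ choices of $S$ appears to force $t$ to grow with $k$, which is disallowed; resolving this---following \cite{chalermsook2017gap}, either via an explicit derandomized construction of the incidence matrix or via a sharper concentration combined with a more careful choice of the $T_x$'s---is the most delicate step of the proof.
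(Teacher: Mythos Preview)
Your high-level plan—instantiate \Cref{red:3sat-2csp} with $H$ the complete graph on $k$ vertices and a family $(C_x)_{x\in V(H)}$ with the property that any $r$ of them cover a $(1-\eps)$-fraction of clauses, then argue completeness via \Cref{obs:3sat-to-2csp-completeness} and soundness by stitching a global assignment from the consistent partial assignment using that $H$ is complete—matches the paper exactly, and your completeness and soundness arguments are correct. The genuine gap is in your block-level realization. You partition $C$ into exactly $r$ blocks and ask for sets $T_x\subseteq[r]$ with $|T_x|\le t$ where $t=t(\eps,D)$ is independent of $k$ (and, since $\mu$ must not depend on $r$, also independent of $r$). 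Such a family does not exist, and this is not merely a ``delicate'' union-bound issue: there are at most $(r+1)^t$ subsets of $[r]$ of size at most $t$, so once $k>(r-1)(r+1)^t$ pigeonhole forces some value $T^*$ to occur as $T_x$ for at least $r$ indices $x$; those $r$ indices form an $S$ with $\bigl|\bigcup_{x\in S} T_x\bigr|=|T^*|\le t<(1-\eps)r$ for every $r>t/(1-\eps)$. No sharper concentration or derandomization can rescue a combinatorially impossible object.

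The paper (following \cite{chalermsook2017gap}) avoids this by dropping the block layer and taking $(C_x)_{x\in V(H)}$ to be an $(m,k,\ell,r)$-disperser directly on the clause universe $C$ (\Cref{lem:disperser}): each $C_x\subseteq C$ has $|C_x|=\ell=O(m/(\eps r))$ and any $r$ of the $C_x$'s cover $(1-\eps)m$ clauses. The existence condition is $m\ge C_1 r\ln k$, which is where the $\log k$ you were worried about lands—harmlessly in the \emph{universe size} $m=\Theta(n)$, not in $\ell$. With $\ell=O(m/(\eps r))$, \Cref{obs:3sat-to-2csp-size-time} gives $|\Pi|\le 2^{O(m/(\eps r))}\cdot k\le 2^{\mu n/r}$ as required. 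If you prefer to keep a block picture, use $\Theta(r\log k)$ blocks rather than $r$ and allow $|T_x|=O(\log k/\eps)$; the two $\log k$ factors cancel and you recover $|C_x|=O(m/(\eps r))$.
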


Before we prove \Cref{thm:SATtoCSP-inherent-enumerate}, we note that it easily implies \Cref{thm:inherent-enumerate}. 

\begin{proof}[Proof of \Cref{thm:inherent-enumerate}]
Let $\eps, \delta$ be the constants from \Cref{GapETH} and $\mu, r_0$ be as in \Cref{thm:SATtoCSP-inherent-enumerate}. We let $\zeta = \delta / \mu$. Suppose for the sake of contradiction that there is an algorithm $\cA$ with guarantees as in \Cref{thm:inherent-enumerate}. We use this to solve the (gap version of) 3-SAT as follows: On input $\phi$ with $n$ variables and $m$ clauses, runs the reduction from \Cref{thm:SATtoCSP-inherent-enumerate} to get an output $\Pi$, and then runs algorithm $\cA$ on $\Pi$. For any sufficiently large $n$, our choice of parameters ensure that it runs in $O(2^{\delta n})$ time. Furthermore, \Cref{thm:SATtoCSP-inherent-enumerate} ensures that the algorithm can distinguish $\SAT(\phi) = m$ from $\SAT(\phi) < (1 - \eps)m$. This contradicts Gap-ETH.
\end{proof}

To describe the reduction of \cite{chalermsook2017gap}, we need the notion of a \emph{disperser}. An \emph{$(m, k, \ell, r)$-disperser} (w.r.t. universe $U$ of size $m$) is a collection of $\ell$-size subsets $I_1, \dots, I_k \subseteq m$ such that, for any distinct $i_1, \dots, i_r \in [k]$, we have 
$\left|I_{i_1} \cup \cdots \cup I_{i_r}\right| \geq (1 - \eps) m$.

\begin{lemma}[\cite{chalermsook2017gap}] \label{lem:disperser}
There exist constants $C_1, C_2 > 0$ such that the following holds: For any $m \geq C_1 r \ln k$, a $(m, k, \ell, r)$-disperser exist for $\ell \leq \lceil 3m / (\eps r) \rceil$. Moreover, such a disperser can be computed in time $(2^{rk \log k} + m)^{C_2}$.
\end{lemma}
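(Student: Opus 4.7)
The plan is to establish existence of the disperser by the probabilistic method, and then obtain the construction by brute-force enumeration over a suitably compressed sample space. For existence, I would sample $I_1,\dots,I_k$ independently and uniformly at random from the $\ell$-element subsets of $[m]$. Fix any $r$-tuple $T=\{i_1,\dots,i_r\}\subseteq[k]$ and any $u\in[m]$: the probability that $u$ is missed by all of $I_{i_1},\dots,I_{i_r}$ equals $\left(1-\ell/m\right)^r \le \exp(-\ell r/m)$. With $\ell=\lceil 3m/(\eps r)\rceil$, this is at most $e^{-3/\eps}\le \eps/4$ for every $\eps\in(0,1)$. Letting $X_T:=|[m]\setminus \bigcup_{j}I_{i_j}|$, linearity of expectation gives $\E[X_T]\le (\eps/4)m$.

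The next step is concentration plus a union bound. Since for each fixed $u$ the events $\{u\notin I_i\}$ are independent across $i$ (as the subsets $I_i$ are sampled independently), and since for a single $i$ the events $\{u\in I_i\}_{u\in[m]}$ are negatively correlated under uniform $\ell$-subset sampling, the indicator variables $Y_u:=\mathbf{1}[u\notin\bigcup_{j}I_{i_j}]$ are negatively associated across $u$. A standard Chernoff bound for negatively associated variables (or a direct coupling with independent Bernoulli sampling, absorbing a constant factor into $\ell$) gives
\[
\Pr\bigl[X_T \ge \eps m\bigr] \le \exp(-c \eps m)
\]
for an absolute constant $c>0$. Taking a union bound over the $\binom{k}{r}\le k^{r}$ choices of $T$, the failure probability is at most $k^{r}\exp(-c\eps m)$, which is strictly less than $1$ whenever $m\ge C_1 r\ln k$ for a suitable constant $C_1=C_1(\eps)$. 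This proves existence.

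For the construction, I would enumerate candidate dispersers over a search space of size $2^{O(rk\log k)}$ and check each one explicitly. Concretely, one can describe the $k$ subsets using $k\cdot O(r\log k)$ bits by restricting to a structured/pseudorandom family (for instance a $k^{r}$-wise independent distribution over the characteristic vectors, or equivalently enumerating over short seeds that preserve the above Chernoff-union-bound argument). Each candidate is then verified in time $\mathrm{poly}(m)$ by computing $|\bigcup_{i\in T}I_i|$ for every one of the $\binom{k}{r}$ subsets $T$. The total running time is $(2^{rk\log k}+m)^{C_2}$ for an absolute constant $C_2$. The main technical obstacle here is not the probabilistic argument (which is standard) but choosing the restricted sample space carefully enough that the negative-association/independence needed for the concentration step still goes through when derandomized—the easiest route is likely to enumerate subsets of a compressed universe of size $O(r\log k)$, run the probabilistic argument there, and then lift to $[m]$ by a simple padding/replication; this avoids requiring any sophisticated pseudorandomness.
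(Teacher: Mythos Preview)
The paper does not prove this lemma; it is quoted directly from \cite{chalermsook2017gap} without proof. Your existence argument via the probabilistic method is correct and is the standard one: random $\ell$-subsets, expected uncovered mass at most $(\eps/4)m$, Chernoff for the (negatively associated, or independently Bernoulli-sampled) indicators, and a union bound over the $\binom{k}{r}\le k^r$ test sets.

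For the constructive part, your first suggestion (a ``$k^r$-wise independent'' family with $O(kr\log k)$-bit seeds) does not work as stated: a family supported on $\{0,1\}^{km}$ with that much independence needs seed length depending on $m$, not only on $r$ and $k$. Your second suggestion is the right one and is what actually yields the claimed running time: run the entire probabilistic argument over a universe of size $m'=\Theta(r\ln k)$ (allowed by the hypothesis $m\ge C_1 r\ln k$), brute-force over all $(2^{m'})^k=2^{O(kr\log k)}$ candidate collections and all $\binom{k}{r}\le 2^{r\log k}$ test sets, and then lift to $[m]$ by replacing each point of $[m']$ with a block of $\lfloor m/m'\rfloor$ or $\lceil m/m'\rceil$ points of $[m]$. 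Replication preserves the covered fraction and scales $\ell$ by the block size, so $\ell\le\lceil 3m/(\eps r)\rceil$ survives up to constants. One small slip: verifying a single candidate takes $\binom{k}{r}\cdot\mathrm{poly}(m')$ time, not $\mathrm{poly}(m)$, but this is still dominated by $(2^{rk\log k}+m)^{C_2}$.
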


We are now ready to prove \Cref{thm:SATtoCSP-inherent-enumerate}.

\begin{proof}[Proof of \Cref{thm:SATtoCSP-inherent-enumerate}]
We use \Cref{red:3sat-2csp} with $H, \cC = (C_x)_{x \in V(H)}$ that are chosen as follows:
\begin{itemize}
\item Let $H$ be the complete graph on $k$ vertices. 
\item Let $(C_x)_{x \in V(H)}$ be an $(m, k, \ell, r)$-disperser (w.r.t. universe $C$) computed using \Cref{lem:disperser}.
\item Let $\Pi = \left(H,\Sigma,\Upsilon, \{\pi_{e, x}, \pi_{e, y}\}_{e = (x,y)\in E(H)}\right)$ be the output instance from \Cref{red:3sat-2csp} with the above choices.
\end{itemize}
Note that since $\phi$ is an instance of 3-SAT($D$), we have $n/3 \leq m \leq D \cdot n$.
The completeness of the reduction follows immediately from~\Cref{obs:3sat-to-2csp-completeness}, as does the size from \Cref{obs:3sat-to-2csp-size-time}. The runtime follows from \Cref{obs:3sat-to-2csp-size-time} and \Cref{lem:disperser} (assuming that $n$ is sufficiently larger than $k$). 

Finally, we will prove the soundness. Suppose contrapositively that $\MaxPar(\Pi) \geq r$. Let $\varphi: V(H) \to \Sigma \cup \{\perp\}$ denote the consistent partial assignment such that $|\varphi| = \MaxPar(\Pi) \geq r$.

Let $V(H)_{\assigned} := \{x \in V(H) \mid \varphi(x) \ne \perp\}$ and\footnote{Note that this is different from $C_{\assigned}$ defined in the proof of \Cref{thm:SATtoCSP}. In particular, $C^{\relaxed}_{\assigned}$ contains all clauses $c$ that belongs to $C_x$ for \emph{some} $x \in V(H)_{\assigned}$. Meanwhile $C_{\assigned}$ contains only the clauses $c$ where $x \in V(H)_{\assigned}$ for \emph{all} $x \in V(H)$ such that $c \in C_x$.} $C^{\relaxed}_{\assigned} := \bigcup_{x \in V(H)_{\assigned}} C_x$.
We then define an assignment $s: V \to \{0, 1\}$ by assigning each $s(v)$ as follows:
\begin{itemize}
\item If there exists $c \in C^{\relaxed}_{\assigned}$ such that $v \in \var(c)$, then pick one such $c_v$ and $x_v \in (\psi(c_v) \cap V(H)_{\assigned})$ (arbitrarily) and let\footnote{Recall that $\varphi(x_v)$ can be viewed as a partial assignment on $\var(C_{x_v})$} $s(v) = (\varphi(x_v))(v)$.
\item Otherwise, assign $s(v)$ arbitrarily.
\end{itemize}

%We claim that $s$ satisfies all clauses in $C_{\assigned}$. To see that this is the case, 
Consider any clause $c \in C^{\relaxed}_{\assigned}$ and let $v_1, v_2, v_3$ denote its variable. Pick any $x \in \psi(c) \cap V(H)_{\assigned}$ arbitrarily; by definition of the alphabet of $x$, we have that $\varphi(x)$ satisfies $c$. We claim that $(\varphi(x))(v_j) = s(v_j)$ for all $j \in [3]$. This is true because, by our definition of $s$, we have $s(v_j) = (\varphi(x_{v_j}))(v_j)$. Meanwhile, since $H$ is a complete graph and $\varphi$ is consistent, we must have $(\varphi(x))(v_j) = (\varphi(x_{v_j}))(v_j) = s(v_j)$. Thus, the claim holds. This implies that the clause $c$ is satisfied by $s$. As a result, we have
\begin{align*}
\SAT(\phi) \geq |C^{\relaxed}_{\assigned}|
&= \left|\bigcup_{x \in V(H)_{\assigned}} C_x\right| \geq (1 - \eps) \cdot m,
\end{align*}
where the last inequality follows from $|V(H)_{\assigned}| \geq r$ and that $(C_x)_{x \in V(H)}$ is an $(m, k, \ell, r)$-disperser. This implies that the soundness holds, which completes our proof.
\end{proof}

\end{document}